\definecolor{armygreen}{rgb}{0.29, 0.33, 0.13}
\definecolor{darkgreen}{rgb}{0.0, 0.5, 0.0}
\newcommand{\BE}{\begin{enumerate}} \newcommand{\EE}{\end{enumerate}}
\newcommand{\BI}{\begin{itemize}} \newcommand{\EI}{\end{itemize}}
\newcommand{\BDes}{\begin{description}}\newcommand{\EDes}{\end{description}}
\newtheorem{alg}{Algorithm}
\newcommand{\BA}{\begin{alg}} \newcommand{\EA}{\end{alg}}
\newcommand{\BEQ}{\begin{equation}} \newcommand{\EEQ}{\end{equation}}
\newcommand{\BEQN}{\begin{eqnarray}}\newcommand{\EEQN}{\end{eqnarray}}
\newcommand{\bitset}{\{0,1\}}
\newcommand{\set}[1]{\left\{ #1 \right\}}
\newcommand{\todo}[1]{\iffalse {#1} \fi }
\numberwithin{equation}{section}
\numberwithin{figure}{section}
\newtheorem{theorem}{Theorem}
\newtheorem{lemma}{Lemma}
\newtheorem{claim}[lemma]{Claim}
\newtheorem{observation}[lemma]{Observation}
\newtheorem{corollary}[theorem]{Corollary}
\newtheorem{definition}{Definition}
\theoremstyle{definition}
\newcommand{\setdef}[1]{%
	\phantomsection
	#1\def\@currentlabel{\unexpanded{#1}}%
}
\let\ORIGINAL@spythm\@spythm
\def\@spythm#1#2#3#4[#5]{%
	\NR@gettitle{#5}%
	\ORIGINAL@spythm{#1}{#2}{#3}{#4}[#5]%
}
\newcommand{\maj}{{\textrm{MAJ}}}
\newcommand{\cycnums}[1]{{\mathbb{Z}_{#1}}}
\newcommand{\parentheses}[1]{\left(#1\right)}
\title{The Structure of Configurations in One-Dimensional Majority Cellular Automata: \\
	From Cell Stability to Configuration Periodicity}
\author{
	Yonatan Nakar \\
	\small Tel-Aviv University \\
	\small yonatannakar@mail.tau.ac.il
	\and				
	Dana Ron \\
	\small Tel-Aviv University \\
	\small danaron@tau.ac.il
}
\date{}
\begin{document}

\maketitle

\begin{abstract}
We study the dynamics of (synchronous) one-dimensional cellular automata with cyclical boundary conditions that evolve according to the majority rule with radius $ r $.
We introduce a notion that we term \emph{cell stability} with which we express the structure of the possible configurations that could emerge in this setting.
Our main finding is that apart from the configurations of the form $ (0^{r+1}0^* + 1^{r+1}1^*)^* $, which are always fixed-points, the other configurations that the automata could possibly converge to, which are known to be either fixed-points or 2-cycles, have a particular spatially periodic structure.
Namely, each of these configurations is of the form $ s^* $ where $ s $ consists of $ O(r^2) $ consecutive sequences of cells with the same state, each such sequence is of length at most $ r $, and the total length of $ s $ is $ O(r^2) $ as well.
We show that an analogous result also holds for the minority rule.
\end{abstract}

\newpage

\tableofcontents{}

\newpage

\section{Introduction}\label{sec:intro}
Dynamic processes that evolve according to the majority rule arise in various settings and as such have received wide attention in the past, primarily within the context of propagation of information or influence (e.g., \cite{goles2013neural,peleg2002majority,zehmakan2019spread}).
Here we consider perhaps the most basic case, that of one-dimensional cellular automata, where our focus is on analyzing the structure of the configuration space.
Specifically, we analyze the configuration space of one-dimensional cellular automata with cyclical boundary conditions that evolve according to the majority rule with radius $ r $.

It is well-known~\cite{goles1981_period_2,poljak1983period_2} that these processes always converge to configurations that correspond to cycles either of length 1 (\nameref{def:fixed_point}s) or of length 2 (period-2 cycles).
In particular, it is easy to verify (see, e.g., \cite{tosic2004characterizing}) that configurations in which each \ref*{def:cell} belongs to a consecutive sequence of at least $ r+1 $ \ref*{def:cell}s with the same \setdef{state}\label{def:state}\footnote{In this work, a \ref*{def:state} is a value in $ \bitset $.} are \nameref{def:fixed_point}s.
Not much is currently understood, however, about the structure of the other \nameref{def:fixed_point} configurations or of configurations that correspond to cycles of length 2.

The reason for this gap in understanding is largely due to the fact that most previous research has made assumptions about the mechanism producing the initial configuration.
Namely, it is usually assumed that the \ref*{def:state} of each \ref*{def:cell} in the initial configuration is randomly chosen, independently from the other \ref*{def:cell}s.
See, for instance, the theoretical analysis in \cite{tosic2004characterizing} and the experimental results in \cite{tosic2011convergence_properties_1_dimensional_majority}, both for one-dimensional majority cellular automata (and also the references within \Cref{sec:related_work} for examples in other models).
Under such assumptions, as shown in \cite{tosic2004characterizing}, these other configurations are indeed rarely encountered.

In this work, we tackle the problem of understanding the structure of the possible configurations without making assumptions about the mechanism behind the generation of the initial configuration.
One of our main results (stated formally in \Cref{thm:characterization}) is that all period-2 configurations and all \nameref{def:fixed_point} configurations (other than those mentioned above) have a very special structure.
Specifically, they have a ``spatially'' periodic structure with a period that is quadratic in the radius $ r $.
In the course of the proof of this result, we introduce several notions and prove several claims, which we believe are of interest in their own right as they shed light on the dynamics of the majority rule in cellular automata (and not only on the configurations they converge to).

\subsection{The majority rule with radius \texorpdfstring{$ r $}{r}}\label{sec:majority_rule}

In all that follows, when performing operations on \setdef{cell}s\label{def:cell} $ i \in \cycnums{n} $, these operations are modulo $ n $.

\begin{definition}[cell interval]\label{def:cell_interval}
	For a pair of \ref*{def:cell}s $ i,j \in \cycnums{n} $ we use $ [i,j] $ to denote the sequence $ i,i+1,\dots,j $ (so that it is possible that $ j < i $), which we refer to as a \textsf{\nameref{def:cell_interval}}.
\end{definition}

For an integer $ n $, we refer to a function $ \sigma : \cycnums{n} \to \bitset $ as a \emph{configuration} and view $ \sigma $ as a (cyclic) binary string of length $ n $.

\begin{definition}[neighborhood]\label{def:neighborhood}
	For a \ref*{def:cell} $ i\in \cycnums{n} $ and an integer $ r $, the $ r $-\textsf{\nameref{def:neighborhood}} of $ i $, denoted $ \Gamma_r(i) $, is the \nameref{def:cell_interval} $ [i-r,i+r] $.
	For a set of \ref*{def:cell}s $ I \subseteq \cycnums{n} $, we let $ \Gamma_r(I) $ denote the set of \ref*{def:cell}s in the union of \nameref{def:cell_interval}s $ [i-r,i+r] $ taken over all $ i\in I $.
\end{definition}

Given a \ref*{def:state} $ \beta \in \set{0, 1} $, a configuration $ \sigma : \cycnums{n} \to \bitset $ and a \nameref{def:cell_interval} $ [i,j] $, we denote by $ \#_\beta(\sigma[i,j]) $ the number of \ref*{def:cell}s $ \ell \in [i,j] $ such that $ \sigma(\ell)=\beta $.

\begin{definition}[the majority rule]\label{def:majority_rule}
	Denote by $ \maj_r $ \textsf{\nameref*{def:majority_rule} with radius $ r $}.
	That is, for a configuration $ \sigma : \cycnums{n} \to \bitset $, $ \maj_r(\sigma) $ is the configuration $ \sigma' $ in which for each \ref*{def:cell} $ i \in \cycnums{n} $,
	\begin{align*}
		\sigma'(i) =
		\begin{cases}
			0 & \text{if } \#_0(\sigma[\Gamma_r(i)]) > \#_1(\sigma[\Gamma_r(i)]) \\
			1 & \text{otherwise}
		\end{cases}
	\end{align*}
\end{definition}

For each $ t \ge 0 $, denote by $ \maj_r^t(\sigma) $ the result of repeatedly applying the majority rule with radius $ r $, starting from the configuration $ \sigma $.
In particular, $ \maj_r^0(\sigma) = \sigma $ and $ \maj_r^1(\sigma) = \maj_r(\sigma) $.

\subsection{Temporal and spatial periodicity}
Eventually, for every initial configuration, the majority rule, and, in fact, any rule, reaches a \emph{cycle}: a periodic sequence of configurations.
As mentioned earlier, in the case of the majority rule, that cycle is always either a \emph{\nameref{def:2_cycle}} or a \emph{\nameref{def:fixed_point}}.

\begin{definition}[fixed-point]\label{def:fixed_point}
	We say that a configuration $ \sigma : \cycnums{n} \to \bitset $ is a \textsf{\nameref{def:fixed_point}} if $ \maj_r(\sigma) = \sigma $.
\end{definition}

\begin{definition}[2-cycle]\label{def:2_cycle}
	We say that a pair of distinct configurations $ \sigma, \sigma' : \cycnums{n} \to \bitset $ is a \textsf{\nameref{def:2_cycle}} if $ \maj_r(\sigma)=\sigma' $ and $ \maj_r(\sigma')=\sigma $.
\end{definition}

We refer to the \emph{configurations} that constitute a cycle as \nameref{def:temporally_periodic} configurations.
That is,

\begin{definition}[temporally periodic]\label{def:temporally_periodic}
	We say that a configuration $ \sigma : \cycnums{n} \to \bitset $ is \textsf{\nameref{def:temporally_periodic}} if $ \maj_r^2(\sigma) = \sigma $.
\end{definition}

Note that if a configuration $ \sigma $ is \nameref{def:temporally_periodic}, then it is either the case that $ \maj_r(\sigma)=\sigma $ (i.e., $ \sigma $ is a \nameref{def:fixed_point}), or $ \maj_r(\sigma)=\sigma' $ for $ \sigma' \ne \sigma $, in which case $ \sigma $ and $ \sigma' $ constitute a \nameref{def:2_cycle}.

\begin{definition}[transient]\label{def:transient}
	If a configuration $ \sigma : \cycnums{n} \to \bitset $ is not \nameref{def:temporally_periodic}, we say that $ \sigma $ is \textsf{\nameref{def:transient}}.
\end{definition}

\nameCrefs{def:2_cycle} \ref{def:fixed_point}-\ref{def:transient} are all related to the notion of \emph{temporal} periodicity, i.e., periodicity that occurs over time.
In this paper, we relate temporal periodicity to \emph{spatial} periodicity, i.e., periodic behavior exhibited within individual configurations.
Formally,

\begin{definition}[spatial period]\label{def:spatial_period}
	We say that a configuration $ \sigma : \cycnums{n} \to \bitset $ has \textsf{\nameref{def:spatial_period}} $ p $ if $ p $ is the minimum positive integer such that for every \ref*{def:cell} $ i \in \cycnums{n} $,
	$ \sigma(i+p) = \sigma(i) . $
\end{definition}

\begin{definition}[spatially periodic]\label{def:spatially_periodic}
	We say that a configuration $ \sigma : \cycnums{n} \to \bitset $ is \textsf{\nameref{def:spatially_periodic}} if its \nameref{def:spatial_period} $ p $ satisfies $ p < n $.
\end{definition}

\subsection{Our main result and the notion of \ref*{def:cell} stability}\label{sec:result}
In this section we state our main result, \Cref{thm:characterization}.
In order to state \Cref{thm:characterization}, we introduce the notion of a \ref{def:cell}'s stability within a configuration via \nameCrefs{def:unstable}~\ref{def:unstable}-\ref{def:weakly_stable} (illustrated in \Cref{fig:majority_annotated}).

\begin{definition}[unstable]\label{def:unstable}
	We say that a \ref*{def:cell} $ i \in \cycnums{n} $ is \textsf{\nameref{def:unstable}} with respect to a configuration $ \sigma : \cycnums{n} \to \bitset $ if $ \sigma(i) \ne \sigma''(i) $ where $ \sigma'' = \maj_r^2(\sigma) $.
\end{definition}

Recall that after a finite number of steps\footnote{Which is shown in \Cref{sec:temporally_periodic_configurations} to be at most linear in $ n $.}, a one-dimensional cellular automaton that evolves according to the majority rule, reaches either a \nameref{def:fixed_point} or a \nameref{def:2_cycle}.
Thus, a configuration $ \sigma : \cycnums{n} \to \bitset $ is \nameref{def:transient} if and only if it contains \nameref{def:unstable} \ref*{def:cell}s.

As for the ``stable'' \ref*{def:cell}s, we define two variants: \nameref{def:strongly_stable} and \nameref{def:weakly_stable}.

\begin{definition}[strongly stable]\label{def:strongly_stable}
	We say that a \ref*{def:cell} $ i \in \cycnums{n} $ is \textsf{\nameref{def:strongly_stable}} with respect to a configuration $ \sigma : \cycnums{n} \to \bitset $ if there exists a \nameref{def:cell_interval} $ [a,b] $ of length at least $ r+1 $ such that $ i \in [a,b] $ and for each $ j \in [a,b] $, $ \sigma(i) = \sigma(j) $.
\end{definition}

\begin{definition}[weakly stable]\label{def:weakly_stable}
	We say that a \ref*{def:cell} $ i \in \cycnums{n} $ is \textsf{\nameref{def:weakly_stable}} with respect to a configuration $ \sigma : \cycnums{n} \to \bitset $ if $ i $ is not \nameref{def:strongly_stable} with respect to $ \sigma $, but $ \sigma(i) = \sigma''(i) $ where $ \sigma'' = \maj_r^2(\sigma) $.
\end{definition}

\begin{figure}[htb!]
	\begin{framed}
		\centerline{\mbox{\includegraphics[width=1\textwidth]{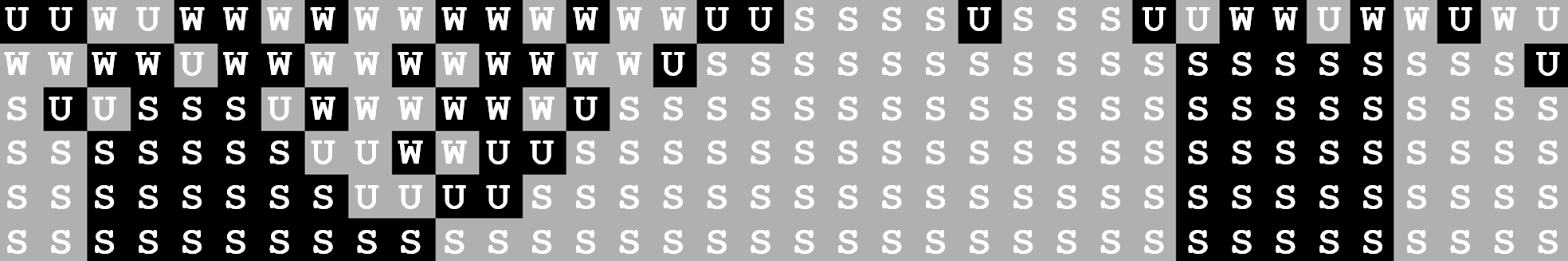}}}
		\caption{
			\small
			The evolution under the majority rule with $ r=2 $.
			Gray squares correspond to \ref*{def:state}-0 \ref*{def:cell}s and dark squares correspond to \ref*{def:state}-1 \ref*{def:cell}s.
			Each \ref*{def:cell} is labeled by a letter indicating the \ref*{def:cell}'s \emph{stability}, where \emph{S} stands for \emph{Strongly} stable, \emph{W} for \emph{Weakly} stable and \emph{U} for \emph{Unstable}.
		}
		\label{fig:majority_annotated}
	\end{framed}
\end{figure}

The crucial property of the \nameref{def:strongly_stable} \ref*{def:cell}s is that their \ref*{def:state}s, unlike the \ref*{def:state}s of the \nameref{def:weakly_stable} \ref*{def:cell}s, cannot change in later configurations.
In that sense, their stability is ``stronger'' than that of the \nameref{def:weakly_stable} \ref*{def:cell}s.
It is worth noting, though, that if a \ref*{def:cell} lies within a long \nameref{def:cell_interval} of \nameref{def:weakly_stable} \ref*{def:cell}s, then that \ref*{def:cell} remains \nameref{def:weakly_stable}, alternating between the same pair of \ref*{def:state}s, for a number of steps that depends on the \nameref{def:cell_interval} length.

Accordingly, given a configuration $ \sigma : \cycnums{n} \to \bitset $, we say that a \nameref{def:cell_interval} $ [i,j] $ is \nameref{def:strongly_stable}, \nameref{def:weakly_stable} or \nameref{def:unstable} if all the \ref*{def:cell}s in that \nameref{def:cell_interval} are, respectively, \nameref{def:strongly_stable}, \nameref{def:weakly_stable} or \nameref{def:unstable}.

Considering complete configurations, observe that all the configurations of the form $ (0^{r+1}0^* + 1^{r+1}1^*)^* $ contain only \nameref{def:strongly_stable} \ref*{def:cell}s.
As noted previously and explained in the characterization provided in \cite{tosic2004characterizing}, these configurations are always \nameref{def:fixed_point}s, which means that they are, in particular, also \nameref{def:temporally_periodic} (with a period of 1).
However, there are more forms of \nameref{def:temporally_periodic} configurations, both period-1 and period-2, that contain only \nameref{def:weakly_stable} \ref*{def:cell}s and are not addressed by \cite{tosic2004characterizing}'s characterization, as the authors of \cite{tosic2004characterizing} were only interested in ``typical'' configurations, which are not of that kind.\footnote{
	Indeed, it is shown in \cite{tosic2004characterizing} that the probability that a randomly selected configuration of length $ n $ being \nameref{def:transient} approaches 1 as $ n \longrightarrow \infty $. As such, the additional \nameref{def:temporally_periodic} configurations that we address in this work are, in a sense, not ``typical''.
	We, in contrast to \cite{tosic2004characterizing}, make no assumption about the distribution of the configuration space, and are therefore interested in understanding the structure of \emph{all} configurations, not only the ``typical'' ones.
}

\Cref{thm:characterization} complements \cite{tosic2004characterizing}'s characterization by additionally specifying the structure of the remaining \nameref{def:temporally_periodic} configurations.
In addition to \nameref{def:temporally_periodic} configurations, \Cref{thm:characterization} also includes a property of the \nameref{def:transient} configurations that is related to the dynamics by which they eventually converge.

\begin{theorem}\label{thm:characterization}
	For any configuration $ \sigma : \cycnums{n} \to \bitset $, exactly one of the following must hold:
	\begin{enumerate}
		
		\item The configuration $ \sigma $ is a \nameref{def:temporally_periodic} configuration and it is either the case that:
		\begin{enumerate}
			\item\label{case:strongly_stable} all the \ref*{def:cell}s in $ \sigma $ are \nameref{def:strongly_stable}, in which case $ \sigma $ is of the form $ (0^{r+1}0^* + 1^{r+1}1^*)^* $), or
			\item\label{case:weakly_stable} all the \ref*{def:cell}s in $ \sigma $ are \nameref{def:weakly_stable}, in which case $ \sigma $ is \nameref{def:spatially_periodic} with \nameref{def:spatial_period} at most $ 2r(r+1) $.
		\end{enumerate}
		
		\item\label{case:unstable} The configuration $ \sigma $ is a \nameref{def:transient} configuration and the length of every \nameref{def:unstable} \nameref{def:cell_interval} in $ \sigma $ is at most $ 2r $.
	\end{enumerate}
\end{theorem}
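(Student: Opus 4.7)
By definition, the strongly stable, weakly stable, and unstable cells partition $\cycnums{n}$, and $\sigma$ is temporally periodic iff it contains no unstable cell. Hence case~\ref{case:unstable} applies exactly when $\sigma$ is transient, and case~1 applies exactly when every cell is stable; the remainder of the argument proves the structural parts within each case. The heart of case~1 is a \emph{propagation lemma} for temporally periodic configurations: if $[a,b]$ is a run of length $\geq r+1$ of some state $s$, then the adjacent run on the right also has length $\geq r+1$. I would prove this by first observing that, because every cell in $[a,b]$ has $\geq r+1$ neighbors of its own state, $\sigma'=\maj_r(\sigma)$ agrees with $\sigma$ on $[a,b]$. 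Then for each $k\in[1,r+1]$ the window $\Gamma_r(b+k)$ already contains $r-k+1$ cells of state $s$ fixed by the strongly stable run, and requiring $\maj_r^2(\sigma)(b+k)=\sigma(b+k)$ forces, by induction on $k$, that $\sigma'(b+1)=\dots=\sigma'(b+r+1)=1-s$, and then $\sigma(b+1)=\dots=\sigma(b+r+1)=1-s$. Iterating around the cycle yields case~\ref{case:strongly_stable}, and contrapositively the existence of one weakly stable cell rules out any strongly stable cell, making case~\ref{case:weakly_stable}'s ``all runs have length $\leq r$'' automatic.

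For case~\ref{case:weakly_stable}, I view $\sigma$ as a cyclic sequence of runs $R_1,R_2,\dots$ of alternating states and lengths in $[1,r]$; the claim reduces to the statement that this run-level sequence has period at most $2(r+1)$, which converts to a cell-level spatial period of at most $2(r+1)\cdot r = 2r(r+1)$. I would establish this by showing that in a temporally periodic weakly stable configuration, any $r+1$ consecutive runs determine the next run uniquely: the condition $\maj_r^2(\sigma)(i)=\sigma(i)$ at the cells of the next run pins down both its state (automatic from alternation) and its length, via the majority computations in overlapping windows of radius $r$. Combined with the analogous backward determinism, this yields an invertible shift on $(r+1)$-run windows, and cyclicity forces the orbit to close within $2(r+1)$ runs.

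For case~\ref{case:unstable}, I would argue by contradiction: assume some unstable interval $[a,b]$ has length greater than $2r$ and pick a cell $i$ with $\Gamma_r(i)\subseteq [a,b]$, so that $\maj_r^2(\sigma)$ is the bitwise complement of $\sigma$ on $\Gamma_r(i)$. Invariance of $\maj_r$ under simultaneous bit-flipping then gives $\maj_r^3(\sigma)(i)=1-\maj_r(\sigma)(i)$, and propagating this ``bulk'' symmetry forward together with the known convergence of the majority dynamics to a $1$- or $2$-cycle yields a contradiction with the persistence of instability at $i$. The main obstacle I anticipate is the tight $2(r+1)$ bound on the run-period in case~\ref{case:weakly_stable}: a naive pigeonhole on $(r+1)$-tuples of run lengths is exponential, so the argument must exploit the additional structural constraints imposed by weak stability and temporal periodicity in order to show that only a short invertible orbit of run-windows actually occurs; case~\ref{case:unstable} will likewise require careful bookkeeping at the boundary between the stable exterior and the unstable bulk.
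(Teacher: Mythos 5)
Your partition of the cases and your treatment of Case~(\ref{case:strongly_stable}) are fine (your ``propagation lemma'' is essentially \Cref{claim:temporally_periodic_blocks_are_short} of the paper and a direct argument of that kind works), but the two quantitative parts of the theorem both have genuine gaps. For Case~(\ref{case:weakly_stable}), your plan reduces everything to the assertion that the run-level sequence of a weakly stable temporally periodic configuration has period at most $2(r+1)$ runs, and you do not prove it. The claimed local determinism (``any $r+1$ consecutive runs determine the next run uniquely'') is itself unjustified --- the natural constraints from temporal periodicity couple $\sigma$ with $\maj_r(\sigma)$, not $\sigma$ with itself --- and, more importantly, even granting it, an invertible shift on $(r+1)$-run windows together with cyclicity only says the orbit closes after all $|B(\sigma)|$ runs; it gives no bound better than the trivial one, and pigeonhole on windows is exponential, as you yourself note. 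The step ``cyclicity forces the orbit to close within $2(r+1)$ runs'' is a non sequitur, and you explicitly flag this as the main obstacle without resolving it; but this is precisely the heart of the theorem. For comparison, the paper never establishes a run-period linear in $r$: it introduces the \nameref{def:alignment_mapping}, derives $\vec{v'}_i=\sum_{j=-\delta}^{\delta}(-1)^{j+\delta}\vec{v}_{i+j}$ (\Cref{claim:block_vector_length}), shows the $(\delta+1)$-step difference vectors have period dividing $2\delta$ (\Cref{claim:steps_diff_vectors_are_spatially_periodic}), and concludes by a telescoping-plus-boundedness argument (\Cref{claim:sum_of_diff_vectors,claim:sum_of_diff_vectors_is_constant,claim:spatially_periodic}) that the block-length vector has period at most $2\delta(\delta+1)$, i.e.\ \emph{quadratic} in $r$; the cell bound $2r(r+1)$ then comes from \Cref{claim:length_any_sequence_of_2delta_blocks_at_most_2r} (any $2\delta$ consecutive blocks span at most $2r$ cells), not from multiplying a linear run-period by the maximal run length $r$.

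Case~(\ref{case:unstable}) also does not go through as sketched. From an unstable interval $[a,b]$ with $|[a,b]|\ge 2r+1$ you correctly get $\maj_r^2(\sigma)=\bar{\sigma}$ on $[a,b]$, and since $|\Gamma_r(i)|=2r+1$ is odd the complement-equivariance step $\maj_r^3(\sigma)(i)=1-\maj_r(\sigma)(i)$ is valid for any $i$ with $\Gamma_r(i)\subseteq[a,b]$. But the region on which the complement relation is known shrinks by $r$ on each side at every step, so instability at $i$ is guaranteed only for roughly $(|[a,b]|-2r)/(2r)$ further steps; since the convergence time can be as large as $\Theta(n)$, no contradiction with eventual convergence is obtained for an unstable interval of length, say, exactly $2r+1$. ``Persistence of instability at $i$'' is exactly what fails to be established. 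The paper's route is different and avoids this: it shows that an unstable interval can contain at most one \nameref{def:switch_point}, by applying the \nameref{claim:switch_point_arg} twice (once to the pair $(\sigma',\sigma'')$ and once to $(\sigma,\sigma')$); an unstable interval of length at least $2r+1$ with at most one switch point would contain a homogeneous run of length at least $r+1$, whose cells can never change state (\Cref{claim:final_locations}), contradicting their instability. You would need either to import an argument of that kind or to supply a genuinely new mechanism for propagating instability indefinitely.
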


Under the assumption that $ r $ is a constant, \Cref{thm:characterization} directly yields an output-sensitive algorithm that, given $ n $, generates all the \nameref{def:temporally_periodic} configurations of length $ n $.
The running-time of the algorithm is linear in the number of \nameref{def:temporally_periodic} configurations.

Turning to \nameref{def:transient} configurations, recall that all \nameref{def:transient} configurations contain \nameref{def:unstable} \ref*{def:cell}s, and the evolution of the \nameref{def:transient} configurations can be described using the notion of \ref*{def:cell} stability.
Namely, the following is shown (in \Cref{sec:temporally_periodic_configurations}) regarding any \nameref{def:transient} configuration $ \sigma : \cycnums{n} \to \bitset $.
First, the configuration $ \maj_r(\sigma) $ contains strictly fewer \nameref{def:unstable} \ref*{def:cell}s than $ \sigma $.
Second, if $ \sigma $ contains \nameref{def:strongly_stable} \ref*{def:cell}s, then $ \maj_r(\sigma) $ contains even more \nameref{def:strongly_stable} \ref*{def:cell}s than $ \sigma $, and the automaton eventually converges to a \nameref{def:fixed_point} of the form defined in Case~(\ref{case:strongly_stable}).
Third, if there are no \nameref{def:strongly_stable} \ref*{def:cell}s in $ \sigma $, then there are cases in which the automaton eventually converges to a \nameref{def:fixed_point} of the form defined in Case~(\ref{case:strongly_stable})\footnote{
	e.g., for $ r=3 $, the \nameref{def:transient} configuration $ 001001001001001001 $ converges after one step to the \nameref{def:fixed_point} configuration $ (0)^* $.
}
and there are also cases in which it eventually converges to a \nameref{def:fixed_point} or to a \nameref{def:2_cycle} of the form defined in Case~(\ref{case:weakly_stable})\footnote{
	e.g., for $ r=4 $, the \nameref{def:transient} configuration $ 001011001011001011001011001011001011 $ converges after one step to the \nameref{def:2_cycle} consisting of $ (111000)^6 $ and $ (000111)^6 $.
}.

\subsection{Illustrating \Cref*{thm:characterization} for \texorpdfstring{ $ r = 1, 2, 3 $}{r = 1, 2, 3}}\label{sec:thm_illustration}
To get a feel for the nature of the statement in \Cref{thm:characterization}, we demonstrate some of its aspects for $ r = 1, 2,  3 $.

\begin{enumerate}
	\item For $ r=1 $, the \nameref{def:temporally_periodic} configurations are either
	\begin{enumerate}
		\item of the form $ (000^* + 111^*)^* $, or
		\item of the form $ (01)^* $.\footnote{Also $ (10)^* $, but since the configurations are cyclic, the patterns $ (01)^* $ and $ (10)^* $ correspond to equivalent sets of configurations.}
	\end{enumerate}
	
	\item For $ r=2 $, the \nameref{def:temporally_periodic} configurations are either
	\begin{enumerate}
		\item of the form $ (0000^* + 1111^*)^* $, or
		\item of one of the following forms: $ (01)^* $, $ (0011)^* $, $ (001101)^* $, $ (001011)^* $.
	\end{enumerate}
	
	\item For $ r=3 $, the \nameref{def:temporally_periodic} configurations are either
	\begin{enumerate}
		\item of the form $ (00000^* + 11111^*)^* $, or
		\item of the form $ s^* $, where $ s $ belongs to the set:\footnote{
			The string $ s $ could also be the \emph{mirror} or the \emph{complement} of any of the specified patterns, which we omit for the sake of conciseness.
			For example, since we explicitly specified that $ s $ could be $ 010011 $, it means that $ s $ could also be $ 110010 $ (which is the mirror of $ 010011 $) or $ 101100 $ (which is the complement of $ 010011 $), even though these two are not explicitly specified.
		}
	\end{enumerate}
	\begin{align*}
		\left\{
		\begin{array}{lr}
			01, \\
			0011, \\
			010011, 010110, 001110, \\
			01011001, 10100101, 10100110, 01011100, 10010011, 00011101, 10110001, \\
			0011001110, 1000111001
		\end{array}
		\right\}
	\end{align*}
\end{enumerate}

\subsection{Minority}\label{sec:minority}
An analog of \Cref{thm:characterization} holds for the minority rule as well, with exactly the same variants of \ref{def:cell} stability as those of the majority rule.
In particular, \nameCrefs{def:unstable}~\ref{def:unstable}-\ref{def:weakly_stable} can be used verbatim to describe the evolution according to the minority rule, with the only difference being the \emph{temporal period} of the \nameref{def:weakly_stable} and the \nameref{def:strongly_stable} \ref{def:cell}s.

Namely, in the minority rule, the \nameref{def:strongly_stable} \ref{def:cell}s have a temporal period of 2 instead of 1, implying that the \nameref{def:temporally_periodic} configurations of the form defined in Case~(\ref{case:strongly_stable}) of \Cref{thm:characterization}, rather than being \nameref{def:fixed_point}s as in the majority rule, become the constituents of \nameref{def:2_cycle}s.
Likewise, for every configuration, the \nameref{def:weakly_stable} \ref{def:cell}s that would have had a temporal period of 1 under the majority rule, have a temporal period of 2 under the minority rule, and vice versa.
This implies that the \nameref{def:temporally_periodic} configurations of the form defined in Case~(\ref{case:weakly_stable}) of \Cref{thm:characterization}, while generated by exactly the same patterns, and hence having precisely the same form, have the \emph{opposite} temporal period to that they would have had under the majority rule: the \nameref{def:fixed_point}s become \nameref{def:2_cycle}s and the \nameref{def:2_cycle}s become \nameref{def:fixed_point}s.

See illustrations in \nameCrefs{fig:majority} \ref{fig:majority} and \ref{fig:minority}.

\begin{figure}[htb!]
\begin{framed}
	\begin{framed}
	\centerline{\mbox{\includegraphics[width=1\textwidth]{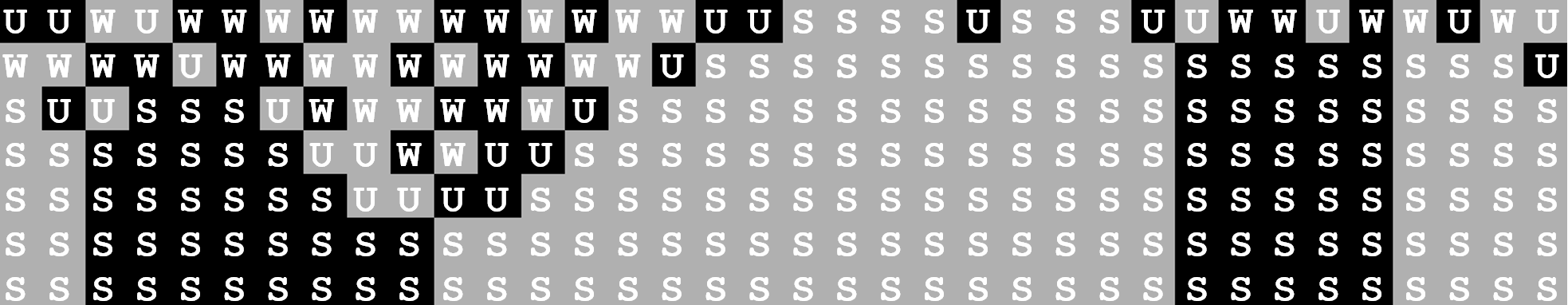}}}
	\caption{\small Majority ($ r=2 $)}
	\label{fig:majority}
\end{framed}
\begin{framed}
	\centerline{\mbox{\includegraphics[width=1\textwidth]{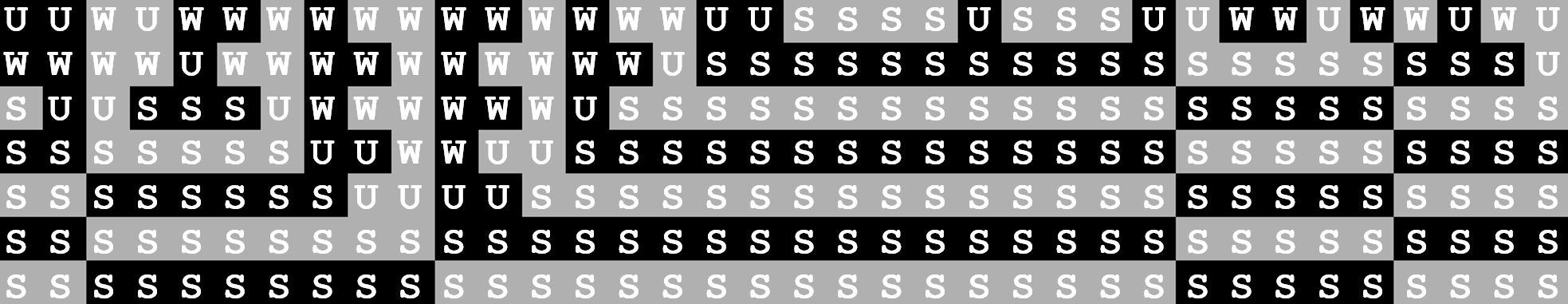}}}
	\caption{\small Minority ($ r=2 $)}
	\label{fig:minority}
\end{framed}
\captionsetup{labelformat=empty}
\caption{
	\small
	\nameCrefs{fig:majority} \ref{fig:majority} and \ref{fig:minority} depict the evolution under the majority rule (\Cref{fig:majority}) and under the minority rule (\Cref{fig:minority}) starting from the same initial configuration.
	Gray squares correspond to \ref*{def:state}-0 \ref{def:cell}s and dark squares correspond to \ref*{def:state}-1 \ref{def:cell}s.
	Each \ref{def:cell} is labeled by a letter indicating the \ref{def:cell}'s \emph{stability}, where \emph{S} stands for \emph{Strongly} stable, \emph{W} for \emph{Weakly} stable and \emph{U} for \emph{Unstable}.
	While the two trajectories appear quite different, they are completely equivalent when viewed under the ``stability mask''.
	In particular, \ref{def:cell} $ i $ at time $ t $ has the same stability label in both figures for every $ t, i $ pair.
	Yet, each pair of corresponding stable \ref{def:cell}s in the two figures have \emph{opposite} periods.
	Thus, the \nameref{def:fixed_point} configuration to which the majority rule in \Cref{fig:majority} converges to corresponds to a \nameref{def:2_cycle} in \Cref{fig:minority}.
}
\end{framed}
\end{figure}

\subsection{Related work}\label{sec:related_work}
The main focus of most of the research on majority/minority (and more generally, threshold) cellular automata so far has been on the convergence time~(e.g., \cite{fogelman1983convergence,frischknecht2013convergence,papp2019stabilization}) and on the dominance problem\footnote{In the dominance problem, one asks how many \ref*{def:cell}s must initially be at a certain \ref*{def:state} so that eventually all \ref*{def:cell}s have the same \ref*{def:state}.} (e.g., \cite{balogh2012sharp,flocchini2003time,mitsche2017strong}).

As mentioned earlier, most of the work on the problem of understanding the structure of the configuration space is based on the assumption that the initial configuration is random.
For the one-dimensional case, the case with which the current paper is concerned, this includes the paper of Tosic and Agha~\cite{tosic2004characterizing}.
In their paper, they distinguish between synchronous/sequential and finite/infinite majority cellular automata with radius $ r $, and our work can be viewed as extending their result for the finite and synchronous case.

They show that whereas \nameref{def:2_cycle}s cannot emerge under the sequential model, in the synchronous model (the one we focus on in this paper), \nameref{def:2_cycle}s exist even for $ r=1 $.
They also show that a randomly picked configuration is a \nameref{def:transient} configuration (and, in particular, not a \nameref{def:2_cycle}) with probability approaching 1 (both for finite and infinite configurations), and it can additionally be shown that the probability that such a random \nameref{def:transient} configuration eventually converges to a \nameref{def:2_cycle} approaches 0.
Finally, they characterize the ``common'' forms of \nameref{def:fixed_point} configurations (those that in our paper are described in Case~(\ref{case:strongly_stable}) of \Cref{thm:characterization}).

Their theoretical result is supplemented by a later experimental work~\cite{tosic2011convergence_properties_1_dimensional_majority}, showing that in practice, convergence to these ``common'' \nameref{def:fixed_point} configurations occurs relatively quickly.
Namely, the simulations in \cite{tosic2011convergence_properties_1_dimensional_majority} demonstrate that convergence tends to occur in less than five steps for $ n=1000 $ and $ 1 \le r \le 5 $.

Additional work beyond the one-dimensional case includes \cite{Zehmakan2021majority_2d_cellular_automata} for two-dimensional \sloppy majority cellular automata, \cite{Zehmakan2018majority} for majority in random regular graphs, \cite{zehmakan2020opinion} for majority in Erdos--R{\'e}nyi graphs as well as expander graphs.

One notable work that does not rely on the assumption that the initial configuration is random is Turau's work~\cite{turau2022trees} on characterizing all the \nameref{def:temporally_periodic} configurations for majority and minority processes on trees.
The characterization presented in \cite{turau2022trees} also yields an output-sensitive algorithm for generating these configurations.

\subsection{Some high-level ideas}\label{sec:high_level_ideas}
As mentioned previously, in proving \Cref{thm:characterization}, we define a number of notions and establish several claims, some of which we believe are valuable in and of themselves.
In this section we have chosen to highlight the high-level idea behind one of the key tools we utilize, which is a \emph{mapping} we introduce between \emph{\ref*{def:block}s} of consecutive configurations.

Given a configuration $ \sigma : \cycnums{n} \to \bitset $, we say that a \nameref{def:cell_interval} $ [i,j] $ is a \emph{\setdef{maximal homogeneous block}\label{def:block}} in $ \sigma $ with value $ \beta \in \bitset $ if for every \ref*{def:cell} $ \ell \in [i,j] $, $ \sigma(\ell)=\beta $, and also $ \sigma(i-1) = \sigma(j+1) \ne \beta $ if the length of $ [i,j] $ is less than $ n $.

We refer to this mapping, defined below (and illustrated in \Cref{fig:alignment}), as the \nameref{def:alignment_mapping}.
The \nameref{def:alignment_mapping}, beyond being essential for the proof of \Cref{thm:characterization}, has several features that make it useful for reasoning about the dynamics of the majority rule, which is why we present its definition here.

\begin{definition}[alignment mapping]
	Let $ \sigma $ and $ \sigma' $ be a pair of configurations satisfying $ \maj_r(\sigma) = \sigma' $.
	Given a \ref*{def:block} $ [i',j'] $ in $ \sigma' $, let $ I $ be the \ref*{def:block} in $ \sigma $ that contains the \ref*{def:cell} $ i+r $ and let $ J $ be the \ref*{def:block} in $ \sigma $ that contains the \ref*{def:cell} $ j-r $.
	\textsf{The \nameref*{def:alignment_mapping}} maps the \ref*{def:block} $ [i',j'] $ (in $ \sigma' $) to the \emph{middle}\footnote{The middle block is well defined, as it is shown in \Cref{sec:left_right_mappings} that the number of \ref*{def:block}s between $ I $ and $ J $ must be odd.} \ref*{def:block} $ [i,j] $ between $ I $ and $ J $ in $ \sigma $.
\end{definition}

\begin{figure}[htb!]
	\begin{framed}
		\centerline{\mbox{\includegraphics[width=1\textwidth]{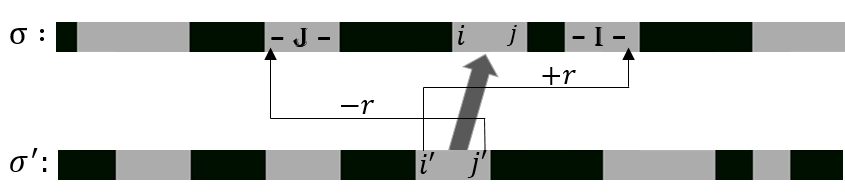}}}
		\caption{
			\small
			\emph{The \nameref*{def:alignment_mapping}.}
			The figure depicts a pair of configurations, $ \sigma $ and $ \sigma' $, where $ \sigma' = \maj_r(\sigma) $, and also a pair of \ref*{def:block}s, $ [i,j] $ in $ \sigma $ and $ [i',j'] $ in $ \sigma' $, where $ [i',j'] $ is mapped to $ [i,j] $ by the \nameref{def:alignment_mapping}.
			The \ref*{def:block} $ I $ in $ \sigma $ is the one that contains the \ref*{def:cell} $ i'+r $, and the \ref*{def:block} $ J $ in $ \sigma $ is the one that contains the \ref*{def:cell} $ j'-r $.
			The \ref*{def:block} $ [i,j] $ in $ \sigma $ is the one right in the middle of the interval of five \ref*{def:block}s in $ \sigma $ whose left and right ends are $ I $ and $ J $.
			Hence, by definition, the \nameref{def:alignment_mapping} maps $ [i',j'] $ to $ [i,j] $.
		}
		\label{fig:alignment}
	\end{framed}
\end{figure}

We stress that the \nameref{def:alignment_mapping}, as defined above (as well as in \Cref{sec:alignment_mapping}), is a \emph{backward} mapping, in the sense that, given a configuration $ \sigma' $, it maps all \ref*{def:block}s in $ \sigma' $ into those of the configuration $ \sigma $ that \emph{precedes} $ \sigma' $.
This naturally suggests defining the notion of \emph{the forward \nameref{def:alignment_mapping}} as the \emph{inverse} function of the backward \nameref{def:alignment_mapping} that would map the \ref*{def:block}s of the configuration $ \sigma $ to those of the configuration $ \sigma' $ that \emph{follows} $ \sigma $ (for example, in \Cref{fig:alignment}, the forward \nameref{def:alignment_mapping} maps $ [i,j] $ in $ \sigma $ to $ [i',j'] $ in $ \sigma' $).

However, while it can be shown that the backward \nameref{def:alignment_mapping} is always one-to-one, it is not necessarily \emph{onto} (unless we apply it within a pair of \nameref{def:temporally_periodic} configurations).
Hence, under our definition of the forward \nameref{def:alignment_mapping}, not all blocks will be mapped forward.

Formally, let $ \sigma_0,...\sigma_m $ be a sequence of configurations where $ \maj_r(\sigma_{t-1}) = \sigma_t $ for each $ 1 \le t \le m $.
We define the step-$ t $ \emph{forward \nameref{def:alignment_mapping}}, denoted $ \varphi_t $, as follows.
Given a \ref*{def:block} $ [i,j] $ in $ \sigma_t $, if there is a \ref*{def:block} $ [i',j'] $ in $ \sigma_{t+1} $ such that the backward \nameref{def:alignment_mapping} between the configuration pair $ \sigma_t, \sigma_{t+1} $ maps $ [i',j'] $ into $ [i,j] $, then $ \varphi_t([i,j]) = [i',j'] $.
Otherwise, $ \varphi_t([i,j]) = \bot $.
In the case in which $ \varphi_t([i,j]) \ne \bot $, we also define $ \varphi_t^2([i,j]) $ as $ \varphi_{t+1}(\varphi_t([i,j])) $.

One notable property of the forward \nameref{def:alignment_mapping} is what we refer to as ``identity preservation in stable intervals''.
Roughly speaking, consider any \ref*{def:block} $ [i,j] $ residing in a sufficiently long \nameref{def:weakly_stable} or \nameref{def:strongly_stable} \nameref{def:cell_interval} of $ \sigma_t $.
Then $ \varphi_t([i,j]) \ne \bot $, and hence $ \varphi_t^2([i,j]) $ is defined and is equal to the same \ref*{def:block} $ [i,j] $ we started with.
In particular, for a pair of configurations comprising a \nameref{def:2_cycle}, applying the forward \nameref{def:alignment_mapping} \emph{twice} essentially maps each \ref*{def:block} to itself.

In the proof of \Cref{thm:characterization}, we essentially use the forward \nameref{def:alignment_mapping} and its properties to show that for a configuration in which all \ref*{def:block}s are of length at most $ r $, if the configuration is \nameref{def:temporally_periodic}, then it is also \nameref{def:spatially_periodic}.
We achieve this through three steps.

In the first step (\nameCrefs{sec:diff_vectors} \ref{sec:block_lengths}-\ref{sec:block_length_vectors}), we employ the \nameref{def:alignment_mapping} to express the length of each of the configuration's \ref*{def:block}s in terms of the lengths of other $ O(r) $ \ref*{def:block}s in the preceding configuration.
Specifically, given a pair of \nameref{def:temporally_periodic} configurations $ \sigma_t $ and $ \sigma_{t+1} $, we obtain a relationship between the length of each \ref*{def:block} $ [i,j] $ in $ \sigma_t $ and the lengths of $ O(r) $ consecutive \ref*{def:block}s, belonging to a block sequence centered at the \ref*{def:block} $ \varphi_t([i,j]) $, in the configuration $ \sigma_{t+1} $.

In the second step (\nameCrefs{sec:diff_vectors} \ref{sec:horizon}-\ref{sec:diff_vectors}), we look at the \emph{difference} between the length of each \ref*{def:block} $ [i,j] $ and the lengths of the \ref*{def:block}s at the two ends of the sequence mentioned above, and define \emph{aligned difference vectors}, whose entries are these differences.
We use the properties of the forward \nameref{def:alignment_mapping} to establish that the aligned difference vectors (defined formally in \Cref{sec:diff_vectors}) are \nameref{def:spatially_periodic} with a \nameref{def:spatial_period} that is \emph{linear} in $ r $.

In the third and final step (\Cref{sec:spatially_periodic}), by applying the relationship between aligned difference vectors iteratively, we use the spatial periodicity of the aligned difference vectors to establish that the configurations themselves are \nameref{def:spatially_periodic} as well, and that each configuration's \nameref{def:spatial_period} must be quadratic in $ r $.

\section{Switch Points}

For a value $ \beta \in \bitset $, we denote $ \bar{\beta} = 1-\beta $.

\begin{definition}[switch point]\label{def:switch_point}
	Let $ \sigma $ be a configuration. We say that a pair of consecutive \ref{def:cell}s $ i,i+1 $ constitute a \nameref{def:switch_point} in $ \sigma $ if $ \sigma(i) \ne \sigma(i+1) $.
\end{definition}

We now prove two properties of \nameref{def:switch_point}s, the first of which we refer to as \nameref{claim:switch_point_arg}.

\begin{claim}[the switch point property]\label{claim:switch_point_arg}
	Let $ \sigma $ and $ \sigma' $ be a pair of configurations satisfying $ \maj_r(\sigma) = \sigma' $.
	If a pair of consecutive \ref{def:cell}s $ i,i+1 $ constitutes a \nameref{def:switch_point} in $ \sigma' $, then 
	$ \sigma(i-r) = \sigma'(i) $ and $ \sigma(i+1+r) = \sigma'(i+1) $.
\end{claim}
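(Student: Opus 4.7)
My plan is to use a simple counting/overlap argument, exploiting the fact that the two relevant neighborhoods $\Gamma_r(i)$ and $\Gamma_r(i{+}1)$ differ in exactly two cells: $\Gamma_r(i)$ contains $i-r$ but not $i+1+r$, while $\Gamma_r(i{+}1)$ contains $i+1+r$ but not $i-r$. All other $2r$ cells are shared.

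First I would fix the case $\sigma'(i)=0$ and $\sigma'(i{+}1)=1$ (the other case is symmetric by swapping roles of 0 and 1). Since $\sigma'(i)=0$, the majority rule gives $\#_0(\sigma[\Gamma_r(i)]) > \#_1(\sigma[\Gamma_r(i)])$, and since the neighborhood has odd size $2r+1$, this forces $\#_0(\sigma[\Gamma_r(i)]) \ge r+1$. Since $\sigma'(i{+}1)=1$, the rule gives $\#_0(\sigma[\Gamma_r(i{+}1)]) \le \#_1(\sigma[\Gamma_r(i{+}1)])$, so $\#_0(\sigma[\Gamma_r(i{+}1)]) \le r$.

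Next I would combine the two bounds. Using the overlap observation above,
\[
\#_0(\sigma[\Gamma_r(i{+}1)]) = \#_0(\sigma[\Gamma_r(i)]) - [\sigma(i-r)=0] + [\sigma(i+1+r)=0].
\]
The correction term lies in $\{-1,0,+1\}$, yet the left side must be at least one smaller than the right side (since the right is $\ge r+1$ and the left is $\le r$). Therefore the correction must equal exactly $-1$, which forces $\sigma(i-r)=0$ and $\sigma(i+1+r)=1$, i.e., $\sigma(i-r)=\sigma'(i)$ and $\sigma(i+1+r)=\sigma'(i+1)$, as required.

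The symmetric case $\sigma'(i)=1,\sigma'(i{+}1)=0$ follows identically with 0 and 1 swapped. There is no serious obstacle here; the only thing to take minor care with is the asymmetry in the majority tie-breaking rule (ties break to $1$), but because $2r+1$ is odd there are in fact no ties, so the bounds $\#_0 \ge r+1$ versus $\#_0 \le r$ are clean and the pigeonhole-style step closes the argument immediately.
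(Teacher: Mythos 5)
Your proof is correct and rests on the same core idea as the paper's: the neighborhoods $\Gamma_r(i)$ and $\Gamma_r(i{+}1)$ share $2r$ cells and differ only in the endpoints $i-r$ and $i+1+r$, so comparing the majority counts forces the endpoint values. The only difference is stylistic -- you argue directly and obtain both conclusions at once, while the paper argues by contradiction for one endpoint and invokes symmetry for the other.
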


\begin{proof}
	We only prove that $ \sigma(i-r) = \sigma'(i) $, since the proof that $ \sigma(i+1+r) = \sigma'(i+1) $ is symmetric.
	
	Suppose $ \sigma'(i+1) = \beta $ for some $ \beta \in \set{0,1} $.
	Since the pair $ i,i+1 $ constitutes a \nameref{def:switch_point} in $ \sigma' $, it must be the case that $ \sigma'(i) = \bar{\beta} $.
	Assume, contrary to the claim, that $ \sigma(i-r) = \beta $.
	Since $ \Gamma_r(i+1) = \Gamma_r(i) \cup \set{i+1+r} \setminus \set{i-r} $ and $ \sigma(i-r) = \beta $, it must hold that
	\begin{align*}
		\#_{\beta}(\sigma[\Gamma_r(i+1)]) \le \#_{\beta}(\sigma[\Gamma_r(i)]) .
	\end{align*}
	This implies that, by the definition of $ \maj_r $, since $ \sigma'(i+1) = \beta $, it must also hold that $ \sigma'(i) = \beta $, and we reach a contradiction.
\end{proof}

\begin{claim}\label{claim:switch-point-induces-balance}
	Let $ \sigma $ and $ \sigma' $ be a pair of configurations where $ \maj_r(\sigma) = \sigma' $.
	If a pair of consecutive \ref{def:cell}s $ i,i+1 $ constitutes a \nameref{def:switch_point} in $ \sigma' $, then
	\begin{align*}
		\#_0(\sigma[\Gamma_r(i) \cap \Gamma_r(i+1)]) = \#_1(\sigma[\Gamma_r(i) \cap \Gamma_r(i+1)]) .
	\end{align*}
\end{claim}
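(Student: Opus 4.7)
The plan is to combine the Switch Point Argument (\Cref{claim:switch_point_arg}) with the strict/non-strict inequalities on \ref{def:cell} counts that the majority rule imposes at positions $i$ and $i+1$, in order to pin down the exact counts in $\Gamma_r(i)$ and then subtract the contribution of the single boundary cell to read off the counts in the intersection.

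First, I would observe that $\Gamma_r(i) \cap \Gamma_r(i+1) = [i-r+1, i+r]$, a \nameref{def:cell_interval} of length exactly $2r$, and that $\Gamma_r(i)$ is obtained from this intersection by adjoining the single \ref{def:cell} $i-r$, while $\Gamma_r(i+1)$ is obtained by adjoining the \ref{def:cell} $i+1+r$. Hence the \ref{def:cell} counts in the intersection differ from those in $\Gamma_r(i)$ (respectively $\Gamma_r(i+1)$) by exactly the contribution of $\sigma(i-r)$ (respectively $\sigma(i+1+r)$), both of which are identified by \Cref{claim:switch_point_arg}: $\sigma(i-r) = \sigma'(i)$ and $\sigma(i+1+r) = \sigma'(i+1)$.

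I would then split into two cases according to the value of $\sigma'(i)$. In the case $\sigma'(i) = 0$, $\sigma'(i+1) = 1$, the majority rule gives a strict inequality $\#_0(\sigma[\Gamma_r(i)]) > \#_1(\sigma[\Gamma_r(i)])$ together with the non-strict one $\#_0(\sigma[\Gamma_r(i+1)]) \le \#_1(\sigma[\Gamma_r(i+1)])$; since $\Gamma_r(i+1)$ is obtained from $\Gamma_r(i)$ by removing a $0$ (the \ref{def:cell} $i-r$) and adding a $1$ (the \ref{def:cell} $i+1+r$), these two inequalities, combined with the parity of $2r+1$, force $\#_0(\sigma[\Gamma_r(i)]) = r+1$ and $\#_1(\sigma[\Gamma_r(i)]) = r$. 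Subtracting the contribution of $\sigma(i-r) = 0$ then yields $\#_0 = \#_1 = r$ in the intersection. The symmetric case $\sigma'(i) = 1$, $\sigma'(i+1) = 0$ is handled identically.

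The main subtlety is the asymmetric tie-breaking rule of $\maj_r$ (state $1$ wins ties), which makes the two cases structurally analogous but not literally symmetric, and is also what makes the parity of $|\Gamma_r(i)| = 2r+1$ essential for pinning down the exact counts rather than merely bounding them.
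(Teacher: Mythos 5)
Your proof is correct, but it takes a genuinely different route from the paper's. The paper argues by contradiction with a purely local counting argument that does not invoke \Cref{claim:switch_point_arg} at all: if the intersection $\Gamma_r(i)\cap\Gamma_r(i+1)$, which has exactly $2r$ \ref{def:cell}s, were unbalanced, then the majority value $\beta$ there would occur at least $r+1$ times, hence at least $r+1$ times in each of the full neighborhoods $\Gamma_r(i)$ and $\Gamma_r(i+1)$ (each of size $2r+1$), forcing $\sigma'(i)=\sigma'(i+1)=\beta$ and contradicting the assumption that $i,i+1$ is a \nameref{def:switch_point}. You instead argue forward: you use \Cref{claim:switch_point_arg} to identify $\sigma(i-r)=\sigma'(i)$ and $\sigma(i+1+r)=\sigma'(i+1)$, and then combine the strict and non-strict majority inequalities at $i$ and $i+1$ with the parity of $2r+1$ to pin down the exact counts $\#_{\sigma'(i)}(\sigma[\Gamma_r(i)])=r+1$ and $\#_{\sigma'(i+1)}(\sigma[\Gamma_r(i)])=r$, after which removing the single boundary \ref{def:cell} $i-r$ gives balance in the intersection. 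Your argument checks out in both orientations of the case split (writing $a,b$ for the two counts in $\Gamma_r(i)$, the inequalities give a positive odd difference at most $2$, hence exactly $1$), and since \Cref{claim:switch_point_arg} is proved independently there is no circularity. What each approach buys: the paper's is shorter, handles both states symmetrically, and is self-contained; yours extracts slightly more information (the exact counts $r+1$ and $r$ in each neighborhood, essentially the content that \Cref{claim:claim:switch_point_arg-inverse} later reconstructs), at the cost of the case analysis and the dependence on \Cref{claim:switch_point_arg}. One small remark: the ``asymmetric tie-breaking'' you flag as the main subtlety never actually materializes, because $|\Gamma_r(i)|=2r+1$ is odd and a tie in the full neighborhood is impossible; what genuinely matters is the odd size itself, and it enters your argument exactly where you say it does.
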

\begin{proof}
	Assume the contrary, and let $ \beta \in \set{0,1} $ be the majority value in $ \sigma[\Gamma_r(i) \cap \Gamma_r(i+1)] $.
	That is,
	\begin{align*}
		\#_{\beta}(\sigma[\Gamma_r(i) \cap \Gamma_r(i+1)]) > \#_{\bar{\beta}}(\sigma[\Gamma_r(i) \cap \Gamma_r(i+1)]) .
	\end{align*}
	Hence, since $ |\Gamma_r(i) \cap \Gamma_r(i+1)| = |[i-r+1,i+r]| = 2r $,
	\begin{align*}
		\#_{\beta}(\sigma[\Gamma_r(i) \cap \Gamma_r(i+1)]) \ge r+1 .
	\end{align*}
	Thus, since $ \Gamma_r(i) \cap \Gamma_r(i+1) \subseteq \Gamma_r(i) $ and $ \Gamma_r(i+1) \cap \Gamma_r(i) \subseteq \Gamma_r(i+1) $,
	\begin{align*}
		\#_{\beta}(\Gamma_r(i)) \ge \#_{\beta}(\sigma[\Gamma_r(i) \cap \Gamma_r(i+1)]) \ge r+1
	\end{align*}
	and
	\begin{align*}
		\#_{\beta}(\Gamma_r(i+1)) \ge \#_{\beta}(\sigma[\Gamma_r(i) \cap \Gamma_r(i+1)]) \ge r+1 .
	\end{align*}
	Therefore, by the definition of $ \maj_r $,
	$ \sigma(i) = \sigma(i+1) = \beta $ and we reach a contradiction to the assumption that the pair $ i,i+1 $ is a \nameref{def:switch_point}.
\end{proof}

The two properties of \nameref{def:switch_point}s captured by \Cref{claim:switch_point_arg} and \Cref{claim:switch-point-induces-balance} together correspond, in fact, to a characterization of \nameref{def:switch_point}s, as formalized in \Cref{claim:claim:switch_point_arg-inverse}.

\begin{claim}\label{claim:claim:switch_point_arg-inverse}
	Let $ \sigma $ and $ \sigma' $ be a pair of configurations where $ \maj_r(\sigma) = \sigma' $.
	If for a \ref{def:cell} $ i \in \cycnums{n} $, 
	$
	\#_0(\sigma[\Gamma_r(i) \cap \Gamma_r(i+1)]) = \#_1(\sigma[\Gamma_r(i) \cap \Gamma_r(i+1)]) 
	$
	and $ \sigma(i-r) \ne \sigma(i+1+r) $,
	then the pair $ i,i+1 $ constitutes a \nameref{def:switch_point} in $ \sigma' $ in which
	$ \sigma'(i) = \sigma(i-r) $ and $ \sigma'(i+1) = \sigma(i+1+r) $.
\end{claim}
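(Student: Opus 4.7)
The proof is a direct arithmetic verification using the balance hypothesis, so my plan is essentially careful bookkeeping rather than anything deeper. The key observation is that $\Gamma_r(i) = \{i-r\} \cup (\Gamma_r(i) \cap \Gamma_r(i+1))$ as a disjoint union, and symmetrically $\Gamma_r(i+1) = (\Gamma_r(i) \cap \Gamma_r(i+1)) \cup \{i+1+r\}$. So the counts over these $r$-neighborhoods differ from the counts over the common intersection $[i-r+1, i+r]$ (of size $2r$) by exactly one cell at each endpoint.

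First, I would rewrite the balance hypothesis as $\#_0(\sigma[\Gamma_r(i) \cap \Gamma_r(i+1)]) = \#_1(\sigma[\Gamma_r(i) \cap \Gamma_r(i+1)]) = r$, since the intersection has $2r$ cells. Then, splitting into two cases on $\sigma(i-r) \in \{0,1\}$, I would compute that $\#_{\sigma(i-r)}(\sigma[\Gamma_r(i)]) = r+1$ and $\#_{\overline{\sigma(i-r)}}(\sigma[\Gamma_r(i)]) = r$. Applying \Cref{def:majority_rule} then gives $\sigma'(i) = \sigma(i-r)$; note that this works in both cases because when $\sigma(i-r) = 1$ the count of $1$s is $r+1 > r$, so the rule outputs $1$, while when $\sigma(i-r) = 0$ the count of $0$s strictly exceeds that of $1$s, so the rule outputs $0$.

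By the completely symmetric argument applied to $\Gamma_r(i+1) = (\Gamma_r(i) \cap \Gamma_r(i+1)) \cup \{i+1+r\}$, I obtain $\sigma'(i+1) = \sigma(i+1+r)$. Combining these two equalities with the hypothesis $\sigma(i-r) \ne \sigma(i+1+r)$ yields $\sigma'(i) \ne \sigma'(i+1)$, so by \Cref{def:switch_point} the pair $i, i+1$ is a switch point in $\sigma'$, and the two identities $\sigma'(i) = \sigma(i-r)$, $\sigma'(i+1) = \sigma(i+1+r)$ give the claim.

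There is no substantive obstacle here; the only subtle point is the asymmetric tie-breaking built into \Cref{def:majority_rule} (ties go to $1$). I would spell out both cases of $\sigma(i-r)$ explicitly to make it transparent that the argument handles the tie-breaking correctly, rather than trying to write a single unified formula. This makes the proof essentially a converse computation to the forward direction carried out within the proofs of \Cref{claim:switch_point_arg} and \Cref{claim:switch-point-induces-balance}.
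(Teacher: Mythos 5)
Your proof is correct and follows essentially the same route as the paper's: both use the decompositions $\Gamma_r(i) = (\Gamma_r(i)\cap\Gamma_r(i+1)) \cup \set{i-r}$ and $\Gamma_r(i+1) = (\Gamma_r(i)\cap\Gamma_r(i+1)) \cup \set{i+1+r}$ together with the observation that the balanced intersection of size $2r$ contributes exactly $r$ cells of each \ref{def:state}, so the endpoint cell tips the count to $r+1$ and determines $\sigma'(i)$ and $\sigma'(i+1)$. Your explicit two-case treatment of the tie-breaking is a minor elaboration of what the paper folds into ``by the definition of $\maj_r$,'' not a different argument.
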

\begin{proof}
	Let $ \beta \in \set{0,1} $ be the value that $ \sigma(i-r) = \beta $ and $ \sigma(i+1+r) = \bar{\beta} $.
	Since $ |\Gamma_r(i) \cap \Gamma_r(i+1)| = |[i-r+1, i+r]| = 2r $,
	\begin{align*}
		\#_{\beta}(\sigma[\Gamma_r(i) \cap \Gamma_r(i+1)]) = r .
	\end{align*}
	Since $ \Gamma_r(i) = (\Gamma_r(i) \cap \Gamma_r(i+1)) \cup \set{i-r} $, it must be the case that $ \#_\beta(\Gamma_r(i)) = r+1 $.
	Similarly, since $ \Gamma_r(i+1) = (\Gamma_r(i) \cap \Gamma_r(i+1)) \cup \set{i+1+r} $, it must be the case that $ \#_{\bar{\beta}}(\Gamma_r(i+1)) = r+1 $.
	Hence, by the definition of $ \maj_r $,
	$ \sigma'(i) = \beta, \sigma'(i+1) = \bar{\beta} $ and the pair $ i,i+1 $ constitutes a \nameref{def:switch_point} in $ \sigma' $.
\end{proof}

\section{Temporally periodic configurations}\label{sec:temporally_periodic_configurations}

\begin{definition}\label{def:set_of_homogeneous_blocks}
	Given a configuration $ \sigma $ and a value $ \beta \in \set{0,1} $, we denote by $ B^\beta(\sigma) $ the set of maximal homogeneous blocks with value $ \beta $ in $ \sigma $.
	That is, 
	\[ 	B^\beta(\sigma) = \set{[i,j] \;:\; \forall k \in [i,j], \sigma(k)=\beta, \sigma(i-1)=\sigma(j+1) = \bar{\beta}} .\]
	Also, let $ B(\sigma) = B^0(\sigma) \cup B^1(\sigma) $.
\end{definition}

When we refer to an interval $ [i,j] \subseteq \cycnums{n} $, we denote its length by $ |[i,j]| $.
Note that it is not necessarily the case that $ |[i,j]| = j-i+1 $ because of the cyclical boundary conditions.

In \cite{goles1981_period_2}, it has been shown that a more general class of cellular automata that includes $ \maj_r $, always reach a cycle of temporal period 1 or 2.
Nevertheless, we provide a proof tailored for our special case, $ \maj_r $, because it is simpler and shorter than the general proof in \cite{goles1981_period_2}.

\begin{claim}\label{claim:period_2}
	For every integer $ r \ge 1 $, the rule $ \maj_r $ has temporal period $ 2 $.
\end{claim}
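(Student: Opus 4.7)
The plan is to adapt the classical Lyapunov-function argument (in the spirit of \cite{goles1981_period_2}), specialized to the specific symmetry of the neighborhood $\Gamma_r$. The key structural feature is that the neighborhood relation is symmetric: $j \in \Gamma_r(i)$ if and only if $i \in \Gamma_r(j)$. This lets us exhibit an integer-valued ``energy'' defined on consecutive pairs of configurations that weakly decreases at each time step, with equality precisely when the orbit has already entered its (length-$1$ or length-$2$) cycle.

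To set this up, let $\sigma_t = \maj_r^t(\sigma)$ for $t \ge 0$ and re-encode configurations via $x_i^t = 2\sigma_t(i) - 1 \in \{-1,+1\}$. Because $|\Gamma_r(i)| = 2r+1$ is odd, the local sum $S_i^t = \sum_{j \in \Gamma_r(i)} x_j^t$ is nonzero, and \Cref{def:majority_rule} takes the clean form $x_i^{t+1} = \operatorname{sign}(S_i^t)$. I would then define the potential
\[
L_t \;=\; -\sum_{i \in \cycnums{n}} \sum_{j \in \Gamma_r(i)} x_i^{t} \, x_j^{t-1}
\]
and compute $L_{t+1} - L_t$. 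Using the symmetry of $\Gamma_r$ to rewrite $\sum_i \sum_{j \in \Gamma_r(i)} x_i^t x_j^{t-1}$ as $\sum_i x_i^{t-1} S_i^t$ (and similarly for the other cross term), the difference collapses to
\[
L_{t+1} - L_t \;=\; -\sum_{i \in \cycnums{n}} \bigl(x_i^{t+1} - x_i^{t-1}\bigr)\, S_i^t.
\]

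The next step is a case analysis of each summand: if $x_i^{t+1} = x_i^{t-1}$ the term vanishes, and otherwise $x_i^{t+1} - x_i^{t-1} = 2 x_i^{t+1} = 2\operatorname{sign}(S_i^t)$, so the term equals $2|S_i^t| > 0$. Hence $L_t$ is weakly monotone decreasing, and $L_{t+1} = L_t$ if and only if $\sigma_{t+1} = \sigma_{t-1}$. Since $L_t$ is integer-valued and bounded (indeed $|L_t| \le n(2r+1)$), it must eventually stabilize at some step $T$; once it does, $\sigma_{t+1} = \sigma_{t-1}$ holds for every subsequent $t$, so $\maj_r^2$ fixes the configurations in the tail and the orbit has entered a cycle of length at most $2$, which is exactly the assertion.

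I expect the only delicate step to be the algebraic collapse yielding the formula for $L_{t+1} - L_t$: one must swap the order of summation and relabel indices carefully to see that the two double sums combine into the single sum over $i$ of $(x_i^{t+1} - x_i^{t-1})\, S_i^t$; the symmetry of $\Gamma_r$ is used precisely here. Every other ingredient, including the boundedness of $L_t$ and the identity $\operatorname{sign}(S_i^t)\cdot S_i^t = |S_i^t|$, is elementary.
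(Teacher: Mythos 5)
Your proof is correct and follows essentially the same route as the paper: it is the same Lyapunov-function argument, and your potential $L_t$ is (up to a sign flip and an additive constant) exactly the paper's $\phi(\E_t,\E_{t-1})$ rewritten in $\pm1$ spin variables, with the same symmetry trick and the same per-cell case analysis showing strict monotone change whenever $\sigma_{t+1}\ne\sigma_{t-1}$. Boundedness and integrality then force eventual period at most $2$, just as in the paper.
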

\begin{proof}
	Let $ \sigma_0 : \cycnums{n} \to \bitset $ be any initial configuration, and for any integer $ t \ge 0 $, let $ \sigma_t = \maj_r^t(\sigma_0) $.
	We define a potential function $ \phi : \set{0,1}^n \times \set{0,1}^n \rightarrow \mathbb{Z} $ over pairs of consecutive configurations in the sequence $ \set{\sigma_t}_{t=0}^{\infty} $.
	
	We shall use the following shorthand (where $i+x$ is computed mod $n$): 
	\begin{align}\label{eq:gt-def}
		g_t(i) = \sum_{j=i-r}^{i+r} \sigma_t(j) \;,
	\end{align}
	where we observe that if $g_t(i) \geq r+1$, then $\sigma_{t+1}(i) = 1$, while if $g_t(i) < r+1$, then $\sigma_{t+1}(i) = 0$.
	
	The potential function is defined as follows.
	
	\begin{align}\label{eq:varphi-def}
		\phi(\sigma_t, \sigma_{t-1}) = \sum_{i=0}^{n-1} \parentheses{\sigma_t(i) \cdot g_{t-1}(i)} - (r+1/2) \sum_{i=0}^{n-1} \parentheses{\sigma_t(i) + \sigma_{t-1}(i)}\;.
	\end{align}
	Observe that 
	\begin{align}\label{eq:symmetry}
		\sum_{i=0}^{n-1} \parentheses{\sigma_t(i) \cdot g_{t-1}(i)} = \sum_{i=0}^{n-1} \parentheses{\sigma_{t-1}(i) \cdot g_{t}(i)} \;.
	\end{align}
	(It is easiest to see this if we think of pairs $\sigma_t(i)=1$ and $\sigma_{t-1}(j)=1$ such that $|i-j| \leq r$ as edges.)
	
	Now consider the change in the potential function:
	\begin{eqnarray}
		\lefteqn{\phi(\sigma_{t+1}, \sigma_{t}) - \phi(\sigma_{t}, \sigma_{t-1})}\nonumber \\
		&=& \sum_{i=0}^{n-1} \parentheses{\sigma_{t+1}(i) \cdot g_{t}(i)}
		- (r+1/2) \sum_{i=0}^{n-1} \parentheses{\sigma_{t+1}(i) + \sigma_{t}(i)} \nonumber \\
		&& \;- \; \left(\sum_{i=0}^{n-1} \parentheses{\sigma_t(i) \cdot g_{t-1}(i)} - (r+1/2) \sum_{i=0}^{n-1} \parentheses{\sigma_t(i) + \sigma_{t-1}(i)} \right) \\
		&=& \sum_{i=0}^{n-1} \left( \parentheses{\sigma_{t+1}(i) \cdot g_{t}(i)} - \parentheses{\sigma_t(i) \cdot g_{t-1}(i)} \right)
		- (r+1/2) \sum_{i=0}^{n-1} \parentheses{\sigma_{t+1}(i) - \sigma_{t-1}(i)} \\
		&=&  \sum_{i=0}^{n-1} \parentheses{\sigma_{t+1}(i) - \sigma_{t-1}(i)} \cdot g_{t}(i)
		- (r+1/2) \sum_{i=0}^{n-1} \parentheses{\sigma_{t+1}(i) - \sigma_{t-1}(i)} \\
		&=&  \sum_{i=0}^{n-1} \parentheses{\sigma_{t+1}(i) - \sigma_{t-1}(i)} \cdot \parentheses{g_{t}(i) - (r+1/2) } \\
		&=&  \sum_{i: \sigma_{t+1}(i) \neq \sigma_{t-1}(i)} |g_{t}(i) - (r+1/2)| \\
		&\geq & \frac{1}{2} \left|\{i:  \sigma_{t+1}(i) \neq \sigma_{t-1}(i)\}\right|\;.
	\end{eqnarray}
	
	That is, the value of $ \phi $ is increasing at each step by $ \frac{1}{2} \left|\{i:  \sigma_{t+1}(i) \neq \sigma_{t-1}(i)\}\right|\ $, and since $ \phi $ is by definition a bounded function, it must be the case that there exists an integer $ t^* $ s.t. for every $ t > t^* $, $ \{i:  \sigma_{t+1}(i) \neq \sigma_{t-1}(i)\} = \emptyset $.
	Hence, $ \maj_r $ has temporal period $ 2 $.
\end{proof}

\begin{observation}\label{claim:final_locations}
	Let $ \sigma $ be a configuration and let $ [i,j] $ be an interval of \ref{def:cell}s such that for every \ref{def:cell} $ \ell \in [i,j] $, $ \sigma(\ell) = \beta $ for some $ \beta \in \set{0,1} $.
	For an integer $ t \ge 0 $, let $ \sigma_t = \maj_r^t(\sigma) $.
	If the length of the interval $ [i,j] $ is at least $ r+1 $, then for every $ t \ge 0 $ and for every \ref{def:cell} $ \ell \in [i,j] $, $ \sigma_t(\ell) = \beta $.
\end{observation}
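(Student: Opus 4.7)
The plan is to prove this by a straightforward induction on $t$. The base case $t=0$ is immediate from the hypothesis $\sigma_0 = \sigma$. For the inductive step, assuming that $\sigma_t(\ell) = \beta$ for every $\ell \in [i,j]$, I would show that for each such $\ell$, at least $r+1$ of the $2r+1$ cells in $\Gamma_r(\ell)$ already carry the value $\beta$, which by the definition of $\maj_r$ forces $\sigma_{t+1}(\ell) = \beta$.

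The one computation to carry out is bounding $|\Gamma_r(\ell) \cap [i,j]|$ from below when $\ell \in [i,j]$. Writing this intersection as the interval $[\max(\ell-r,i), \min(\ell+r,j)]$ and doing a short case split on whether $\ell - r < i$ and whether $\ell + r > j$, each case gives a length of at least $\min(2r+1,\, j-i+1)$, which is at least $r+1$ by the hypothesis $|[i,j]| \ge r+1$. By the inductive hypothesis, all cells in this sub-interval have $\sigma_t$-value $\beta$, so $\#_\beta(\sigma_t[\Gamma_r(\ell)]) \ge r+1$ and consequently $\#_{\bar\beta}(\sigma_t[\Gamma_r(\ell)]) \le r$. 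By \Cref{def:majority_rule}, this means $\sigma_{t+1}(\ell) = \beta$ whether $\beta = 0$ (strict majority of $0$s) or $\beta = 1$ (at least a tie of $1$s).

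There is no genuine obstacle here; the observation is the standard ``local majority is self-sustaining'' principle applied step by step. The only thing worth writing out carefully is the case analysis for the intersection length, to handle cells $\ell$ near the endpoints of $[i,j]$ whose $r$-neighborhood protrudes past $i$ or $j$. The argument also explains why the threshold $r+1$ in the hypothesis is exactly the right one: the block must be long enough that every one of its cells sees at least $r+1$ same-valued cells inside the block, which is precisely the tipping point for $\maj_r$.
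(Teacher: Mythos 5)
Your proof is correct: the induction on $t$, together with the observation that every cell of the block sees at least $r+1$ block-valued cells in its radius-$r$ neighborhood (and the check that the tie-breaking rule favors $\beta$ in both cases $\beta=0$ and $\beta=1$), is exactly the standard argument, and the paper states this as an Observation without proof, so your write-up supplies precisely the justification it leaves implicit. One minor wording slip: in the boundary cases the intersection length is bounded below by $r+1$ directly (e.g.\ it equals $\ell+r-i+1 \ge r+1$ since $\ell \ge i$), not by $\min(2r+1,\, j-i+1)$ as you state, but this does not affect the conclusion.
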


Recall from \Cref{sec:result} that, given a configuration $ \sigma : \cycnums{n} \to \bitset $, we say that a \nameref{def:cell_interval} $ [i,j] $ is \nameref{def:strongly_stable}, \nameref{def:weakly_stable} or \nameref{def:unstable} if all the \ref{def:cell}s in that \nameref{def:cell_interval} are, respectively, \nameref{def:strongly_stable}, \nameref{def:weakly_stable} or \nameref{def:unstable}.
The definition applies to complete configurations as well.
In particular, we make a distinction that partitions the \nameref{def:temporally_periodic} configurations into two classes: \nameref{def:strongly_stable} configurations (configurations of the form $ (0^{r+1}0^* + 1^{r+1}1^*)^* $) and \nameref{def:weakly_stable} configurations (\nameref{def:temporally_periodic} configurations that are not \nameref{def:strongly_stable}). 
We also define \nameref{def:unstable} configurations as \nameref{def:transient} configurations.

\begin{observation}\label{claim:strongly_stable_configurations_are_fixed_points}
	If a configuration is \nameref{def:strongly_stable}, then it is also a \nameref{def:fixed_point}.
\end{observation}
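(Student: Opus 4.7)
My plan is to derive the claim as an immediate corollary of the preceding \Cref{claim:final_locations}. Suppose $ \sigma $ is strongly stable, and fix an arbitrary cell $ i \in \cycnums{n} $. Let $ \beta = \sigma(i) $, and let $ [a,b] \in B^\beta(\sigma) $ denote the unique maximal homogeneous block containing $ i $. By the definition of strong stability, every block in $ B(\sigma) $ has length at least $ r+1 $, so in particular $ |[a,b]| \geq r+1 $. This is precisely the hypothesis of \Cref{claim:final_locations} applied to the interval $ [a,b] $ with value $ \beta $, so the observation yields $ \maj_r^t(\sigma)(i) = \beta $ for every $ t \geq 0 $. Taking $ t = 1 $ gives $ \maj_r(\sigma)(i) = \sigma(i) $, and since $ i $ was arbitrary, $ \maj_r(\sigma) = \sigma $; that is, $ \sigma $ is a \nameref{def:fixed_point} in the sense of \Cref{def:fixed_point}.

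There is essentially no obstacle: the substantive content has already been absorbed into \Cref{claim:final_locations}, and the step from ``$ \maj_r^t(\sigma)(i) = \sigma(i) $ for all $ t \ge 0 $'' to ``$ \sigma $ is a fixed-point'' is purely notational. If one preferred a self-contained argument that avoided invoking the observation, the direct majority count is equally short: setting $ L = \min(r,\, i-a) $ and $ R = \min(r,\, b-i) $, the $ L + R + 1 $ consecutive cells in $ [i-L,\, i+R] $ all lie inside $ [a,b] $ and hence all equal $ \beta $. The hypothesis $ b - a \geq r $ forces $ L + R \geq r $ (either both $ i - a $ and $ b - i $ are $ \leq r $, in which case $ L + R = b - a \geq r $, or at least one of them is $ \geq r $, giving $ L + R \geq r $ trivially). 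Hence $ \Gamma_r(i) $ contains at least $ r+1 $ cells of value $ \beta $ out of the $ 2r+1 $ cells in the neighborhood, which is a strict $ \beta $-majority, and thus $ \maj_r(\sigma)(i) = \beta = \sigma(i) $, as required.
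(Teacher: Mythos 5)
Your proposal is correct, and it matches the paper's (implicit) reasoning: the paper states this as an \emph{Observation} with no written proof, treating it as immediate from \Cref{claim:final_locations} and the definition of strong stability, which is exactly your first route. Your self-contained counting argument (at least $r+1$ cells of value $\beta$ inside $\Gamma_r(i)$, hence a strict majority under either tie-breaking direction of \Cref{def:majority_rule}) is the routine verification the paper leaves to the reader, so there is nothing to add.
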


\begin{observation}\label{claim:configurations_with_strongly_stable_locations_converge_to_strongly_stable_configurations}
	Let $ \sigma $ be a configuration.
	If $ B(\sigma) $ contains a block of length at least $ r+1 $, then there exists an integer $ t \ge 0 $ such that $ \maj_r^t(\sigma) $ is \nameref{def:strongly_stable}.
\end{observation}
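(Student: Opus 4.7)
My plan is to reduce the general case to the temporally periodic case and then to show that any temporally periodic configuration containing a block of length at least $r+1$ must actually be strongly stable. By \Cref{claim:period_2}, there exists some $t^* \geq 0$ such that $\sigma^* := \maj_r^{t^*}(\sigma)$ is temporally periodic (i.e., $\maj_r^2(\sigma^*) = \sigma^*$). Since by hypothesis some block $[i,j] \in B(\sigma)$ has some value $\beta \in \set{0,1}$ and length at least $r+1$, \Cref{claim:final_locations} guarantees that every cell of $[i,j]$ retains value $\beta$ in every configuration $\maj_r^t(\sigma)$; in particular, $\sigma^*$ still contains a block (containing $[i,j]$) of length at least $r+1$. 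It therefore suffices to prove the following claim: if $\sigma'$ is temporally periodic and some block in $B(\sigma')$ has length at least $r+1$, then $\sigma'$ is strongly stable. Taking $t = t^*$ then completes the proof.

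To prove this claim, let $\sigma'' = \maj_r(\sigma')$, so that $\maj_r(\sigma'') = \sigma'$ by temporal periodicity (with $\sigma'' = \sigma'$ when $\sigma'$ is a fixed point). Let $[i,j]$ be a maximal block in $\sigma'$ of value $\beta$ and length at least $r+1$. The key propagation step is: \emph{the block of $\sigma'$ immediately to the right of $[i,j]$ also has length at least $r+1$.} Indeed, by \Cref{claim:final_locations} applied to $\sigma'$, the cells of $[i,j]$ have value $\beta$ in $\sigma''$ as well. By maximality, $\sigma'(j+1) = \bar\beta$, so the definition of $\maj_r$ gives $\#_{\bar\beta}(\sigma''[\Gamma_r(j+1)]) \geq r+1$. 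Of the $2r+1$ cells in $\Gamma_r(j+1) = [j+1-r, j+1+r]$, the $r$ cells in $[j+1-r, j]$ lie inside $[i,j]$ (using $|[i,j]| \geq r+1$) and hence have value $\beta$ in $\sigma''$. Consequently, all $r+1$ cells in $[j+1, j+1+r]$ must have value $\bar\beta$ in $\sigma''$, forming a $\bar\beta$-block of length at least $r+1$ in $\sigma''$. A second application of \Cref{claim:final_locations}, now to $\sigma''$, shows that these cells are also $\bar\beta$ in $\sigma' = \maj_r(\sigma'')$, so the maximal block of $\sigma'$ containing cell $j+1$ has length at least $r+1$, as claimed.

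Iterating this propagation around the cycle, each newly produced long block (of value $\bar\beta$) in turn forces the next block of value $\beta$ to its right to have length at least $r+1$, and so on; since the tape is cyclic and the argument is symmetric in $\beta$ and $\bar\beta$, after at most $b(\sigma')$ iterations we return to $[i,j]$ and conclude that every block of $\sigma'$ has length at least $r+1$, i.e., $\sigma'$ is strongly stable. I do not expect any substantive obstacle: the argument uses only the previously established \Cref{claim:period_2} and \Cref{claim:final_locations}, together with one elementary counting argument inside a single $r$-neighborhood. The only points that deserve a bit of care are handling the fixed-point and 2-cycle cases uniformly (which is automatic via the choice $\sigma'' = \maj_r(\sigma')$) and verifying that the inductive propagation closes up correctly around the cyclic tape.
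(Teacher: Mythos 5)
Your proof is correct. The reduction (converge to a temporally periodic configuration via \Cref{claim:period_2}, note the long block survives by \Cref{claim:final_locations}) is sound, and so is the propagation step: since $\sigma'=\maj_r(\sigma'')$ and $\sigma'(j+1)=\bar{\beta}$, indeed $\#_{\bar{\beta}}(\sigma''[\Gamma_r(j+1)])\ge r+1$, while the $r$ cells of $[j+1-r,j]$ are $\beta$ in $\sigma''$ by \Cref{claim:final_locations}, forcing all of $[j+1,j+1+r]$ to be $\bar{\beta}$ in $\sigma''$ and hence, by a second application of \Cref{claim:final_locations}, in $\sigma'$; iterating around the ring (with the trivial aside that a homogeneous $\sigma'$, where no cell $j+1$ outside the block exists, is already \nameref{def:strongly_stable}) gives the claim. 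The paper states this as an observation with no written proof, so there is nothing official to match, but two comparisons are worth making. First, your intermediate claim---a \nameref{def:temporally_periodic} configuration containing one block of length at least $r+1$ has all blocks of length at least $r+1$---is an immediate consequence of \Cref{claim:temporally_periodic_blocks_are_short}, which appears just after the observation and is proved by essentially the same counting-plus-\Cref{claim:final_locations} argument applied to an adjacent short/long block pair; your propagation step is in effect a self-contained re-derivation of that dichotomy, which is fine given the ordering but duplicates later work. Second, the mechanism the paper hints at (in the discussion following \Cref{thm:characterization}) is monotone growth of the set of \nameref{def:strongly_stable} cells under $\maj_r$ until it covers the whole ring; that route avoids invoking \Cref{claim:period_2} (whose proof, note, is what actually delivers the ``eventually temporally periodic'' statement you use) and gives quantitative convergence information, whereas your route buys brevity by outsourcing convergence to the potential-function argument. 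Either way the observation follows, and your argument has no substantive gap.
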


\begin{claim}\label{claim:temporally_periodic_blocks_are_short}
	If $ \sigma $ is a \nameref{def:temporally_periodic} configuration, then it is either the case that for every block $ [i,j] \in B(\sigma) $, $ |[i,j]| \le r $ or that for every block $ [i,j] \in B(\sigma) $, $ |[i,j]| > r $.
\end{claim}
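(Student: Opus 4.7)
The plan is to prove the dichotomy via contrapositive: if $\sigma$ is temporally periodic and \emph{some} block in $B(\sigma)$ has length at least $r+1$, then \emph{every} block of $\sigma$ has length at least $r+1$. Combined with the trivial alternative (no block has length exceeding $r$, so all have length at most $r$), this yields the stated dichotomy.

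Assume $\sigma$ is temporally periodic and that $B(\sigma)$ contains a block of length at least $r+1$. By Observation~\ref{claim:configurations_with_strongly_stable_locations_converge_to_strongly_stable_configurations}, there exists an integer $t \ge 0$ such that $\maj_r^t(\sigma)$ is strongly stable, and by Observation~\ref{claim:strongly_stable_configurations_are_fixed_points} this configuration is a fixed point. Hence $\maj_r^{t+1}(\sigma) = \maj_r^t(\sigma)$. The key step is to leverage temporal periodicity to push this fixed-point property back to $\sigma$ itself. Writing $\sigma' := \maj_r(\sigma)$, temporal periodicity ($\maj_r^2(\sigma) = \sigma$) gives $\maj_r^k(\sigma) = \sigma$ for even $k$ and $\maj_r^k(\sigma) = \sigma'$ for odd $k$. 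The equation $\maj_r^{t+1}(\sigma) = \maj_r^t(\sigma)$ therefore forces $\sigma = \sigma'$, so $\sigma$ is itself a fixed point, and moreover $\sigma = \maj_r^t(\sigma)$ is strongly stable. By the definition of a strongly stable configuration, every block of $\sigma$ has length at least $r+1$, as desired.

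The argument is almost entirely a direct assembly of the three observations already available (\ref{claim:final_locations}, \ref{claim:strongly_stable_configurations_are_fixed_points}, \ref{claim:configurations_with_strongly_stable_locations_converge_to_strongly_stable_configurations}) together with the definition of temporal periodicity, so I do not anticipate any substantive obstacle. The only bookkeeping point worth flagging is the parity of $t$: if $t$ is even, $\sigma = \maj_r^t(\sigma)$ is strongly stable directly; if $t$ is odd, $\sigma' = \maj_r^t(\sigma)$ is the strongly stable fixed point, but then $\sigma = \maj_r(\sigma') = \sigma'$ and we reach the same conclusion. Both parities collapse to the identical statement that $\sigma$ is strongly stable, which closes the proof.
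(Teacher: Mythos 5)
Your reduction is logically tidy---granting \Cref{claim:configurations_with_strongly_stable_locations_converge_to_strongly_stable_configurations}, the step where you use the fixed-point property of the strongly stable iterate together with $\maj_r^2(\sigma)=\sigma$ to force $\sigma=\sigma'$ and hence $\sigma$ strongly stable is correct, including the parity bookkeeping. The genuine gap is that the entire difficulty has been outsourced to \Cref{claim:configurations_with_strongly_stable_locations_converge_to_strongly_stable_configurations}, which the paper states as an observation but never proves and never uses in any of its proofs. Within the paper's toolkit, the natural justification of that observation is circular with respect to the present claim: one invokes \Cref{claim:period_2} to reach a \nameref{def:temporally_periodic} configuration $\tau$, uses \Cref{claim:final_locations} to see that the long block survives into $\tau$, and then needs precisely \Cref{claim:temporally_periodic_blocks_are_short} applied to $\tau$ to conclude that $\tau$ is strongly stable. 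Any direct proof of the observation (say, showing that a strongly stable region must eventually absorb an adjacent short block) essentially reproduces the argument you were asked to supply, so as written your proof assumes the hard part rather than proving it.

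For comparison, the paper's proof is self-contained and uses only \Cref{claim:final_locations}. Assuming a block $[a,b]$ of length at most $r$ adjacent to a block $[c,d]$ of length at least $r+1$ (with $c=b+1$, values $\beta$ and $\bar{\beta}$), it counts directly that $\Gamma_r(b)$ contains the $r$ cells $[b+1,b+r]$ and the cell $a-1$, all of value $\bar{\beta}$, so $\maj_r(\sigma)(b)=\bar{\beta}$; since $[c,d]$ keeps the value $\bar{\beta}$ after one step, the interval $[b,d]$ is then homogeneous of length at least $r+1$, so by \Cref{claim:final_locations}, $\maj_r^2(\sigma)(b)=\bar{\beta}\ne\sigma(b)$, contradicting temporal periodicity. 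To repair your proposal you would need to give an independent proof of \Cref{claim:configurations_with_strongly_stable_locations_converge_to_strongly_stable_configurations} that does not pass through the present claim; the most direct such argument is essentially the paper's adjacent-blocks contradiction, so the detour buys nothing.
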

\begin{proof}
	Let $ \sigma' = \maj_r(\sigma) $ and $ \sigma'' = \maj_r(\sigma') $.
	Suppose by way of contradiction that $ B(\sigma) $ contains both a block of length at most $ r $ and a block of length at least $ r+1 $.
	Then there must be such a pair of consecutive blocks.
	Let $ [a,b] $ and $ [c,d] $ be the these two consecutive blocks where $ [a,b] $ is the block whose length is at most $ r $ and $ [c,d] $ is the block whose length is at least $ r+1 $.
	Since the two blocks are adjacent, it is either the case that $ c=b+1 $ or that $ a=d+1 $.
	Suppose without loss of generality that $ c=b+1 $.
	Let $ \beta $ be the value of the block $ [a,b] $.
	Since the block $ [c,d] $ is adjacent to the block $ [a,b] $, the value of $ [c,d] $ must be $ \bar{\beta} $.
	
	Since the \ref{def:cell} $ b $ belongs to $ [a,b] $, it must be the case that $ \sigma(b) = \beta $.
	However, for each \ref{def:cell} $ i \in [b+1,b+r] $, since $ |[c,d]| > r $, it must be the case that $ \sigma(i)=\bar{\beta} $.
	Also, since $ |[a,b]| \le r $ and $ a-1 \in [b-r,b] $, it must be the case that $ a-1 \in \Gamma_r(b) $, so $ \set{a-1} \cup [b+1,b+r] \in \Gamma_r(b) $.
	As $ \sigma(a-1) = \bar{\beta} $, it is the case that $ \#_{\bar{\beta}}(\sigma(\Gamma_r(b))) \ge r+1 $.
	Thus, $ \sigma'(b) = \bar{\beta} $.
	
	Since $ |[c,d]| \ge r+1 $ and for every \ref{def:cell} $ \ell \in [c,d] $, $ \sigma(\ell) = \bar{\beta} $, by \Cref{claim:final_locations}, it is also the case that for every \ref{def:cell} $ \ell \in [c,d] $, $ \sigma'(\ell) = \beta $ as well.
	Now, since $ |[b,d]| \ge r+1 $ as well, again, by \Cref{claim:final_locations}, it must hold that for every \ref{def:cell} $ \ell \in [b,d] $, $ \sigma''(\ell) = \bar{\beta} $ too.
	In particular,  $ \sigma''(b) = \bar{\beta} \ne \sigma(b) $, and so $ \sigma $ cannot be \nameref{def:temporally_periodic} and we reach a contradiction.
\end{proof}

\begin{corollary}\label{claim:weakly_stable_configurations_have_no_strongly_stable_cells}
	If a configuration $ \sigma $ is \nameref{def:weakly_stable}, then for each block $ [i,j] \in B(\sigma) $, $ |[i,j]| \le r $.
\end{corollary}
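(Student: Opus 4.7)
The plan is to derive the corollary as an almost immediate consequence of \Cref{claim:temporally_periodic_blocks_are_short}, using the definition of \nameref{def:weakly_stable} configurations together with the equivalent characterization of \nameref{def:strongly_stable} configurations.

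First, I would note that since $\sigma$ is \nameref{def:weakly_stable}, it is in particular \nameref{def:temporally_periodic}, so \Cref{claim:temporally_periodic_blocks_are_short} applies: either every block $[i,j] \in B(\sigma)$ has length at most $r$, or every such block has length at least $r+1$. The two cases are mutually exclusive (for any nonempty $B(\sigma)$), so it suffices to rule out the second case.

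Next, I would observe that if every block $[i,j] \in B(\sigma)$ had length at least $r+1$, then by the definition of \nameref{def:strongly_stable} configurations, $\sigma$ would itself be \nameref{def:strongly_stable}. But by the definition of \nameref{def:weakly_stable}, $\sigma$ is explicitly \emph{not} \nameref{def:strongly_stable}, so this case is impossible. Hence the only remaining alternative from \Cref{claim:temporally_periodic_blocks_are_short} holds, namely that $|[i,j]| \le r$ for every block $[i,j] \in B(\sigma)$, as claimed.

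There is essentially no obstacle here — the corollary is a direct two-line combination of the dichotomy in \Cref{claim:temporally_periodic_blocks_are_short} with the definitions of \nameref{def:weakly_stable} and \nameref{def:strongly_stable}. The only minor point worth spelling out in the written proof is that \nameref{def:weakly_stable} configurations are, by definition, required to be \nameref{def:temporally_periodic}, which is what permits the application of \Cref{claim:temporally_periodic_blocks_are_short} in the first place.
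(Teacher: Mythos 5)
Your proof is correct and matches the paper's intent exactly: the corollary is stated without proof precisely because it follows from the dichotomy in \Cref{claim:temporally_periodic_blocks_are_short} together with the definitions of \nameref{def:weakly_stable} (temporally periodic and not strongly stable) and \nameref{def:strongly_stable} (all blocks of length at least $r+1$), which is the argument you give.
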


\begin{definition}[temporally periodic configuration pair]\label{def:temporally_periodic_pair}
	We say that a pair of configurations $ \sigma, \sigma' $ is a \nameref{def:temporally_periodic_pair} if $ \maj_r(\sigma) = \sigma' $ and $ \maj_r(\sigma') = \maj_r(\sigma) $.
\end{definition}

\begin{claim}\label{claim:single_switch_point_at_edges_of_temporally_periodic_block}
	Let $ \sigma, \sigma' $ be a \nameref{def:temporally_periodic_pair}.
	For every $ [i,j] \in B(\sigma') $, each of the intervals $ [i-r-1,j-r] $ and $ [i+r,j+r+1] $ contains exactly one \nameref{def:switch_point} in $ \sigma $.
\end{claim}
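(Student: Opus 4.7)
The plan is to exploit the bidirectionality granted by $(\sigma, \sigma')$ being a temporally periodic pair: not only does the relation $\maj_r(\sigma) = \sigma'$ let me constrain switch points of $\sigma'$ via \nameref{claim:switch_point_arg}, but the companion relation $\maj_r(\sigma') = \sigma$ lets me apply the same claim in the opposite temporal direction to control the switch points of $\sigma$ themselves.

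First I would fix $\beta$ as the value of the block $[i,j]$ and note that both $(i-1,i)$ and $(j,j+1)$ are switch points in $\sigma'$. Applying \nameref{claim:switch_point_arg} to each yields $\sigma(i-r-1) = \bar\beta$, $\sigma(i+r) = \beta$, $\sigma(j-r) = \beta$, and $\sigma(j+r+1) = \bar\beta$. This shows that the endpoints of $[i-r-1,j-r]$ carry opposite values in $\sigma$, and similarly for $[i+r, j+r+1]$, so each interval contains at least one switch point in $\sigma$.

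For uniqueness in $[i-r-1, j-r]$, I would take an arbitrary switch point $(p,p+1)$ of $\sigma$ with $p \in [i-r-1, j-r-1]$, so that $p+r+1 \in [i,j]$ and hence $\sigma'(p+r+1) = \beta$. Applying \nameref{claim:switch_point_arg} to the pair $(\sigma',\sigma)$ (legitimate because $\maj_r(\sigma') = \sigma$) then yields $\sigma(p+1) = \sigma'(p+r+1) = \beta$ and forces $\sigma(p) = \bar\beta$. Thus every switch point in $[i-r-1, j-r]$ is a $\bar\beta \to \beta$ transition. Since consecutive switch points around the cycle must alternate direction, at most one such switch point can exist; combined with the lower bound this gives exactly one. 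The argument for $[i+r, j+r+1]$ is mirror-symmetric: any switch point $(p, p+1)$ with $p \in [i+r, j+r]$ has $p-r \in [i,j]$, whence $\sigma(p) = \sigma'(p-r) = \beta$ and $\sigma(p+1) = \bar\beta$, so all switches there are uniformly $\beta \to \bar\beta$ and again there is exactly one.

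The only real subtlety I anticipate is bookkeeping: making sure the ``backward'' application of \nameref{claim:switch_point_arg} (from $\sigma'$ to $\sigma$) is a bona fide consequence of the temporally periodic hypothesis, and that the shifts $p \pm r$ land inside the block $[i,j]$ for every switch point under consideration. Once these details are in place, the observation that all switch points in a given interval have the same direction, together with the fact that consecutive switch points must alternate, delivers the upper bound of one immediately.
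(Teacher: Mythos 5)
Your proof is correct and follows essentially the same route as the paper: existence from applying \nameref{claim:switch_point_arg} in the forward direction ($\maj_r(\sigma)=\sigma'$) to see that the interval's endpoints carry opposite values, and uniqueness from applying it in the backward direction ($\maj_r(\sigma')=\sigma$, valid by temporal periodicity) to constrain the switch points of $\sigma$ inside the interval. The only cosmetic difference is that you show all switch points in the interval share the same orientation and invoke alternation, whereas the paper takes the switch point nearest to the one guaranteed to exist, notes it must have the opposite orientation, and derives the contradiction by mapping it into the block $[i,j]$ of $\sigma'$ -- the same underlying argument.
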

\begin{proof}
	We prove the claim for the interval $ [i-r-1,j-r] $, as the proof for the interval $ [i+r,j+r+1] $ is analogous.
	Let $ \beta \in \set{0,1} $ be the value such that $ [i,j] \in B^{\beta}(\sigma') $.
	Since $ [i,j] \in B^{\beta}(\sigma') $, each of the pairs $ (i-1,i) $ and $ (j,j+1) $ is a \nameref{def:switch_point} in $ \sigma' $.
	Since the pair $ \sigma,\sigma' $ is a \nameref{def:temporally_periodic_pair}, $ \maj_r(\sigma') = \sigma $.
	Hence, by \nameref{claim:switch_point_arg}, it must be the case that $ \sigma(i-r-1) = \bar{\beta} $ and $ \sigma(j-r) = \beta $.
	
	Therefore, there must be a \ref{def:cell} $ \ell \in [i-r-1,j-r] $ such that $ \sigma(\ell) = \bar{\beta} $ and $ \sigma(\ell+1) = \beta $.
	That is, the pair $ \ell, \ell+1 $ is a \nameref{def:switch_point} in $ \sigma $ in the interval $ [i-r-1,j-r] $.
	
	It is left to show that the pair $ \ell, \ell+1 $ is the only \nameref{def:switch_point} in the interval $ [i-r-1,j-r] $ in $ \sigma $.
	Suppose by way of contradiction that this is not the case.
	Let $ \ell',\ell'+1 $ be the \nameref{def:switch_point} in $ \sigma $ in $ [i-r-1,j-r] $ that is closest to $ \ell $.
	Since $ \ell,\ell+1 $ is a \nameref{def:switch_point} of values $ \bar{\beta}, \beta $, it must be the case that $ \ell',\ell'+1 $ is a \nameref{def:switch_point} of values $ \beta, \bar{\beta} $.
	That is, $ \sigma(\ell'+1) = \bar{\beta} $.
	Since the pair $ \sigma,\sigma' $ is a \nameref{def:temporally_periodic_pair}, $ \maj_r(\sigma) = \sigma' $.
	Hence, by \nameref{claim:switch_point_arg}, it must be the case that $ \sigma'(\ell'+1+r) = \bar{\beta} $.
	However, since $ \ell' \in [i-r-1,j-r] $, it follows that $ \ell'+1 \in [i-r,j-r] $, so $ \ell'+1+r \in [i,j] $.
	That is, the conclusion that $ \sigma'(\ell'+1+r) = \bar{\beta} $ is in contradiction to the assumption that $ [i,j] \in B^\beta(\sigma') $.
\end{proof}

\section{On the lengths of maximal homogeneous blocks}

\begin{definition}[balanced]\label{def:balanced}
	We say that a configuration $ \sigma $ is \nameref{def:balanced} if $ \#_0(\sigma) = \#_1(\sigma) $.
	Similarly, we say that an interval $ [i,j] $ is \nameref{def:balanced} in the configuration $ \sigma $ if $ \#_0(\sigma[i,j]) = \#_1(\sigma[i,j]) $.
\end{definition}

\begin{definition}[bias]\label{def:bias}
	Given a configuration $ \sigma $, we define its \nameref{def:bias} as $ \#_0(\sigma) - \#_1(\sigma) $.
	Similarly, given an interval $ [i,j] $, we define its \nameref{def:bias} in $ \sigma $ as $ \#_0(\sigma[i,j]) - \#_1(\sigma[i,j]) $.
\end{definition}

That is, a \nameref{def:balanced} configuration is a configuration whose \nameref{def:bias} is 0.
Similarly, an interval is \nameref{def:balanced} in a configuration if its \nameref{def:bias} is 0 in that configuration.

\begin{claim}\label{claim:block_length}
	Let $ \sigma $ and $ \sigma' $ be a pair of configurations where $ \maj_r(\sigma) = \sigma' $ and let $ [i,j] \in B^{\beta}(\sigma') $ for some $ \beta \in \set{0,1} $.
	If $ |[i,j]| \le 2r+1 $, then:
	\begin{align*}
		|[i,j]| = \#_\beta(\sigma[i-r,j+r]) - \#_{\bar{\beta}}(\sigma[j-r,i+r]) .
	\end{align*}
\end{claim}

\begin{proof}
	By the definition of $ B^{\beta} $, the interval $ [i,j] $ is a \ref*{def:block} in $ \sigma' $.
	Also, $ |[i,j]| \le 2r+1 < n $.
	Hence, the pair $ (i-1, i) $ is a \nameref{def:switch_point} in $ \sigma' $.
	Therefore, by \Cref{claim:switch-point-induces-balance}, the interval $ [i-r,i+r-1] $ is \nameref{def:balanced} in $ \sigma $.
	Similarly, since the pair $ (j,j+1) $ is also a \nameref{def:switch_point} in $ \sigma' $, again, by \Cref{claim:switch-point-induces-balance}, the interval $ [j-r+1,j+r] $ is \nameref{def:balanced} in $ \sigma $.
	
	We claim that the intervals $ [i-r,j-r] $ and $ [i+r,j+r] $ have the same \nameref{def:bias} in $ \sigma $.
	To see why, first observe that
	
	\begin{align}
		[i-r,j-r] = [i-r,i+r-1] \setminus [j-r+1, i+r-1] , \label{equation:[i-r,j-r]_length} \\
		[i+r,j+r] = [j-r+1,j+r] \setminus [j-r+1, i+r-1]. \label{equation:[i+r,j+r]_length} .
	\end{align}
	Since $ |[i-r,i+r-1]| = |[j-r+1,j+r]| = 2r $ and both $ [i-r,i+r-1] $ and $ [j-r+1,j+r] $ are \nameref{def:balanced} in $ \sigma $, by equations (\ref{equation:[i-r,j-r]_length}) and (\ref{equation:[i+r,j+r]_length}), the intervals $ [i-r,j-r] $ and $ [i+r,j+r] $ must have the same \nameref{def:bias} in $ \sigma $.
	The reason is that each of $ [i-r,j-r] $ and $ [i+r,j+r] $ equal the difference between a \nameref{def:balanced} interval of length $ 2r $ and the common sub-interval $ [j-r+1, i+r-1] $.
	
	Hence, since $ |[i-r,j-r]| = |[i+r,j+r]| $,
	
	\begin{align}
		&\#_0(\sigma[i+r,j+r]) = \#_0(\sigma[i-r,j-r]) \text{ and} \label{equation:same_num_of_zeros_in_r_shifted_intervals} \\
		&\#_1(\sigma[i+r,j+r]) = \#_1(\sigma[i-r,j-r]) .\label{equation:same_num_of_ones_in_r_shifted_intervals}
	\end{align}
	Since $ [j-r+1,j+r] = [j-r+1,i+r-1] \cup [i+r,j+r] $ and $ [j-r+1,j+r] $ is \nameref{def:balanced}, it must be the case that
	
	\begin{align*}
		&\#_0(\sigma[j-r+1,i+r-1]) + \#_0(\sigma[i+r,j+r]) = \\
		&\#_1(\sigma[j-r+1,i+r-1]) + \#_1(\sigma[i+r,j+r]) .
	\end{align*}
	Hence,
	
	\begin{align}
		\begin{split}\label{equation:num_zeros_in_[i+r,j+r]}
			\#_{\bar{\beta}}&(\sigma[i+r,j+r]) = \\
			&\#_{\beta}(\sigma[j-r+1,i+r-1]) + \#_{\beta}(\sigma[i+r,j+r]) - \#_{\bar{\beta}}(\sigma[j-r+1,i+r-1]) .
		\end{split}
	\end{align}
	We now express the length of $ [i,j] $.
	
	\begin{align}
		|[i,j]|
		&= |[i-r,j-r]| \\
		&= \#_0(\sigma[i-r,j-r]) + \#_1(\sigma[i-r,j-r]) \\
		&= \#_{\beta}(\sigma[i-r,j-r]) + \#_{\bar{\beta}}(\sigma[i+r,j+r]) \label{equation:[i,j]_len_equals_left_zeros_plus_right_ones} \\
		\begin{split}
			&= \#_{\beta}(\sigma[i-r,j-r]) \\
			&\;\;\;\; + \#_{\beta}(\sigma[j-r+1,i+r-1]) \\
			&\;\;\;\; + \#_{\beta}(\sigma[i+r,j+r]) \\
			&\;\;\;\; - \#_{\bar{\beta}}(\sigma[j-r+1,i+r-1])
		\end{split} \label{equation:block_length_compartmentalized} \\
		&= \#_\beta(\sigma[i-r,j+r]) - \#_{\bar{\beta}}(\sigma[j-r+1,i+r-1]) \label{equation:block_length} \\
		&= \#_\beta(\sigma[i-r,j+r]) - \#_{\bar{\beta}}(\sigma[j-r,i+r]) , \label{equation:block_length_as_claimed}
	\end{align}
	as claimed, where:
	\begin{itemize}
		\item[--] (\ref{equation:[i,j]_len_equals_left_zeros_plus_right_ones}) follows from applying equations (\ref{equation:same_num_of_zeros_in_r_shifted_intervals}) and (\ref{equation:same_num_of_ones_in_r_shifted_intervals}).
		
		\item[--] (\ref{equation:block_length_compartmentalized}) follows from applying Equation (\ref{equation:num_zeros_in_[i+r,j+r]}).
		
		\item[--] (\ref{equation:block_length}) follows from the fact that
		\begin{align*}
			[i-r,j-r] \cup [j-r+1,i+r-1] \cup [i+r,j+r] = [i-r,j+r] .
		\end{align*}
		\item[--] (\ref{equation:block_length_as_claimed}) follows from the observation that, by \nameref{claim:switch_point_arg}, since the pair $ (i,i-1) $ is a \nameref{def:switch_point} in $ \sigma' $, $ \sigma(i+r) = \sigma'(i) = \beta $, and, similarly, since the pair $ (j,j+1) $ is a \nameref{def:switch_point} in $ \sigma' $, $ \sigma(j-r) = \sigma'(j) = \beta $ as well.
		Hence, $ \#_{\bar{\beta}}(\sigma[j-r+1,i+r-1]) = \#_{\bar{\beta}}(\sigma[j-r,i+r]) $.
	\end{itemize}
We have thus established \Cref{claim:block_length}.
\end{proof}

\section{Block intervals defined by the left and right mappings}\label{sec:left_right_mappings}
\begin{definition}\label{def:block_interval}
	Given a configuration $ \sigma $, let $ B $ be a sequence of maximal homogeneous blocks in $ \sigma $.
	For two maximal homogeneous blocks $ X=[x,y] $ and $ X'=[x',y'] $, not necessarily belonging to $ B $, the \textsf{block-interval defined by the pair $ X,X' $}, denoted $ [X,X']_B $, is the following sequence of maximal homogeneous blocks:
	$$ [X,X']_B = \set{[i,j] \in B : [i,j] \subseteq [x,y']} . $$
\end{definition}

\begin{definition}\label{def:left_right_mapping}
	Let $ \sigma $ and $ \sigma' $ be a pair of configurations where $ \maj_r(\sigma) = \sigma' $.
	Given a maximal homogeneous block $ [i',j'] \in B(\sigma') $, let $ f^{\leftarrow}_{\sigma,\sigma'}([i',j']) $ be the maximal homogeneous block in $ B(\sigma) $ that contains the \ref{def:cell} $ i=j'-r $.
	Similarly, let $ f^{\rightarrow}_{\sigma,\sigma'}([i',j']) $ be the maximal homogeneous block in $ B(\sigma) $ that contains the \ref{def:cell} $ j=i'+r $.
	
	We refer to the function $ f^{\leftarrow}_{\sigma,\sigma'} $ as \textsf{the left mapping} from $ \sigma' $ to $ \sigma $, and, similarly, we refer to the function $ f^{\rightarrow}_{\sigma,\sigma'} $ as \textsf{the right mapping} from $ \sigma' $ to $ \sigma $.
\end{definition}

\begin{claim}\label{claim:mapping_preserves_values}
	For every pair of configurations $ \sigma $ and $ \sigma' $ satisfying $ \maj_r(\sigma) = \sigma' $, if $ [i',j'] \in B(\sigma') $, then the value of $ \sigma' $ at $ [i',j'] $ equals the value of $ \sigma $ at $ f^{\rightarrow}_{\sigma,\sigma'}([i',j']) $ as well as at $ f^{\leftarrow}_{\sigma,\sigma'}([i',j']) $.
\end{claim}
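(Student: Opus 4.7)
The plan is to apply the Switch Point Argument (\Cref{claim:switch_point_arg}) at the two boundaries of the block $[i',j']$, and then observe that the left and right mappings select blocks in $\sigma$ containing specific cells whose values are pinned down by that argument.

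Since $[i',j']\in B(\sigma')$, by \Cref{def:set_of_homogeneous_blocks} we have $|[i',j']|<n$ and $\sigma'(i'-1)=\sigma'(j'+1)=1-\beta$, where $\beta\in\{0,1\}$ is the common value of $\sigma'$ on $[i',j']$. Hence both pairs $(i'-1,i')$ and $(j',j'+1)$ are switch points in $\sigma'$. Applying \nameref{claim:switch_point_arg} to the switch point $(i'-1,i')$ in $\sigma'$ yields $\sigma(i'+r)=\sigma'(i')=\beta$, and applying it to the switch point $(j',j'+1)$ yields $\sigma(j'-r)=\sigma'(j')=\beta$.

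By \Cref{def:left_right_mapping}, $f^{\rightarrow}_{\sigma,\sigma'}([i',j'])$ is the maximal homogeneous block in $\sigma$ that contains the cell $i'+r$, and $f^{\leftarrow}_{\sigma,\sigma'}([i',j'])$ is the maximal homogeneous block in $\sigma$ that contains the cell $j'-r$. Since all cells in a maximal homogeneous block share a single value, the value of $\sigma$ at each of these two blocks is $\beta$, which equals the value of $\sigma'$ at $[i',j']$, as required.

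The main obstacle here is essentially absent: the substantive content of the claim is already packaged inside the opening sentences of the proof of \Cref{claim:wrapping_block_interval_size_is_odd}, and merely needs to be re-extracted and stated as a property of the images of the left and right mappings. The only delicate point is bookkeeping the subscript convention on $f^{\leftarrow}$ and $f^{\rightarrow}$ (and reading the statement consistently with the defining relation $\maj_r(\sigma)=\sigma'$), but no new mechanism beyond the Switch Point Argument is needed.
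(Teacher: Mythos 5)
Your proof is correct and follows essentially the same route as the paper: both boundary pairs of the block are switch points in $\sigma'$, the \nameref{claim:switch_point_arg} pins down $\sigma(i'+r)$ and $\sigma(j'-r)$, and the definition of the left and right mappings then gives the claim. You also handle the subscript convention sensibly, reading the statement consistently with $\maj_r(\sigma)=\sigma'$, which matches the paper's intent.
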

\begin{proof}
	By the definition of $ B(\sigma') $, the interval $ [i',j'] $ is a maximal homogeneous block.
	Therefore, the pair $ (i'-1,i') $ is a \nameref{def:switch_point} in $ \sigma' $, so by \nameref{claim:switch_point_arg}, $ \sigma(i'+r) = \sigma'(i') $.
	A similar argument holds for $ \sigma(j'-r) $, and the claim follows.
\end{proof}

\begin{claim}\label{claim:wrapping_block_interval_size_is_odd}
	For every pair of configurations $ \sigma $ and $ \sigma' $ satisfying $ \maj_r(\sigma) = \sigma' $ and a maximal homogeneous block $ [i',j'] \in B(\sigma') $, the number of homogeneous blocks in the block interval $ [f^{\leftarrow}_{\sigma,\sigma'}([i',j']), f^{\rightarrow}_{\sigma,\sigma'}([i',j'])]_{B(\sigma)} $ is odd.
\end{claim}
\begin{proof}
	By \Cref{claim:mapping_preserves_values}, the value of $ f^{\leftarrow}_{\sigma,\sigma'}([i,j]) $ equals the value of $ f^{\rightarrow}_{\sigma,\sigma'}([i',j']) $.
	Hence, since the values of \ref*{def:block}s in a configuration alternate, the number of \ref*{def:block}s in the block interval $ [f^{\leftarrow}_{\sigma,\sigma'}([i',j']), f^{\rightarrow}_{\sigma,\sigma'}([i',j'])]_{B(\sigma)} $ must be odd.
\end{proof}

\begin{claim}\label{claim:1to1}
	For every pair of configurations $ \sigma $ and $ \sigma' $ satisfying $ \maj_r(\sigma) = \sigma' $, each of the functions $ f^{\rightarrow}_{\sigma,\sigma'}([i',j']) $ and $ f^{\leftarrow}_{\sigma,\sigma'}([i',j']) $ is one-to-one.
\end{claim}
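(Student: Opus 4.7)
The plan is to argue by contradiction, using the Switch Point Argument (\Cref{claim:switch_point_arg}) to manufacture a value conflict inside the alleged common target block. I will prove the statement for $f^{\rightarrow}_{\sigma,\sigma'}$; the argument for $f^{\leftarrow}_{\sigma,\sigma'}$ will be completely symmetric, using right endpoints instead of left endpoints.

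Suppose toward contradiction that there are two distinct blocks $[i_1, j_1], [i_2, j_2] \in B(\sigma')$ with $f^{\rightarrow}_{\sigma,\sigma'}([i_1, j_1]) = f^{\rightarrow}_{\sigma,\sigma'}([i_2, j_2]) = [x, y] \in B(\sigma)$, so that both cells $i_1 + r$ and $i_2 + r$ lie in the homogeneous block $[x, y]$ of $\sigma$. The case where $\sigma$ is constant is degenerate (there is only one block, so the map is trivially one-to-one), so I may assume $[x, y]$ is a proper cell-interval; after relabeling if necessary, I may further assume that the forward cyclic arc from $i_1 + r$ to $i_2 + r$ is contained in $[x, y]$. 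I would then introduce the block $[i_3, j_3] \in B(\sigma')$ that starts at cell $j_1 + 1$. This block is either $[i_2, j_2]$ itself or lies strictly between $[i_1, j_1]$ and $[i_2, j_2]$, so $i_3 \in (i_1, i_2]$ in the forward cyclic order; moreover, because adjacent maximal homogeneous blocks of $\sigma'$ have opposite values, $\sigma'(i_3) \ne \sigma'(i_1)$.

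Applying the Switch Point Argument to the switch points $(i_1 - 1, i_1)$ and $(i_3 - 1, i_3)$ of $\sigma'$ yields $\sigma(i_1 + r) = \sigma'(i_1)$ and $\sigma(i_3 + r) = \sigma'(i_3)$, hence $\sigma(i_1 + r) \ne \sigma(i_3 + r)$. On the other hand, since the $+r$ shift preserves cyclic order and $i_3 \in (i_1, i_2]$, the cell $i_3 + r$ lies on the forward arc from $i_1 + r$ to $i_2 + r$, which is inside the homogeneous block $[x, y]$; therefore $\sigma(i_3 + r) = \sigma(i_1 + r)$, a contradiction. For $f^{\leftarrow}_{\sigma,\sigma'}$, the symmetric argument runs as follows: two distinct blocks mapping to the same $[x, y]$ would place both $j_1 - r$ and $j_2 - r$ in $[x, y]$; applying \Cref{claim:switch_point_arg} to the right-end switch points $(j_k, j_k + 1)$ gives $\sigma(j_k - r) = \sigma'(j_k)$, and taking $[i_3, j_3]$ to be the block ending at $i_2 - 1$ supplies a contradicting intermediate cell $j_3 - r$ inside $[x, y]$. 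The only subtlety I anticipate is the cyclic bookkeeping, namely verifying that the WLOG orientation is legitimate and that the intermediate cell really lies on the arc inside $[x, y]$; both points follow from $[x, y]$ being a contiguous cell-interval of length strictly less than $n$ and should be spelled out briefly but do not constitute a serious obstacle.
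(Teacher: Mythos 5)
Your proof is correct and takes essentially the same route as the paper's: argue by contradiction, locate a block of opposite value between the two alleged preimages, and use \nameref{claim:switch_point_arg} to show that its shifted endpoint cannot land inside the common homogeneous target block. The only cosmetic difference is that you get the opposite value from adjacency to $[i_1,j_1]$ (resp.\ $[i_2,j_2]$), whereas the paper first applies its value-preservation claim (\Cref{claim:mapping_preserves_values}) to both preimage blocks and then selects an intermediate block of the other value.
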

\begin{proof}
	We prove the claim for $ f^{\rightarrow}_{\sigma,\sigma'}([i',j']) $ and denote it by $ f $ for short.
	Let $ [i,j] \in B(\sigma) $ and suppose by way of contradiction that there are two different blocks $ [i',j'], [i'',j''] \in B(\sigma') $ s.t. $ f([i',j']) = f([i'',j'']) = [i,j] $.
	Let $ \beta $ be the value of $ \sigma $ at the block $ [i,j] $. 
	By \Cref{claim:mapping_preserves_values}, the value of $ \sigma' $ at both $ [i',j'] $ and $ [i'',j''] $ is also $ \beta $.
	By the definition of the mapping, the fact that $ f([i',j']) = f([i'',j'']) = [i,j] $ implies that both $ i' + r $ and $ i'' + r $ belong to $ [i,j] $.
	Hence, either $ [i' + r, i'' + r] \subseteq [i,j] $ or $ [i'' + r, i' + r] \subseteq [i,j] $.
	Assume without loss of generality that $ [i' + r, i'' + r] \subseteq [i,j] $.
	Hence, there must exist a maximal homogeneous block $ [i^*,j^*] $ in $ \sigma' $ where the value of $ \sigma' $ at $ [i^*,j^*] $ is $ \bar{\beta} $ and $ i^* \in [i',i''] $.
	Hence, it must hold that $ i^* + r \in [i' + r, i'' + r] \subseteq [i,j] $.
	That is, $ f([i^*,j^*]) = [i,j] $, in contradiction to \Cref{claim:mapping_preserves_values}.
\end{proof}

\begin{claim}\label{claim:number_of_blocks_is_non_decreasing}
	For every pair of configurations $ \sigma $ and $ \sigma' $ satisfying $ \maj_r(\sigma) = \sigma' $, it must be the case that $ |B(\sigma')| \le |B(\sigma)| $.
\end{claim}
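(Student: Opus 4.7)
The plan is to obtain this inequality as a near-immediate corollary of \Cref{claim:1to1}. By \Cref{def:left_right_mapping}, the right mapping $f^{\rightarrow}_{\sigma,\sigma'}$ is a well-defined function from $B(\sigma')$ to $B(\sigma)$: given any \ref{def:block} $[i,j] \in B(\sigma')$, the cell $i+r$ belongs to exactly one \ref{def:block} of $\sigma$, and $f^{\rightarrow}_{\sigma,\sigma'}([i,j])$ is by definition that \ref{def:block}. (The left mapping $f^{\leftarrow}_{\sigma,\sigma'}$ would serve equally well.)

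\Cref{claim:1to1} then states exactly that this function is one-to-one. Hence it is an injection between finite sets, which immediately yields
\[
b(\sigma') \;=\; |B(\sigma')| \;\le\; |B(\sigma)| \;=\; b(\sigma),
\]
as claimed.

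There is no genuine obstacle at this stage: the combinatorial content was established in \Cref{claim:mapping_preserves_values} and \Cref{claim:1to1}, which together ensure that distinct \ref{def:block}s in $\sigma'$ cannot be ``squeezed'' into a single \ref{def:block} of $\sigma$ (since any two \ref{def:block}s of $\sigma'$ mapped into the same \ref{def:block} of $\sigma$ would, by \Cref{claim:mapping_preserves_values}, force a third \ref{def:block} of the opposite value between them to also be mapped into the same \ref{def:block}, contradicting value preservation). Dynamically, the statement encodes the intuitive fact that a single application of $\maj_r$ can only merge consecutive \ref{def:block}s, never create new ones.
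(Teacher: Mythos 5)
Your proof is correct and matches the paper's own argument: both derive the inequality directly from \Cref{claim:1to1}, using the injectivity of the right (or left) mapping from $B(\sigma')$ to $B(\sigma)$ to conclude $b(\sigma') \le b(\sigma)$.
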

\begin{proof}
	By \Cref{claim:1to1}, there is a one-to-one mapping from $ B(\sigma') $ to $ B(\sigma) $.
	Hence, $ |B(\sigma')| \le |B(\sigma)| $.
\end{proof}

\begin{claim}\label{claim:temporally_periodic_intervals_have_constant_lengths}
	If $ \sigma $ and $ \sigma' $ constitute a \nameref{def:temporally_periodic_pair}, then for every pair of blocks $ [a,b] \in B(\sigma) $ and $ [c,d] \in B(\sigma) $, the number of blocks in the block interval $ [f^{\leftarrow}_{\sigma',\sigma}([a,b]), f^{\rightarrow}_{\sigma',\sigma}([a,b])]_{B(\sigma')} $ equals the number of blocks in $ [f^{\leftarrow}_{\sigma',\sigma}([c,d]), f^{\rightarrow}_{\sigma',\sigma}([c,d])]_{B(\sigma')} $.
	
	Moreover, for every pair of blocks $ [a,b] \in B(\sigma) $ and $ [a',b'] \in B(\sigma') $, the number of blocks in the block interval $ [f^{\leftarrow}_{\sigma,\sigma'}([a',b']), f^{\rightarrow}_{\sigma,\sigma'}([a',b'])]_{B(\sigma)} $ equals the number of blocks in $ [f^{\leftarrow}_{\sigma',\sigma}([a,b]), f^{\rightarrow}_{\sigma',\sigma}([a,b])]_{B(\sigma')} $.
\end{claim}
\begin{proof}
	Let $ k $ be the number of maximal homogeneous blocks in $ \sigma $.
	By \Cref{claim:1to1}, the functions $ f^{\leftarrow}_{\sigma,\sigma'} $ and $ f^{\rightarrow}_{\sigma,\sigma'} $ are one-to-one, and therefore the number of maximal homogeneous blocks in $ \sigma' $ is $ k $ as well.
	Denote by $ [a_1,b_1], \dots, [a_k,b_k] $ the sequence of maximal homogeneous blocks in $ \sigma $ starting from an arbitrary block $ [a_1,b_1] $ such that for each $ 1 \le i \le k-1 $, $ a_{i+1} = b_i + 1 $ (that is, the blocks $ [a_i,b_i] $ and $ [a_{i+1},b_{i+1}] $ are consecutive).
	
	We claim that for every $ 1 \le i \le k $, if $ f^{\leftarrow}_{\sigma',\sigma}([a_i,b_i]) = [a'_i,b'_i] $ and $ f^{\leftarrow}_{\sigma',\sigma}([a_{i+1},b_{i+1}]) = [a'_{i+1},b'_{i+1}] $, then $ a'_{i+1} = b'_i + 1 $ (that is, the blocks $ f^{\leftarrow}_{\sigma',\sigma}([a_i,b_i]) $ and $ f^{\leftarrow}_{\sigma',\sigma}([a_{i+1},b_{i+1}]) $ are consecutive).
	We also claim that for every $ 1 \le i \le k $, if $ f^{\rightarrow}_{\sigma',\sigma}([a_i,b_i]) = [a''_i,b''_i] $ and $ f^{\rightarrow}_{\sigma',\sigma}([a_{i+1},b_{i+1}]) = [a''_{i+1},b''_{i+1}] $, then $ a''_{i+1} = b''_i + 1 $.
	We prove the former (since the proof of the latter is analogous).
	
	Suppose towards a contradiction that for some $ 1 \le i \le k-1 $, $ a'_{i+1} \ne b'_i + 1 $.
	In that case, there must exist an integer $ 2 \le j \le k-i $ for which
	
	\begin{align*}
		f^{\leftarrow}_{\sigma,\sigma'}([a_{i+j},b_{i+j}]) \subseteq [a'_i,b'_{i+1}] .
	\end{align*}
	Since $ b_{i+j}-r \in f^{\leftarrow}_{\sigma,\sigma'}([a_{i+j},b_{i+j}]) $, this means that $ b_{i+j}-r \in [a'_i,b'_{i+1}] $, and hence $ b_{i+j}-r \in [b_i-r,b_{i+1}-r] $.
	Thus,
	
	\begin{align*}
		b_{i+j} \in [b_i,b_{i+1}] \subseteq [a_i,b_i] \cup [a_{i+1},b_{i+1}] ,
	\end{align*}
	in contradiction to $ j \ge 2 $ (in other words, in contradiction to $ [a_{i+j},b_{i+j}] $ being distinct from $ [a_i,b_i] $ and $ [a_{i+1},b_{i+1}] $).
	
	This establishes the claim that $ a'_{i+1} = b'_i + 1 $, and an analogous argument implies that $ a''_{i+1} = b''_i + 1 $ as well.
	Hence,
	
	\begin{align*}
		[f^{\leftarrow}_{\sigma,\sigma'}&([a_{i+1},b_{i+1}]), f^{\rightarrow}_{\sigma,\sigma'}([a_{i+1},b_{i+1}])]_{B(\sigma)} \\
		&= [f^{\leftarrow}_{\sigma,\sigma'}([a_{i},b_{i}]), f^{\rightarrow}_{\sigma,\sigma'}([a_{i},b_{i}])]_{B(\sigma)} \cup \set{[a''_i,b''_i]} \setminus \set{[a'_i,b'_i]} .
	\end{align*}
	Thus,
	
	\begin{align}\label{equation:consecutive_block_intervals_are_equal}
		|[f^{\leftarrow}_{\sigma,\sigma'}([a_{i+1},b_{i+1}]), f^{\rightarrow}_{\sigma,\sigma'}([a_{i+1},b_{i+1}])]_{B(\sigma)}| = |[f^{\leftarrow}_{\sigma,\sigma'}([a_{i},b_{i}]), f^{\rightarrow}_{\sigma,\sigma'}([a_{i},b_{i}])]_{B(\sigma)}| .
	\end{align}
	Since Equation~(\ref{equation:consecutive_block_intervals_are_equal}) holds for every $ 1 \le i \le k $, it must be the case that for every pair of blocks $ [a,b],[c,d] \in B(\sigma) $, the number of blocks in the block interval $ [f^{\leftarrow}_{\sigma,\sigma'}([a,b]), f^{\rightarrow}_{\sigma,\sigma'}([a,b])]_{B(\sigma)} $ equals the number of blocks in the block interval $ [f^{\leftarrow}_{\sigma,\sigma'}([c,d]), f^{\rightarrow}_{\sigma,\sigma'}([c,d])]_{B(\sigma)} $, which establishes the first part of the claim.
\end{proof}

\section{The \nameref*{def:alignment_mapping}}\label{sec:alignment_mapping}

\begin{definition}[alignment mapping]\label{def:alignment_mapping}
	Let $ \sigma $ and $ \sigma' $ be a pair of configurations satisfying $ \maj_r(\sigma) = \sigma' $.
	Given a maximal homogeneous block $ [i,j] \in B(\sigma') $, let $ \varphi_{\sigma,\sigma'}([i,j]) $ be the middle block in the block interval $ [f^{\leftarrow}_{\sigma,\sigma'}([i,j]), f^{\rightarrow}_{\sigma,\sigma'}([i,j])]_{B(\sigma)} $ (the middle block is well-defined, since, by \Cref{claim:wrapping_block_interval_size_is_odd}, the number of blocks in that interval is odd).
	
	We refer to the function $ \varphi_{\sigma,\sigma'} $ as the \nameref{def:alignment_mapping} from $ \sigma' $ to $ \sigma $.	
\end{definition}

\begin{claim}\label{claim:alignment_mapping_is_one_to_one}
	For every pair of configurations $ \sigma $ and $ \sigma' $ satisfying $ \maj_r(\sigma) = \sigma' $, the \nameref{def:alignment_mapping} $ \varphi_{\sigma,\sigma'} $ is one-to-one.
\end{claim}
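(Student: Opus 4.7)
The plan is to argue by contradiction, leveraging the fact that $f^{\leftarrow}$ and $f^{\rightarrow}$ are one-to-one (\Cref{claim:1to1}) together with a cyclic-arithmetic argument carried out in the universal cover. Suppose toward contradiction that there exist distinct blocks $[i_1, j_1], [i_2, j_2] \in B(\sigma')$ with $\varphi_{\sigma,\sigma'}([i_1, j_1]) = \varphi_{\sigma,\sigma'}([i_2, j_2]) = M$. By \Cref{claim:mapping_preserves_values}, all three blocks share a common value $\beta$, and so do $L_k = f^{\leftarrow}_{\sigma,\sigma'}([i_k, j_k])$ and $R_k = f^{\rightarrow}_{\sigma,\sigma'}([i_k, j_k])$ for $k \in \{1, 2\}$.

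I would then unwrap $\sigma$ and $\sigma'$ periodically onto $\mathbb{Z}$ and reason with integer-indexed blocks in this universal cover. WLOG orient so that $[i_2, j_2]$ sits immediately to the right of $[i_1, j_1]$ in the cyclic order, and choose integer lifts with $i_1 < j_1 < i_2 \le j_2 < i_1 + n$. Let $p(\cdot)$ denote the cover block-index of a $\sigma$-cover block, and set $B = b(\sigma)$. Lifting $L_k$ to be the cover block containing cell $j_k - r$ and $R_k$ the cover block containing $i_k + r$ (shifted forward by one period when $|[i_k, j_k]| > 2r + 1$, so that $p(R_k) \ge p(L_k)$), the inequalities $j_1 - r < j_2 - r$ and $i_1 + r < i_2 + r$ give $p(L_1) < p(L_2)$ and $p(R_1) < p(R_2)$. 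Combining this with the one-to-one property of $f^{\leftarrow}$ and $f^{\rightarrow}$ and with the fact that each relevant cell span is strictly less than $n$, one obtains
\[ p(L_2) - p(L_1),\; p(R_2) - p(R_1) \in \{1, 2, \dots, B - 1\}. \]

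Next, since $M$ is the middle block of each cyclic interval $[L_k, R_k]_{B(\sigma)}$, and since \Cref{claim:wrapping_block_interval_size_is_odd} tells us this interval has an odd number of blocks, the cover-lift $M^{(k)}$ of $M$ that sits inside $[p(L_k), p(R_k)]$ satisfies $p(M^{(k)}) = \tfrac{1}{2}(p(L_k) + p(R_k))$. Since $M^{(1)}$ and $M^{(2)}$ are both cover-lifts of the same cyclic block $M$, their block-index difference is an integer multiple of $B$, yielding
\[ (p(L_2) + p(R_2)) - (p(L_1) + p(R_1)) = 2 c B \]
for some $c \in \mathbb{Z}$. The left-hand side is strictly positive, forcing $c \ge 1$ and hence the LHS to be at least $2B$; yet the bounds from the previous step force it to be at most $2(B-1) < 2B$, the desired contradiction.

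The main obstacle I expect is the lifting bookkeeping in the second step, particularly when some $[i_k, j_k]$ is long (length exceeding $2r + 1$): in that case cell $i_k + r$ comes before cell $j_k - r$, so a one-period shift of $R_k$'s lift is needed to make $p(R_k) \ge p(L_k)$ and ensure the cover interval $[p(L_k), p(R_k)]$ faithfully represents $[L_k, R_k]_{B(\sigma)}$. Once the lifts are consistent, verifying the bounds on $p(L_2) - p(L_1)$ and $p(R_2) - p(R_1)$ is a routine counting of block boundaries (using that the cell span is less than $n$ and that equality with $B$ would force $L_1 = L_2$ or $R_1 = R_2$ cyclically, contradicting \Cref{claim:1to1}), and the arithmetic in the third step then closes the proof unchanged.
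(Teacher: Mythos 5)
Your route is genuinely different from the paper's (the paper argues that a common middle block forces one block interval to contain the other, and then derives contradictory cell-interval containments), and in the main case your argument is sound: when both blocks $[i_1,j_1]$ and $[i_2,j_2]$ have length at most $2r+1$, no period shifts are needed, the bounds $p(L_2)-p(L_1),\,p(R_2)-p(R_1)\in\{1,\dots,B-1\}$ do follow from \Cref{claim:1to1} together with the span-less-than-$n$ observation, and the average-of-indices identity plus the multiple-of-$2B$ constraint closes the proof. The same is true when both blocks are long, since then both $R$-lifts are shifted and the shifts cancel.

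The gap is precisely in the long-block bookkeeping you flagged, and the proposed fix does not repair it in the mixed case. Suppose exactly one of the two blocks has length exceeding $2r+1$, so exactly one of the $R$-lifts receives the $+B$ shift. Writing $q_k$ for the unshifted index of the cover block containing $i_k+r$, you still get $q_2-q_1\in\{1,\dots,B-1\}$, but $p(R_2)-p(R_1)$ becomes $q_2-q_1-B<0$ (when block $1$ is the long one) or $q_2-q_1+B>B-1$ (when block $2$ is). Then the final arithmetic no longer yields a contradiction: in the first case the identity $(p(L_2)+p(R_2))-(p(L_1)+p(R_1))=2cB$ is satisfiable with $c=0$, and in the second with $c=1$, both consistent with the two lifts of $M$ being the same or adjacent-period copies, so the mixed case is left open and needs a separate geometric argument (e.g., the containment reasoning the paper uses). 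There is also a smaller slip in the shift rule itself: if $|[i_k,j_k]|>2r+1$ but the cells $j_k-r$ and $i_k+r$ lie in the \emph{same} block of $\sigma$, the cell interval does not wrap, the block interval consists of that single block, and shifting $R_k$ by a period makes the middle-index formula $\tfrac{1}{2}(p(L_k)+p(R_k))$ point to the wrong block; the shift should be governed by whether the interval wraps, not by the block's length. Note that where the paper actually applies this claim all blocks have length at most $r$, so your short-block case would suffice for that application, but the claim as stated concerns arbitrary configuration pairs, so the mixed case must be handled.
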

\begin{proof}
	Suppose that, contrary to the claim, $ \varphi_{\sigma,\sigma'} $ is not one-to-one.
	That is, there are two distinct blocks $ [a,b], [c,d] \in B(\sigma') $ such that
	
	\begin{align*}
		\varphi_{\sigma,\sigma'}([a,b]) = \varphi_{\sigma,\sigma'}([c,d]) .
	\end{align*}
	That being the case, denote by $ I \in B(\sigma) $ the block satisfying $ I = \varphi_{\sigma,\sigma'}([a,b]) = \varphi_{\sigma,\sigma'}([c,d]) $.
	
	Since $ [a,b] $ and $ [c,d] $ are maximal homogeneous blocks, it is either the case that the intervals $ [b,c] $ and $ [a,d] $ satisfy $ [b,c] \subseteq [a,d] $ or that they satisfy $ [a,d] \subseteq [b,c] $.
	We assume, then, without loss of generality, that
	
	\begin{align}\label{equation:[b,c]_in_[a,d]}
		[b,c] \subseteq [a,d] .
	\end{align}
	By \Cref{claim:wrapping_block_interval_size_is_odd}, for every maximal homogeneous block $ [i,j] \in B(\sigma') $, the number of blocks in $ [f^{\leftarrow}_{\sigma,\sigma'}([i,j]), f^{\rightarrow}_{\sigma,\sigma'}([i,j])]_{B(\sigma)} $ is odd.
	Let $ \delta_{[a,b]} $ be the integer satisfying
	
	\begin{align*}
		|[f^{\leftarrow}_{\sigma,\sigma'}([a,b]), f^{\rightarrow}_{\sigma,\sigma'}([a,b])]_{B(\sigma)}| = 2\delta_{[a,b]} + 1 ,
	\end{align*}
	and, similarly, let $ \delta_{[c,d]} $ be the integer satisfying
	
	\begin{align*}
		|[f^{\leftarrow}_{\sigma,\sigma'}([c,d]), f^{\rightarrow}_{\sigma,\sigma'}([c,d])]_{B(\sigma)}| = 2\delta_{[c,d]} + 1 .
	\end{align*}
	
	By \Cref{def:alignment_mapping}, the block $ I $ defined above is the middle block of the block interval $ [f^{\leftarrow}_{\sigma,\sigma'}([a,b]), f^{\rightarrow}_{\sigma,\sigma'}([a,b])]_{B(\sigma)} $ as well as of the block interval $ [f^{\leftarrow}_{\sigma,\sigma'}([c,d]), f^{\rightarrow}_{\sigma,\sigma'}([c,d])]_{B(\sigma)} $.
	This means that the block $ f^{\leftarrow}_{\sigma,\sigma'}([a,b]) $ is located $ \delta_{[a,b]} $ blocks away from $ I $ to its left side and that the block $ f^{\rightarrow}_{\sigma,\sigma'}([a,b]) $ is located $ \delta_{[a,b]} $ blocks away from $ I $ to its right side.
	Similarly, the block $ f^{\leftarrow}_{\sigma,\sigma'}([c,d]) $ is located $ \delta_{[c,d]} $ blocks away from $ I $ to the left, and that the block $ f^{\rightarrow}_{\sigma,\sigma'}([c,d]) $ is located $ \delta_{[c,d]} $ blocks away from $ I $ to the right.
	
	Thus, if $ \delta_{[a,b]} \le \delta_{[c,d]} $, then
	
	\begin{align}\label{equation:f([a,b]_in_[c,d])}
		[f^{\leftarrow}_{\sigma,\sigma'}([a,b]), f^{\rightarrow}_{\sigma,\sigma'}([a,b])]_{B(\sigma)}
		\subseteq 
		[f^{\leftarrow}_{\sigma,\sigma'}([c,d]), f^{\rightarrow}_{\sigma,\sigma'}([c,d])]_{B(\sigma)} ,
	\end{align}
	and if $ \delta_{[a,b]} \ge \delta_{[c,d]} $, then
	
	\begin{align}\label{equation:f([c,d]_in_[a,b])}
		[f^{\leftarrow}_{\sigma,\sigma'}([c,d]), f^{\rightarrow}_{\sigma,\sigma'}([c,d])]_{B(\sigma)}
		\subseteq 
		[f^{\leftarrow}_{\sigma,\sigma'}([a,b]), f^{\rightarrow}_{\sigma,\sigma'}([a,b])]_{B(\sigma)} .
	\end{align}
	In the former case (Equation~(\ref{equation:f([a,b]_in_[c,d])})),
	
	\begin{align}\label{equation:f([a,b]_in_[c,d])_implication}
		[b-r,a+r] \subseteq [d-r,c+r] .
	\end{align}
	This is because, by \Cref{def:left_right_mapping},
	\begin{align*}
		b-r &\in f^{\leftarrow}_{\sigma,\sigma'}([a,b]) , \\
		a+r &\in f^{\rightarrow}_{\sigma,\sigma'}([a,b]) , \\
		d-r &\in f^{\leftarrow}_{\sigma,\sigma'}([c,d]) , \text{ and} \\
		c+r &\in f^{\rightarrow}_{\sigma,\sigma'}([c,d]) .
	\end{align*}
	For the same reason, in the latter case (Equation~(\ref{equation:f([c,d]_in_[a,b])})),
	
	\begin{align}\label{equation:f([c,d]_in_[a,b])_implication}
		[d-r,c+r] \subseteq [b-r,a+r] .
	\end{align}
	However, as we show next, both $ [b-r,a+r] \subseteq [d-r,c+r] $ (Equation~(\ref{equation:f([a,b]_in_[c,d])_implication})) and $ [d-r,c+r] \subseteq [b-r,a+r] $ (Equation~(\ref{equation:f([c,d]_in_[a,b])_implication})) are impossible given the assumption that $ [b,c] \subseteq [a,d] $ (Equation~(\ref{equation:[b,c]_in_[a,d]})).
	
	To see why $ [b-r,a+r] \subseteq [d-r,c+r] $ contradicts $ [b,c] \subseteq [a,d] $, observe that $ [b,c] \subseteq [a,d] $ implies $ [d-r,c+r] \subseteq [d-r,d+r] $, and since $ [d-r,d+r] = 2r+1 $, it must be the case that $ |[d-r,c+r]| < 2r $.
	
	Additionally, $ [b,c] \subseteq [a,d] $ also implies $ [b-r,b+r] \subseteq [b-r,c+r] $ and $ |[b-r,b+r]| = 2r+1 $, so $ |[b-r,c+r]| > 2r $.
	
	However, $ [b-r,a+r] \subseteq [d-r,c+r] $ implies that $ [b-r,c+r] \subseteq [d-r,c+r] $, so it cannot be the case that both $ |[d-r,c+r]| < 2r $ and $ |[b-r,c+r]| > 2r $.
	
	The case in which $ [d-r,c+r] \subseteq [b-r,a+r] $ can similarly be shown to contradict $ [b,c] \subseteq [a,d] $, so we reach a contradiction in either case.
\end{proof}

\begin{claim}\label{claim:adjacent_blocks_in_temporally_periodic_pairs_are_mapped_into_adjacent_blocks}
	Let $ \sigma $, $ \sigma' $ be a \nameref{def:temporally_periodic_pair} and let $ [a,b], [c,d] \in B(\sigma') $ be two adjacent blocks in which $ c=b+1 $.
	If $ \varphi_{\sigma,\sigma'}([a,b]) = [a',b'] $ and $ \varphi_{\sigma,\sigma'}([c,d]) = [c',d'] $, then the blocks $ [a',b'] $ and $ [c',d'] $ are also adjacent and $ c'=b'+1 $.
\end{claim}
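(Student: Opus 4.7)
The plan is to combine Claim~\ref{claim:temporally_periodic_intervals_have_constant_lengths} with Claim~\ref{claim:single_switch_point_at_edges_of_temporally_periodic_block} to pin down how the left and right endpoints of the block interval underlying $\varphi_{\sigma,\sigma'}$ shift when we move from $[a,b]$ to the adjacent block $[c,d]$. First, by Claim~\ref{claim:temporally_periodic_intervals_have_constant_lengths}, the two block intervals $[f^{\leftarrow}_{\sigma,\sigma'}([a,b]), f^{\rightarrow}_{\sigma,\sigma'}([a,b])]_{B(\sigma)}$ and $[f^{\leftarrow}_{\sigma,\sigma'}([c,d]), f^{\rightarrow}_{\sigma,\sigma'}([c,d])]_{B(\sigma)}$ contain the same number of blocks, and by Claim~\ref{claim:wrapping_block_interval_size_is_odd} this common count is odd; write it as $2\delta+1$. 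Then $[a',b']$ is the $(\delta+1)$-th block of the first interval (counted from the left endpoint), and likewise $[c',d']$ is the $(\delta+1)$-th block of the second.

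The crucial step is to show that $f^{\leftarrow}_{\sigma,\sigma'}([c,d])$ is the block immediately to the right of $f^{\leftarrow}_{\sigma,\sigma'}([a,b])$ in $\sigma$. By Definition~\ref{def:left_right_mapping}, $f^{\leftarrow}_{\sigma,\sigma'}([a,b])$ is the block of $\sigma$ containing the cell $b-r$, and $f^{\leftarrow}_{\sigma,\sigma'}([c,d])$ is the block of $\sigma$ containing the cell $d-r$. By Claim~\ref{claim:mapping_preserves_values}, these two blocks inherit the values of $[a,b]$ and $[c,d]$ respectively, which are opposite since adjacent maximal homogeneous blocks in $\sigma'$ must carry opposite values. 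Now apply Claim~\ref{claim:single_switch_point_at_edges_of_temporally_periodic_block} to $[c,d]\in B(\sigma')$: the interval $[c-r-1, d-r] = [b-r, d-r]$ contains exactly one switch point in $\sigma$. Having opposite-valued cells at its two endpoints, this single switch point must be the boundary between the block containing $b-r$ and the block containing $d-r$, so these two blocks are adjacent in $\sigma$. An entirely analogous argument, applying Claim~\ref{claim:single_switch_point_at_edges_of_temporally_periodic_block} to $[a,b]$ and using the interval $[a+r, b+r+1] = [a+r, c+r]$, shows that $f^{\rightarrow}_{\sigma,\sigma'}([c,d])$ is likewise the block immediately to the right of $f^{\rightarrow}_{\sigma,\sigma'}([a,b])$.

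Combining these facts, the second block interval is precisely the first shifted by one block to the right, so its $(\delta+1)$-th block lies exactly one block to the right of the $(\delta+1)$-th block of the first. That is, $[c', d']$ immediately follows $[a', b']$ in $\sigma$, yielding $c' = b'+1$, as required. The main obstacle, which is mild, is merely the bookkeeping needed to verify that the unique switch point in $[b-r, d-r]$ really does separate the two blocks of interest (rather than lying entirely within one of them); this is forced by the opposite-value conclusion drawn from Claim~\ref{claim:mapping_preserves_values}. No further case analysis is needed, since the identical shift on both the left and right endpoints together with the constant block-interval length immediately propagates to the middle block.
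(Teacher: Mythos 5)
Your proposal is correct and follows essentially the same route as the paper's proof: use the single-switch-point property (\Cref{claim:single_switch_point_at_edges_of_temporally_periodic_block}) together with the value-preservation of the left/right mappings to show that $ f^{\leftarrow} $ and $ f^{\rightarrow} $ each carry the adjacent pair $ [a,b],[c,d] $ to adjacent blocks of $ \sigma $, and then conclude that the middle blocks of the two block intervals are adjacent. The only cosmetic difference is that you make the last step explicit via the common count $ 2\delta+1 $ and the $ (\delta+1) $-th-block indexing (citing \Cref{claim:temporally_periodic_intervals_have_constant_lengths}, which is not even needed once both endpoints are known to shift by one block), whereas the paper leaves this bookkeeping implicit.
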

\begin{proof}
	Let $ \beta \in \set{0,1} $ be the value of the block $ [a,b] $ in $ \sigma' $.
	Since the block $ [c,d] $ is adjacent to the block $ [a,b] $, the value of the block $ [c,d] $ in $ \sigma' $ must be $ \bar{\beta} $.
	By \nameref{claim:switch_point_arg}, $ \sigma(a+r) = \beta $ and $ \sigma(c+r) = \bar{\beta} $, so the value of the block $ f^{\rightarrow}_{\sigma,\sigma'}([a,b]) $ in $ \sigma $ is $ \beta $ and the value of the block $ f^{\rightarrow}_{\sigma,\sigma'}([c,d]) $ in $ \sigma' $ is $ \bar{\beta} $.
	
	By \Cref{claim:single_switch_point_at_edges_of_temporally_periodic_block}, the interval $ [a+r,b+r+1] $ contains exactly one \nameref{def:switch_point} in $ \sigma $.
	Since $ c = b+r $, the interval $ [a+r,c+r] $ contains exactly one \nameref{def:switch_point} in $ \sigma $.
	As both $ f^{\rightarrow}_{\sigma,\sigma'}([a,b]) $ and $ f^{\rightarrow}_{\sigma,\sigma'}([c,d]) $ intersect with the interval $ [a+r,c+r] $, that single \nameref{def:switch_point} must be the $ \beta,\bar{\beta} $ \nameref{def:switch_point} between $ f^{\rightarrow}_{\sigma,\sigma'}([a,b]) $ and $ f^{\rightarrow}_{\sigma,\sigma'}([c,d]) $.
	Hence, the blocks $ f^{\rightarrow}_{\sigma,\sigma'}([a,b]) $ and $ f^{\rightarrow}_{\sigma,\sigma'}([c,d]) $ are adjacent with the block $ f^{\rightarrow}_{\sigma,\sigma'}([a,b]) $ preceding the block $ f^{\rightarrow}_{\sigma,\sigma'}([c,d]) $.
	
	By a similar argument, the blocks $ f^{\leftarrow}_{\sigma,\sigma'}([a,b]) $ and $ f^{\leftarrow}_{\sigma,\sigma'}([c,d]) $ are adjacent with the block $ f^{\leftarrow}_{\sigma,\sigma'}([a,b]) $ preceding the block $ f^{\leftarrow}_{\sigma,\sigma'}([c,d]) $.
	
	By the definition of the \nameref{def:alignment_mapping}, the block $ [a',b'] $ is the middle block of the block interval $ [f^{\leftarrow}_{\sigma,\sigma'}([a,b]), f^{\rightarrow}_{\sigma,\sigma'}([a,b])]_{B(\sigma)} $, and the block $ [c',d'] $ is the middle block of the block interval $ [f^{\leftarrow}_{\sigma,\sigma'}([c,d]), f^{\rightarrow}_{\sigma,\sigma'}([c,d])]_{B(\sigma)} $.
	
	Therefore, the blocks $ [a',b'] $ and $ [c',d'] $ must be adjacent to each other and it must also be the case that $ c'=b'+1 $, as claimed.
\end{proof}

\begin{definition}\label{def:iterative_application_of_alignment_mapping_in_spatially_periodic_pairs}
	Let $ \sigma $, $ \sigma' $ be a \nameref{def:temporally_periodic_pair}.
	Given an integer $ k $, we define $ \varphi_{\sigma,\sigma'}^k $ as follows.
	For every block $ [i,j] \in B(\sigma) $,
	\begin{enumerate}
		\item $ \varphi_{\sigma,\sigma'}^0([i,j]) = [i,j] $.
		\item $ \varphi_{\sigma,\sigma'}^1([i,j]) = \varphi_{\sigma,\sigma'}([i,j]) $.
		\item for $ k>1 $, if $ k $ is odd, then $ \varphi_{\sigma,\sigma'}^k([i,j]) = \varphi_{\sigma,\sigma'}(\varphi_{\sigma,\sigma'}^{k-1}([i,j])) $.
		\item for $ k>1 $, if $ k $ is even, then $ \varphi_{\sigma,\sigma'}^k([i,j]) = \varphi_{\sigma',\sigma}(\varphi_{\sigma,\sigma'}^{k-1}([i,j])) $.
	\end{enumerate}
\end{definition}

\begin{observation}\label{claim:alignment_mapping_preserves_value_in_weakly_stable_pairs}
	Let $ \sigma $, $ \sigma' $ be a \nameref{def:temporally_periodic_pair}.
	For every block $ [i,j] \in B(\sigma') $, $ \varphi_{\sigma,\sigma'}^2([i,j]) = [i,j] $.
\end{observation}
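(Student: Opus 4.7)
The plan is to show that $\varphi_{\sigma',\sigma}$ is the inverse of $\varphi_{\sigma,\sigma'}$ by first establishing that on a \nameref{def:temporally_periodic_pair}, the right mapping from $\sigma'$ to $\sigma$ and the left mapping from $\sigma$ to $\sigma'$ are mutually inverse (and the symmetric statement with ``left'' and ``right'' swapped), and then transporting the middle-block description of $\varphi$ through these inverse relations.

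First, I would prove the two identities
\begin{align*}
	f^{\leftarrow}_{\sigma',\sigma} \circ f^{\rightarrow}_{\sigma,\sigma'} = \id_{B(\sigma')} , \qquad f^{\rightarrow}_{\sigma',\sigma} \circ f^{\leftarrow}_{\sigma,\sigma'} = \id_{B(\sigma')} ,
\end{align*}
doing only the first (the second is symmetric). Fix $[i,j] \in B(\sigma')$ of value $\beta \in \set{0,1}$ and let $[p,q] = f^{\rightarrow}_{\sigma,\sigma'}([i,j])$. I need $q-r \in [i,j]$, i.e., $i+r \le q \le j+r$. The lower bound is immediate since $[p,q]$ contains $i+r$. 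For the upper bound, \Cref{claim:single_switch_point_at_edges_of_temporally_periodic_block} guarantees exactly one \nameref{def:switch_point} of $\sigma$ in $[i+r,j+r+1]$, while \nameref{claim:switch_point_arg} gives $\sigma(i+r)=\beta$ and $\sigma(j+r+1)=\bar\beta$. Hence that switch point lies at some $\ell \in [i+r,j+r]$, and the maximal block of value $\beta$ containing $i+r$ must end precisely at $\ell$, giving $q=\ell \le j+r$.

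With these identities at hand, the conclusion follows by a direct index calculation. Let $[p,q]=\varphi_{\sigma,\sigma'}([i,j])$. By \Cref{def:alignment_mapping} and \Cref{claim:wrapping_block_interval_size_is_odd}, $[p,q]$ sits exactly $\delta$ blocks to the right of $f^{\leftarrow}_{\sigma,\sigma'}([i,j])$ and exactly $\delta$ blocks to the left of $f^{\rightarrow}_{\sigma,\sigma'}([i,j])$ in $B(\sigma)$, where $2\delta+1$ is the length of the corresponding block interval. The proof of \Cref{claim:adjacent_blocks_in_temporally_periodic_pairs_are_mapped_into_adjacent_blocks}, applied both to $(\sigma,\sigma')$ and to the swapped pair $(\sigma',\sigma)$, shows that all four of $f^{\leftarrow}_{\sigma,\sigma'}, f^{\rightarrow}_{\sigma,\sigma'}, f^{\leftarrow}_{\sigma',\sigma}, f^{\rightarrow}_{\sigma',\sigma}$ preserve adjacency, and hence each acts as a cyclic shift between the (equinumerous, by \Cref{claim:1to1} and temporal periodicity) block sequences. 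Transporting the ``$\delta$-block offset'' through these shifts and invoking the two identities above yields that $f^{\rightarrow}_{\sigma',\sigma}([p,q])$ lies $\delta$ blocks to the right of $[i,j]$ and $f^{\leftarrow}_{\sigma',\sigma}([p,q])$ lies $\delta$ blocks to the left of $[i,j]$ in $B(\sigma')$. Therefore $[i,j]$ is the middle block of $[f^{\leftarrow}_{\sigma',\sigma}([p,q]),f^{\rightarrow}_{\sigma',\sigma}([p,q])]_{B(\sigma')}$, i.e., $\varphi_{\sigma',\sigma}([p,q])=[i,j]$, which is exactly $\varphi_{\sigma,\sigma'}^2([i,j])=[i,j]$.

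The main obstacle I anticipate is the bookkeeping in the second step: to rigorously transport the ``$\delta$-block offset'' through the cyclic shifts, one needs that $\delta$ is well-defined and independent of the chosen block (\Cref{claim:temporally_periodic_intervals_have_constant_lengths}) and that the cardinalities of $B(\sigma)$ and $B(\sigma')$ coincide so that the one-to-one maps of \Cref{claim:1to1} are in fact bijections. The true combinatorial core of the argument, though, is the single inequality $q\le j+r$, which is essentially a direct consequence of temporal periodicity forcing a unique \nameref{def:switch_point} within each length-$(2r+2)$ window.
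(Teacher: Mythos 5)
Your proposal is correct. Note, however, that the paper states this claim as an \emph{observation} and supplies no proof of its own, so there is nothing to match your argument against; what you have done is fill in the justification the authors treat as immediate (it is the formal counterpart of the ``identity preservation'' property of the forward alignment mapping sketched in \Cref{sec:high_level_ideas}). Your two key identities $f^{\leftarrow}_{\sigma',\sigma}\circ f^{\rightarrow}_{\sigma,\sigma'}=\id_{B(\sigma')}$ and $f^{\rightarrow}_{\sigma',\sigma}\circ f^{\leftarrow}_{\sigma,\sigma'}=\id_{B(\sigma')}$ are established correctly: for $[p,q]=f^{\rightarrow}_{\sigma,\sigma'}([i,j])$ the bound $q\le j+r$ does follow from \Cref{claim:single_switch_point_at_edges_of_temporally_periodic_block} together with \nameref{claim:switch_point_arg} giving $\sigma(i+r)=\beta$ and $\sigma(j+r+1)=\bar\beta$, so the unique switch point in $[i+r,j+r+1]$ is exactly the right end of the block containing $i+r$; the symmetric identity is analogous. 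The transport step is also sound: adjacency preservation (the content of \Cref{claim:adjacent_blocks_in_temporally_periodic_pairs_are_mapped_into_adjacent_blocks}, applicable to the swapped pair since $(\sigma',\sigma)$ is again a \nameref{def:temporally_periodic_pair}), together with equal block counts from \Cref{claim:1to1} applied in both directions, makes each of the four maps a cyclic shift, so the $\delta$-block offsets defining the middle block are carried from $B(\sigma)$ to $B(\sigma')$ and land the image back on $[i,j]$; you do not even need \Cref{claim:temporally_periodic_intervals_have_constant_lengths} for this, since the relevant block interval on the $\sigma'$ side is exhibited directly as the $2\delta+1$ blocks centered at $[i,j]$. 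In short, the argument is complete and consistent with the paper's framework, and it would serve as a legitimate proof of the observation the paper leaves unproved.
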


\begin{definition}\label{def:iterative_application_of_alignment_mapping_in_general}
	Let $ \sigma_0 : \cycnums{n} \to \bitset $ be any initial configuration, and for any integer $ t \ge 0 $, let $ \sigma_t = \maj_r^t(\sigma_0) $.
	Given a time step $ t \ge 1 $, we define $ \varphi_t $ as $ \varphi_{\sigma_{t-1},\sigma_t}([i,j]) $ for every block $ [i,j] \in B(\sigma_t) $.
	Given an integer $ k $ and a time step $ t $ s.t. $ t \ge k $, we define $ \varphi_{t}^k $ as follows.
	For every block $ [i,j] \in B(\sigma_t) $:
	\begin{enumerate}
		\item $ \varphi_{t}^0([i,j]) = [i,j] $.
		\item for $ k>1 $, $ \varphi_{t}^k([i,j]) = \varphi_{t-k+1}(\varphi_{t}^{k-1}([i,j])) $.
	\end{enumerate}
\end{definition}

\section{Block lengths in \nameref*{def:temporally_periodic} configurations}\label{sec:block_lengths}

\begin{claim}\label{claim:block_length_for_temporally_periodic_configurations}
	Let $ \sigma $, $ \sigma' $ be a \nameref{def:temporally_periodic_pair} of \nameref{def:weakly_stable} configurations and let $ \beta \in \set{0,1} $. 
	For every block $ [i,j] \in B^{\beta}(\sigma') $,
	\begin{align}\label{[i,j]_length_temporally_periodic}
		|[i,j]| = \sum_{[i',j']\in A^{\beta}} |[i',j']| - \sum_{[i',j']\in A^{\bar{\beta}}} |[i',j']| ,
	\end{align}
	where 
	$$ A^{\beta} = [f^{\leftarrow}_{\sigma,\sigma'}([i,j]), f^{\rightarrow}_{\sigma,\sigma'}([i,j])]_{B^{\beta}(\sigma)} , $$
	and, similarly,
	$$ A^{\bar{\beta}} = [f^{\leftarrow}_{\sigma,\sigma'}([i,j]), f^{\rightarrow}_{\sigma,\sigma'}([i,j])]_{B^{\bar{\beta}}(\sigma)} . $$
\end{claim}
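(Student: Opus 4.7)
The plan is to reduce the claim to \Cref{claim:block_length}. Since $\sigma$ and $\sigma'$ are \nameref{def:weakly_stable}, by \Cref{claim:weakly_stable_configurations_have_no_strongly_stable_cells} every \ref*{def:block} has length at most $r$, so in particular $|[i,j]| \le r \le 2r+1$, and \Cref{claim:block_length} yields $|[i,j]| = \#_\beta(\sigma[i-r, j+r]) - \#_{\bar{\beta}}(\sigma[j-r, i+r])$. Write $[a,b] = f^{\leftarrow}_{\sigma,\sigma'}([i,j])$ and $[c,d] = f^{\rightarrow}_{\sigma,\sigma'}([i,j])$. By \Cref{claim:mapping_preserves_values}, both of these are $\beta$-\ref*{def:block}s in $\sigma$. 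The \ref*{def:block}s of $A^\beta \cup A^{\bar{\beta}}$ together partition the \nameref*{def:cell_interval} $[a,d]$, so the right-hand side of the claimed identity equals $\#_\beta(\sigma[a,d]) - \#_{\bar{\beta}}(\sigma[a,d])$. It therefore suffices to prove $\#_\beta(\sigma[i-r, j+r]) = \#_\beta(\sigma[a,d])$ and $\#_{\bar{\beta}}(\sigma[j-r, i+r]) = \#_{\bar{\beta}}(\sigma[a,d])$.

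The heart of the argument is a careful description of $\sigma$ on the regions that lie inside $[i-r, j+r]$ but outside $[j-r, i+r]$. By \Cref{claim:single_switch_point_at_edges_of_temporally_periodic_block}, each of the \nameref*{def:cell_interval}s $[i-r-1, j-r]$ and $[i+r, j+r+1]$ contains exactly one \nameref{def:switch_point} in $\sigma$. Combining this with \nameref{claim:switch_point_arg} applied at the $\sigma'$-\nameref*{def:switch_point}s $(i-1, i)$ and $(j, j+1)$, which pin down $\sigma(i-r-1) = \sigma(j+r+1) = \bar{\beta}$ and $\sigma(j-r) = \sigma(i+r) = \beta$, forces the unique switch in $[i-r-1, j-r]$ to occur exactly at the left boundary of the \ref*{def:block} $[a,b]$, so $[i-r, a-1]$ is entirely $\bar{\beta}$ in $\sigma$ while $[a, j-r]$ is entirely $\beta$; symmetrically, $[i+r, d]$ is entirely $\beta$ and $[d+1, j+r]$ is entirely $\bar{\beta}$. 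In particular $a \ge i-r$ and $d \le j+r$, so $[a,d] \subseteq [i-r, j+r]$ and $[j-r, i+r] \subseteq [a, d]$.

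From here both desired equalities are immediate. The cells of $[i-r, j+r] \setminus [a, d]$ are confined to $[i-r, a-1] \cup [d+1, j+r]$, which is entirely $\bar{\beta}$, giving the first equality. The cells of $[a, d] \setminus [j-r, i+r]$ are confined to $[a, j-r-1] \cup [i+r+1, d]$, which lies inside the $\beta$-\ref*{def:block}s $[a,b]$ and $[c,d]$ respectively, giving the second. Substituting into the identity from \Cref{claim:block_length} and then expressing the character counts on $[a,d]$ as the sum of \ref*{def:block}-lengths in $A^\beta$ minus the sum of \ref*{def:block}-lengths in $A^{\bar{\beta}}$ completes the argument. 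I expect the main obstacle to be the endpoint bookkeeping in the middle step — verifying that $a$ and $d$ lie within $[i-r, j+r]$ and that the split of $[a,d]$ into a central $[j-r, i+r]$ plus two $\beta$-pieces is correct even when some of those pieces are empty — which is handled cleanly by the unique-\nameref*{def:switch_point} description above.
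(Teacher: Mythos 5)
Your proposal is correct and follows essentially the same route as the paper's proof: reduce to \Cref{claim:block_length} via the short-block property, then use \nameref{claim:switch_point_arg} together with \Cref{claim:single_switch_point_at_edges_of_temporally_periodic_block} to show that the interval spanned by $[f^{\leftarrow}_{\sigma,\sigma'}([i,j]), f^{\rightarrow}_{\sigma,\sigma'}([i,j])]$ differs from $[i-r,j+r]$ only by $\bar{\beta}$-cells and from $[j-r,i+r]$ only by $\beta$-cells. The paper packages this as two set equalities ($X^{\beta}=Y^{\beta}$, $X^{\bar{\beta}}=Y^{\bar{\beta}}$) rather than your monochromatic-difference-region phrasing, but the underlying argument is the same.
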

\begin{proof}
	Since $ \sigma' $ is a \nameref{def:temporally_periodic} \nameref{def:weakly_stable} configuration, by \Cref{claim:temporally_periodic_blocks_are_short}, $ |[i,j]| \le r < 2r + 1 $. 
	Therefore, the conditions for applying \Cref{claim:block_length} hold.
	So, by \Cref{claim:block_length},
	
	\begin{align}\label{equation:[i,j]_length}
		|[i,j]| = \#_\beta(\sigma[i-r,j+r]) - \#_{\bar{\beta}}(\sigma[j-r,i+r]) .
	\end{align}
	
	To prove the claim, we relate Equation~(\ref{equation:[i,j]_length}) to Equation~(\ref{[i,j]_length_temporally_periodic}) by defining four sets of \ref{def:cell}s:
	
	\begin{align*}
		X^{\beta} &= \set{\ell \in [i',j'] \;:\; [i',j'] \in A^{\beta}}, \\
		X^{\bar{\beta}} &= \set{\ell \in [i',j'] \;:\; [i',j'] \in A^{\bar{\beta}}}, \\
		Y^{\beta} &= \set{\ell \in [i-r, j+r]| \sigma(\ell) = \beta}, \\
		Y^{\bar{\beta}} &= \set{\ell \in [j-r, i+r]| \sigma(\ell) = \bar{\beta}} .
	\end{align*}
	In order to prove the claim, it is sufficient to show that
	
	\begin{align*}
		X^{\beta} &= Y^{\beta}, \\
		X^{\bar{\beta}} &= Y^{\bar{\beta}}.
	\end{align*}
	
	We first show that $ X^{\bar{\beta}} = Y^{\bar{\beta}} $.
	
	Let $ i^* $ be the leftmost \ref{def:cell} in the block $ f^{\leftarrow}_{\sigma,\sigma'}([i,j]) $ and let $ j^* $ be the rightmost \ref{def:cell} in the block $ f^{\rightarrow}_{\sigma,\sigma'}([i,j]) $.
	By the definition of $ A^{\bar{\beta}} $, 
	
	\begin{align*}
		X^{\bar{\beta}} = \set{\ell \in [i^*,j^*] \;:\; \sigma(\ell) = \bar{\beta}}.
	\end{align*}
	Since $ [i,j] \in B^{\beta}(\sigma') $, each of the pairs $ (i-1,i) $ and $ (j,j+1) $ is a \nameref{def:switch_point} in $ \sigma' $, so by \nameref{claim:switch_point_arg}, $ \sigma(j-r) = \sigma(i+r) = \beta $.
	As $ f^{\leftarrow}_{\sigma,\sigma'}([i,j]) $ and $ f^{\rightarrow}_{\sigma,\sigma'}([i,j]) $ are by definition the two maximal homogeneous blocks in $ \sigma $ that contain the \ref{def:cell}s $ j-r $ and $ i+r $ respectively, it must be the case that for each \ref{def:cell} $ \ell \in [i^*,j-r] \cup [i+r,j^*] $, $ \sigma(\ell) = \beta $.
	Hence, if a \ref{def:cell} $ \ell \in [i^*,j^*] $ satisfies $ \sigma(\ell) = \bar{\beta} $, then $ \ell \in [j-r+1,i+r-1] \subseteq [j-r,i+r] $.
	That is,
	
	\begin{align*}
		X^{\bar{\beta}} = \set{\ell \in [j-r, i+r]| \sigma(\ell) = \bar{\beta}} = Y^{\bar{\beta}} .
	\end{align*}
	
	We now show that $ X^{\beta} = Y^{\beta} $.
	
	Recall that $ i^* $ is the leftmost \ref{def:cell} in the block $ f^{\leftarrow}_{\sigma,\sigma'}([i,j]) $ and that $ j^* $ is the rightmost \ref{def:cell} in the block $ f^{\rightarrow}_{\sigma,\sigma'}([i,j]) $, which means that
	$ X^{\beta} = \set{\ell \in [i^*,j^*] \;:\; \sigma(\ell) = \beta} $.
	
	Since $ (i-1,i) $ is a \nameref{def:switch_point} in $ \sigma' $, by \nameref{claim:switch_point_arg}, $ \sigma(i-r-1) = \bar{\beta} $.
	This implies that $ i^* \in [i-r,j-r] $.
	As $ (i^*-1,i^*) $ is a \nameref{def:switch_point} in $ \sigma $ of type $ (\bar{\beta}, \beta) $, by \Cref{claim:single_switch_point_at_edges_of_temporally_periodic_block}, there is exactly one \nameref{def:switch_point} in $ [i-r-1,j-r] $, and therefore it must be the case that
	
	\begin{align}\label{equation:shrink_left_interval}
		\set{\ell \in [i-r,j-r] \;:\; \sigma(\ell) = \beta} = [i^*,j-r] .
	\end{align}
	
	Similarly, since $ (j,j+1) $ is a \nameref{def:switch_point} in $ \sigma' $, by \nameref{claim:switch_point_arg}, $ \sigma(j+r+1) = \bar{\beta} $, which implies that $ j^* \in [i+r,j+r] $.
	Because $ (i^*-1,i^*) $ is a \nameref{def:switch_point} in $ \sigma $ of type $ (\bar{\beta}, \beta) $, by \Cref{claim:single_switch_point_at_edges_of_temporally_periodic_block}, there is exactly one \nameref{def:switch_point} in $ [i+r,j+r+1] $, and therefore it must be the case that
	
	\begin{align}
		\set{\ell \in [i+r,j+r] \;:\; \sigma(\ell) = \beta} = [i+r,j^*] .
	\end{align}
	Consequently,
	
	\begin{align}\label{equation:shrink_right_interval}
		Y^{\beta} 
		&= \set{\ell \in [i-r, j+r]| \sigma(\ell) = \beta} \\
		&= \set{\ell \in [i^*,j^*] \;:\; \sigma(\ell) = \beta} \label{equation:shrink_interval_both_sides} \\
		&= X^{\beta},
	\end{align}
	where Equation~(\ref{equation:shrink_interval_both_sides}) follows from equations (\ref{equation:shrink_left_interval}) and (\ref{equation:shrink_right_interval}) together with the observations that $ i^* \in [i-r,j-r] $ and $ j^* \in [i+r,j+r] $.
	
	We've shown that $ X^{\beta} = Y^{\beta} $ and $ X^{\bar{\beta}} = Y^{\bar{\beta}} $, so the claim follows.
\end{proof}

\section{The \nameref*{def:block_length_vector}s of \nameref*{def:temporally_periodic_pair}s}\label{sec:block_length_vectors}

\begin{definition}[block-length vector]\label{def:block_length_vector}
	Given a configuration $ \sigma : \cycnums{n} \to \bitset $, we define its \nameref{def:block_length_vector} $ \vec{v}: \cycnums{|B(\sigma)|} \to \mathbb{N} $ as the cyclic sequence of the lengths of the configuration's maximal homogeneous blocks.
	
	That is, $ \vec{v}(0) = |[i,j]| $ for an arbitrary block $ [i,j] \in B(\sigma) $, and for each $ k \in \cycnums{|B(\sigma)|} $, if $ [a,b] \in B(\sigma) $ is the block for which $ \vec{v}(k-1) = |[a,b]| $ and $ [c,d] \in B(\sigma) $ is the block satisfying $ c=b+1 $, then $ \vec{v}(k) = |[c,d]| $.
\end{definition}

We note that every possible \nameref{def:block_length_vector}, viewed as a ring of integers, corresponds to at most two configurations (up to cyclic shifts): one where the blocks at the odd positions have a value of 0 and one where the blocks at the even positions have a value of 0.\footnote{
	These two possible configurations collapse into one in the case in which the \nameref{def:block_length_vector} $ \vec{v} $ equals the concatenation of some \nameref{def:block_length_vector} $ \vec{u} $ to itself (i.e, $ \vec{v} = \vec{u}\vec{u} $), where $ \vec{u} $ is of odd length (for instance, the \nameref{def:block_length_vector} $ 123123 $ corresponds to exactly one configuration, up to a cyclic shift, as the configuration resulting from assigning the value 0 to the first block is the same configuration resulting from assigning the value 1 to the first block).
}
So when we say that a \nameref{def:block_length_vector} corresponds to a configuration $ \sigma $, it means that the configuration $ \sigma $ can be either of the at most two possibilities.

\begin{definition}\label{def:length_of_block_length_vector}
	We define the length of a \nameref{def:block_length_vector} $ \vec{v} $, denoted by $ |\vec{v}| $, as the number of entries in the vector.
	That is, if $ \vec{v} $ is the \nameref{def:block_length_vector} that corresponds to the configuration $ \sigma $, then $ |\vec{v}| = |B(\sigma)| $.
\end{definition}

\begin{claim}\label{claim:temporally_periodic_vectors_have_same_length}
	If $ \vec{v} $ and $ \vec{v'} $ are a pair of \nameref{def:block_length_vector}s corresponding to a \nameref{def:temporally_periodic_pair}, then $ |\vec{v}| = |\vec{v'}| $.
\end{claim}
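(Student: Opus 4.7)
The plan is to deduce this directly from the monotonicity of the number of blocks under $\maj_r$, together with the symmetry of the temporally periodic pair relation.

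First, recall from \Cref{def:length_of_block_length_vector} that $|\vec{v}| = |B(\sigma)| = b(\sigma)$ and $|\vec{v'}| = |B(\sigma')| = b(\sigma')$, where $\sigma, \sigma'$ is the temporally periodic pair to which $\vec{v}, \vec{v'}$ correspond. So it suffices to show $b(\sigma) = b(\sigma')$.

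By \Cref{def:temporally_periodic_pair}, we have both $\maj_r(\sigma) = \sigma'$ and $\maj_r(\sigma') = \sigma$. Applying \Cref{claim:number_of_blocks_is_non_decreasing} to the first gives $b(\sigma') \le b(\sigma)$, and applying it to the second gives $b(\sigma) \le b(\sigma')$. Combining the two inequalities yields $b(\sigma) = b(\sigma')$, and hence $|\vec{v}| = |\vec{v'}|$, as required. There is no real obstacle here: the argument is a direct consequence of the one-to-one property of $f^{\leftarrow}$ (or $f^{\rightarrow}$) established in \Cref{claim:1to1}, used once in each direction of the 2-cycle.
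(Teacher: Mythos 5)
Your proof is correct and follows essentially the same route as the paper's: both apply \Cref{claim:number_of_blocks_is_non_decreasing} once in each direction of the \nameref{def:temporally_periodic_pair} to obtain $b(\sigma') \le b(\sigma)$ and $b(\sigma) \le b(\sigma')$, and conclude equality.
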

\begin{proof}
	Let $ \sigma $ be a configuration that corresponds to the \nameref{def:block_length_vector} $ \vec{v} $ and let $ \sigma'=\maj_r(\sigma) $.
	Clearly, $ \vec{v'} $ is the the \nameref{def:block_length_vector} of the configuration $ \sigma' $, and the pair $ \sigma, \sigma' $ is a \nameref{def:temporally_periodic_pair}.
	
	Since $ \sigma'=\maj_r(\sigma) $, by \Cref{claim:number_of_blocks_is_non_decreasing}, $ |B(\sigma')| \le |B(\sigma)| $.
	Since $ \sigma, \sigma' $ are a \nameref{def:temporally_periodic_pair}, it is also the case that $ \sigma=\maj_r(\sigma') $, so again by \Cref{claim:number_of_blocks_is_non_decreasing}, $ |B(\sigma)| \le |B(\sigma')| $.
	That is, $ |B(\sigma)| = |B(\sigma')| $ and the claim follows.
\end{proof}

\begin{definition}[aligned]\label{def:aligned}
	Let $ \vec{v} $ and $ \vec{v'} $ be a pair of \nameref{def:block_length_vector}s of length $ k $ each corresponding to a \nameref{def:temporally_periodic_pair} $ \sigma $ and $ \sigma' $.
	For each $ i \in \cycnums{k} $, let $ I_i $ the \ref*{def:block} in $ \sigma $ that corresponds to $ \vec{v}_i $ and let $ I'_i $ be the \ref*{def:block} in $ \sigma' $ that corresponds to $ \vec{v'}_i $.
	We say that the pair $ \vec{v} $ and $ \vec{v'} $ are \nameref{def:aligned} if for each $ i \in \cycnums{k} $, $ \varphi_{\sigma,\sigma'}(I'_i) = I_i $ (where $ \varphi_{\sigma,\sigma'} $ is the \nameref{def:alignment_mapping} from $ \sigma' $ to $ \sigma $).
\end{definition}

\section{The \nameref*{def:horizon} of \nameref*{def:block_length_vector}s}\label{sec:horizon}

\begin{definition}[horizon]\label{def:horizon}
	Let $ \vec{v} $ and $ \vec{v'} $ be a pair of \nameref{def:block_length_vector}s of lengths $ k $ and $ k' $ each, corresponding to a pair of configurations $ \sigma $ and $ \sigma' $ satisfying $ \maj_r(\sigma) = \sigma' $.
	For each entry $ i \in \cycnums{k'} $ of $ \vec{v'} $, we define the \nameref{def:horizon} of $ i $ in $ \vec{v'} $, denoted by $ \delta_{\vec{v'}}(i) $, as follows:
	$ \delta_{\vec{v'}}(i) $ is the value that satisfies
	
	\begin{align*}
		|[f^{\leftarrow}_{\sigma,\sigma'}([a,b]), f^{\rightarrow}_{\sigma,\sigma'}([a,b])]_{B(\sigma)}| = 2\delta_{\vec{v'}}(i) + 1 ,
	\end{align*}
	where $ [a,b] \in B(\sigma') $ is the block in $ \sigma' $ that corresponds to entry $ i $ of $ \vec{v'} $.\footnote{
		The \nameref{def:horizon} $ \delta_{\vec{v'}}(i) $ is well defined because, by \Cref{claim:wrapping_block_interval_size_is_odd}, the number of blocks in $ [f^{\leftarrow}_{\sigma,\sigma'}([a,b]), f^{\rightarrow}_{\sigma,\sigma'}([a,b])]_{B(\sigma)} $ is odd.
	}
\end{definition}

\begin{observation}\label{claim:horizon_is_at_most_r}
	If $ \vec{v} $ is the \nameref{def:block_length_vector} corresponding to a \nameref{def:temporally_periodic} configuration $ \sigma $, then for every $ i \in \cycnums{k} $, where $ k $ is the length of $ \vec{v} $, the \nameref{def:horizon} $ \delta_{\vec{v}} $ satisfies $ \delta_{\vec{v}}(i) \le r $.
\end{observation}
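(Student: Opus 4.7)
The plan is to case-split on the dichotomy provided by \Cref{claim:temporally_periodic_blocks_are_short}: since $\sigma$ is temporally periodic, either every block in $B(\sigma)$ has length at most $r$, or every block has length at least $r+1$. Fix a configuration $\tau$ satisfying $\maj_r(\tau) = \sigma$; such a $\tau$ exists because $\sigma$ is temporally periodic (take $\tau = \sigma$ if $\sigma$ is a fixed-point, and $\tau = \maj_r(\sigma)$ otherwise). For each block $[a,b] \in B(\sigma)$ corresponding to entry $i$ of $\vec{v}$, the horizon $\delta_{\vec{v}}(i)$ is determined by the size of the block interval $[f^{\leftarrow}_{\tau,\sigma}([a,b]), f^{\rightarrow}_{\tau,\sigma}([a,b])]_{B(\tau)}$, so the goal is to show this size is at most $2r+1$.

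In the first case (every block in $\sigma$ has length at least $r+1$), the configuration $\sigma$ is strongly stable, so by \Cref{claim:strongly_stable_configurations_are_fixed_points} it is a fixed-point, giving $\tau = \sigma$. Since $b - a \ge r$, both cells $b - r$ and $a + r$ lie inside $[a,b]$, and hence $f^{\leftarrow}_{\tau,\sigma}([a,b]) = f^{\rightarrow}_{\tau,\sigma}([a,b]) = [a,b]$. The block interval then reduces to the singleton consisting of $[a,b]$, so $2\delta_{\vec{v}}(i) + 1 = 1$, giving $\delta_{\vec{v}}(i) = 0 \le r$.

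In the second case (every block in $\sigma$ has length at most $r$), the key observation is that any block of $B(\tau)$ lying strictly between $f^{\leftarrow}_{\tau,\sigma}([a,b])$ and $f^{\rightarrow}_{\tau,\sigma}([a,b])$ must be contained in the cell-interval $[b-r+1,\, a+r-1]$, because the left boundary block of the block interval contains cell $b-r$ and the right boundary block contains cell $a+r$, so neither of these cells can lie in a strictly intermediate block. Since $b \ge a$, this cell-interval has length $a - b + 2r - 1 \le 2r - 1$, and because every maximal block has length at least $1$, the number of strictly intermediate blocks is at most $2r - 1$. Adding the two boundary blocks yields at most $2r + 1$ blocks in the block interval, so $2\delta_{\vec{v}}(i) + 1 \le 2r + 1$ and $\delta_{\vec{v}}(i) \le r$. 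There is no substantial obstacle here: once the dichotomy is used to separate the two cases, each bound reduces to straightforward cell-counting using only the definitions of $f^{\leftarrow}$ and $f^{\rightarrow}$.
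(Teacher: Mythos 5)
Your proof is correct, and since the paper records this statement as an observation with no written proof, your argument is essentially the intended justification: the dichotomy of \Cref{claim:temporally_periodic_blocks_are_short} separates the strongly stable case (where, $\sigma$ being a fixed-point, both $f^{\leftarrow}$ and $f^{\rightarrow}$ return the block $[a,b]$ itself and $\delta_{\vec{v}}(i)=0$) from the short-block case (where every block of the block interval other than the two boundary blocks lies in $[b-r+1,a+r-1]$, an interval of at most $2r-1$ cells, so there are at most $2r-1$ of them and $2\delta_{\vec{v}}(i)+1\le 2r+1$). The only step worth stating explicitly is that confining the intermediate blocks to $[b-r+1,a+r-1]$ uses not merely that they avoid the cells $b-r$ and $a+r$, but that, being maximal homogeneous blocks distinct from the two boundary blocks, they are disjoint from those blocks entirely and hence lie strictly between them -- an immediate fact, so this is a matter of exposition rather than a gap.
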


\begin{claim}\label{claim:constant_horizon_for_temporally_periodic_paris}
	Let $ \vec{v}, \vec{v'} $ be a pair of \nameref{def:block_length_vector}s corresponding to a \nameref{def:temporally_periodic_pair} $ \sigma, \sigma' $.
	For every pair $ i,j \in \cycnums{|\vec{v}|} $,
	
	\begin{align*}
		\delta_{\vec{v}}(i) = \delta_{\vec{v}}(j) .
	\end{align*}
	Moreover, for every $ i \in \cycnums{|\vec{v'}|} $,
	\begin{align*}
		\delta_{\vec{v}}(i) = \delta_{\vec{v'}}(i) .
	\end{align*}
\end{claim}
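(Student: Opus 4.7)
The first statement is immediate from \Cref{claim:temporally_periodic_intervals_have_constant_lengths}: its ``moreover'' clause asserts that $|[f^{\leftarrow}_{\sigma,\sigma'}([a,b]), f^{\rightarrow}_{\sigma,\sigma'}([a,b])]_{B(\sigma)}|$ is independent of the chosen block $[a,b]\in B(\sigma')$, and since \Cref{def:horizon} defines $\delta_{\vec{v'}}(i)$ as exactly half of that cardinality minus $1/2$, the horizon is constant across $i$.

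For the second statement, once the first part is established both $\delta_{\vec{v}}$ and $\delta_{\vec{v'}}$ are constants; denote them $\delta_2$ and $\delta_1$, respectively. It suffices to show $\delta_1 = \delta_2$. The plan is to label the blocks of $\sigma'$ and $\sigma$ in cyclic order as $X_0 = [a_0, b_0], \ldots, X_{k-1} = [a_{k-1}, b_{k-1}]$ and $Y_0 = [c_0, d_0], \ldots, Y_{k-1} = [c_{k-1}, d_{k-1}]$, chosen so that $\varphi_{\sigma,\sigma'}(X_i) = Y_i$ for every $i$ (consistent around the cycle by \Cref{claim:adjacent_blocks_in_temporally_periodic_pairs_are_mapped_into_adjacent_blocks}). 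By \Cref{claim:alignment_mapping_preserves_value_in_weakly_stable_pairs} we then also have $\varphi_{\sigma',\sigma}(Y_i) = X_i$. Because $\delta_1,\delta_2$ are constant and the alignment mapping sits at the middle of the relevant block interval, the definitions yield
\[
f^{\leftarrow}_{\sigma,\sigma'}(X_i) = Y_{i-\delta_1}, \quad f^{\rightarrow}_{\sigma,\sigma'}(X_i) = Y_{i+\delta_1}, \quad f^{\leftarrow}_{\sigma',\sigma}(Y_i) = X_{i-\delta_2}, \quad f^{\rightarrow}_{\sigma',\sigma}(Y_i) = X_{i+\delta_2}.
\]

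The core of the argument then extracts position constraints from \Cref{def:left_right_mapping}: the cell $b_i - r$ lies in $Y_{i-\delta_1}$, giving $c_{i-\delta_1} \le b_i - r$; and the cell $c_{i-\delta_1} + r$ lies in $X_{i-\delta_1+\delta_2}$, giving $a_{i-\delta_1+\delta_2} \le c_{i-\delta_1} + r \le b_i$. Since the $X$-blocks appear in strict cyclic order and $|\delta_2 - \delta_1| \le r$ is smaller than the number of blocks $k$ in any nontrivial configuration, this forces $i - \delta_1 + \delta_2 \le i$, i.e., $\delta_2 \le \delta_1$. Running the analogous derivation on the other side---starting from $d_i - r \in X_{i-\delta_2}$ and then $a_{i-\delta_2} + r \in Y_{i-\delta_2+\delta_1}$---gives $c_{i-\delta_2+\delta_1} \le d_i$, so by the same ordering argument $\delta_1 \le \delta_2$, yielding $\delta_1 = \delta_2$.

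The main obstacle is choosing the consistent cyclic block labeling and translating the constant-horizon conclusion of the first part into concrete block-index shifts for all four of the left/right mappings. Once that bookkeeping is in place, the proof reduces to two short chains of cell-position inequalities combined with the strict cyclic ordering of the blocks.
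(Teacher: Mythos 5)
Your handling of the first statement coincides with the paper's: both reduce it to the constancy of the block-interval sizes established in \Cref{claim:temporally_periodic_intervals_have_constant_lengths}. For the second statement ($\delta_{\vec{v}}=\delta_{\vec{v'}}$) you take a genuinely different route. The paper simply defers to the ``moreover'' clause of \Cref{claim:temporally_periodic_intervals_have_constant_lengths} (whose proof, as written, only establishes the first clause), whereas you give a self-contained argument: you fix a consistent cyclic labeling with $\varphi_{\sigma,\sigma'}(X_i)=Y_i$ (legitimate by \Cref{claim:adjacent_blocks_in_temporally_periodic_pairs_are_mapped_into_adjacent_blocks} together with \Cref{claim:alignment_mapping_is_one_to_one}), use the involution property $\varphi_{\sigma',\sigma}(Y_i)=X_i$ from \Cref{claim:alignment_mapping_preserves_value_in_weakly_stable_pairs} to convert the two constant horizons into index shifts $\pm\delta_1,\pm\delta_2$ for the four left/right mappings, and then squeeze $\delta_2\le\delta_1$ and $\delta_1\le\delta_2$ out of the cell-position constraints $b_i-r\in Y_{i-\delta_1}$, $c_{i-\delta_1}+r\in X_{i-\delta_1+\delta_2}$ and their mirror images. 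What your approach buys is an actual derivation of the symmetric statement rather than a citation of a clause the paper never proves; what it costs is cyclic-index bookkeeping. The one place you should tighten is the step ``$a_{i-\delta_1+\delta_2}\le b_i$ forces $i-\delta_1+\delta_2\le i$'': your justification that $|\delta_2-\delta_1|\le r<k$ is not automatic (for a \nameref{def:weakly_stable} configuration one only gets $k\ge n/r$, so $k>r$ needs $n>r^2$, and the bound $\delta\le r$ is itself only the unproved \Cref{claim:horizon_is_at_most_r}); a cleaner patch is to argue with cell positions, observing that the cell $c_{i-\delta_1}+r$ lies in a window of fewer than $n$ cells ending at $b_i$ while every block strictly ahead of $X_i$ in the relevant range occupies cells strictly beyond $b_i$, which rules out a forward shift without comparing $r$ to $k$. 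This is the same level of cyclic-arithmetic informality the paper allows itself elsewhere, so I regard it as a caveat rather than a gap.
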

\begin{proof}
	Let $ [a,b] \in B(\sigma) $ be the block corresponding to entry $ i $ of the \nameref{def:block_length_vector} $ \vec{v} $ and let $ [c,d] \in B(\sigma) $ be the block corresponding to entry $ j $ of the \nameref{def:block_length_vector} $ \vec{v} $.
	By \Cref{claim:temporally_periodic_intervals_have_constant_lengths}, the number of blocks in the block interval $ [f^{\leftarrow}_{\sigma',\sigma}([a,b]), f^{\rightarrow}_{\sigma',\sigma}([a,b])]_{B(\sigma')} $ equals the number of blocks in the block interval $ [f^{\leftarrow}_{\sigma',\sigma}([c,d]), f^{\rightarrow}_{\sigma',\sigma}([c,d])]_{B(\sigma')} $.
	That is, $ 2\delta_{\vec{v}}(i) + 1 = 2\delta_{\vec{v}}(j) + 1 $.
	Hence, $ \delta_{\vec{v}}(i) = \delta_{\vec{v}}(j) $, as claimed.
	
	The second part of the claim similarly follows from the second part of \Cref{claim:temporally_periodic_intervals_have_constant_lengths}.
\end{proof}

\begin{claim}\label{claim:length_any_sequence_of_2delta_blocks_at_most_2r}
	If $ \vec{v} $ is the \nameref{def:block_length_vector} corresponding to a \nameref{def:weakly_stable} configuration $ \sigma $, then the total length of every sequence of $ 2\delta $ \ref*{def:block}s in $ \sigma $ is at most $ 2r $, where $ \delta $ is the \nameref{def:horizon} of $ \vec{v} $ (which, by \Cref{claim:constant_horizon_for_temporally_periodic_paris}, is the same for all \ref*{def:block}s in the configuration $ \sigma $).
\end{claim}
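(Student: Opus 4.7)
The plan is to pick $\sigma' = \maj_r(\sigma)$ (so that $(\sigma, \sigma')$ is a \nameref{def:temporally_periodic_pair} since $\sigma$ is \nameref{def:weakly_stable} and hence \nameref{def:temporally_periodic}) and to reduce the claim to a local, two-sided bound on the cell range covered by any $2\delta$ consecutive blocks of $\sigma$.

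First I would set up a consistent cyclic indexing of blocks. By \Cref{claim:temporally_periodic_vectors_have_same_length} the numbers of blocks in $\sigma$ and $\sigma'$ agree; call this common value $k$. By \Cref{claim:adjacent_blocks_in_temporally_periodic_pairs_are_mapped_into_adjacent_blocks} one can index the blocks of $\sigma$ and $\sigma'$ cyclically as $B_0, \ldots, B_{k-1}$ and $B'_0, \ldots, B'_{k-1}$ so that $\varphi_{\sigma,\sigma'}(B'_m) = B_m$ for every $m \in \cycnums{k}$. By \Cref{claim:constant_horizon_for_temporally_periodic_paris} the \nameref{def:horizon} $\delta$ is uniform across blocks, which together with the definition of the \nameref{def:alignment_mapping} (as the middle block of the interval $[f^{\leftarrow}_{\sigma,\sigma'}(B'_m), f^{\rightarrow}_{\sigma,\sigma'}(B'_m)]_{B(\sigma)}$) yields $f^{\leftarrow}_{\sigma,\sigma'}(B'_m) = B_{m-\delta}$ and $f^{\rightarrow}_{\sigma,\sigma'}(B'_m) = B_{m+\delta}$ simultaneously for every $m$.

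The crux is a ``pinching'' argument using two consecutive indexed blocks of $\sigma'$. Write $B'_m = [a_m, b_m]$. The definition of $f^{\leftarrow}_{\sigma,\sigma'}$ immediately gives that $B_{m-\delta}$ contains the cell $b_m - r$, so its right endpoint is at least $b_m - r$. The subtle half, which I expect to be the main step to execute carefully, is the upper bound on the right endpoint of $B_{m+\delta}$: applying $f^{\rightarrow}_{\sigma,\sigma'}$ to $B'_{m+1}$, whose left endpoint satisfies $a_{m+1} = b_m + 1$, the block $B_{m+\delta+1}$ contains the cell $b_m + 1 + r$, which forces the left endpoint of $B_{m+\delta+1}$ to be at most $b_m + 1 + r$, and hence the right endpoint of $B_{m+\delta}$ to be at most $b_m + r$. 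Any sequence of $2\delta$ consecutive blocks of $\sigma$ can be expressed as $B_{m-\delta+1}, \ldots, B_{m+\delta}$ for a suitable $m$, and its total length equals the difference of these two endpoints, which is at most $(b_m + r) - (b_m - r) = 2r$.

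The remaining obstacle is a pair of degeneracy checks. I need $\delta \geq 1$ so that $B_{m-\delta}$ and $B_{m+\delta+1}$ lie strictly outside the sequence under consideration, and $k \geq 2\delta + 1$ so that a sequence of $2\delta$ blocks does not cyclically wrap onto itself. For the first, if $\delta = 0$ then a single block of $\sigma$ would contain both $b_m - r$ and $a_m + r$, forcing its length to be at least $r + 2$, contradicting \Cref{claim:weakly_stable_configurations_have_no_strongly_stable_cells}; and for the second, $k \geq 2\delta + 1$ because the block interval defining $\delta$ is a set of $2\delta + 1$ distinct blocks of $\sigma$ by its definition.
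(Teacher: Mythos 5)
Your proposal is correct and follows essentially the same route as the paper's proof: both pin a window of $2\delta$ consecutive blocks of $\sigma$ between the cells $b_m-r$ and $b_m+r$ by combining the definition of the left/right mappings with \Cref{claim:adjacent_blocks_in_temporally_periodic_pairs_are_mapped_into_adjacent_blocks} applied to the next block of $\sigma'$, the paper phrasing this as dropping the leftmost block of the $(2\delta+1)$-block interval $[f^{\leftarrow}_{\sigma,\sigma'}([a',b']),f^{\rightarrow}_{\sigma,\sigma'}([a',b'])]_{B(\sigma)}$. Your bookkeeping via right endpoints (and the explicit checks that $\delta\ge 1$ and $k\ge 2\delta+1$) is in fact slightly cleaner than the paper's interval-splitting computation, but it is the same argument.
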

\begin{proof}
	We show that for every sequence of $ 2\delta+1 $ consecutive \ref*{def:block}s in the configuration $ \sigma $, if we remove the leftmost \ref*{def:block} from the sequence, then the total length of the remaining \ref*{def:block}s is at most $ 2r $.
	
	Let $ \sigma' = \maj_r(\sigma) $.
	Since $ \sigma, \sigma' $ comprise a \nameref{def:temporally_periodic_pair} of \nameref{def:weakly_stable} configuration, every sequence of $ 2\delta+1 $ of consecutive \ref*{def:block}s in $ \sigma $ is of the form $ [f^{\leftarrow}_{\sigma,\sigma'}([a',b']), f^{\rightarrow}_{\sigma,\sigma'}([a',b'])]_{B(\sigma)} $ for some \ref*{def:block} $ [a',b'] \in B(\sigma') $.
	Let $ I $ be the \nameref{def:cell_interval} composed of the \ref*{def:block}s in the set $ [f^{\leftarrow}_{\sigma,\sigma'}([a',b']), f^{\rightarrow}_{\sigma,\sigma'}([a',b'])]_{B(\sigma)} \setminus \set{f^{\leftarrow}_{\sigma,\sigma'}([a',b'])} $.
	We prove that $ |I| \le 2r $.
	
	Let $ [a,b] = f^{\rightarrow}_{\sigma,\sigma'}([a',b']) $.
	That is, $ [a,b] $ is the \ref*{def:block} in $ B(\sigma) $ that contains the \ref{def:cell} $ a'+r $.
	By \Cref{claim:adjacent_blocks_in_temporally_periodic_pairs_are_mapped_into_adjacent_blocks}, if $ I_{b'+1} $ is the \ref*{def:block} that starts at the \ref{def:cell} $ b'+1 $ in $ \sigma' $, then $ f^{\rightarrow}_{\sigma,\sigma'}(I_{b'+1}) $ is the \ref*{def:block} that follows $ [a,b] $ in $ \sigma $, which means that $ b \in [a'+r, b'+1+r] $.
	Thus,
	
	\begin{align}
		|[a'+r,b]| \le b'+1+r - (a'+r) = |[a',b']| .
	\end{align}
	
	We conclude by bounding the length of the interval $ I $.
	
	\begin{align}
		|I| &\le [b'-r,b]\label{equation:I_contained_in_[b'-r,b]} \\
		&\le [b'-r,a'+r] + [a'+r,b] \\
		&\le (2r-|[a',b']|) + |[a',b']|\label{equation:assigning_interval_lengths} \\
		&= 2r ,
	\end{align}
	where Equation~(\ref{equation:I_contained_in_[b'-r,b]}) follows from the observation that $ I \subseteq [b'-r,b] $, and Equation~(\ref{equation:assigning_interval_lengths}) follows from noting that $ |[b'-r,a'+r]| = 2r-|[a',b']| $ and $ |[a'+r,b]| \le |[a',b']| $.
\end{proof}

\begin{claim}\label{claim:block_vector_length}
	If $ \vec{v} $ and $ \vec{v'} $ are a pair of \nameref{def:aligned} \nameref{def:block_length_vector}s of length $ k $ and \nameref{def:horizon} $ \delta $ corresponding to a \nameref{def:temporally_periodic_pair}, then for every $ i \in \cycnums{k} $,
	
	\begin{align*}
		\vec{v'}_i = \sum_{j=-\delta}^{\delta} (-1)^{j+\delta} \vec{v}_{i+j} .
	\end{align*}
\end{claim}
\begin{proof}
	Let $ \sigma, \sigma' $ be the \nameref{def:temporally_periodic_pair} that corresponds to the \nameref{def:block_length_vector} pair $ \vec{v}, \vec{v'} $ respectively.
	By \Cref{claim:block_length_for_temporally_periodic_configurations}, for $ \beta \in \set{0,1} $ and for every block $ [a',b'] \in B^{\beta}(\sigma') $,
	
	\begin{align*}
		|[a',b']| = \sum_{x \in A^{\beta}} |x| - \sum_{x \in A^{\bar{\beta}}} |x| ,
	\end{align*}
	where 
	\begin{align*}
		A^{\beta} &= [f^{\leftarrow}_{\sigma,\sigma'}([a',b']), f^{\rightarrow}_{\sigma,\sigma'}([a',b'])]_{B^{\beta}(\sigma)} , \\
		A^{\bar{\beta}} &= [f^{\leftarrow}_{\sigma,\sigma'}([a',b']), f^{\rightarrow}_{\sigma,\sigma'}([a',b'])]_{B^{\bar{\beta}}(\sigma)} .
	\end{align*}
	
	By \nameref{claim:switch_point_arg}, the value of the block $ f^{\leftarrow}_{\sigma,\sigma'}([a',b']) $ as well as the value of the block $ f^{\rightarrow}_{\sigma,\sigma'}([a',b']) $ is $ \beta $.
	In more detail, if $ j $ is the rightmost \ref{def:cell} of the block $ [a',b'] $, then the pair $ j,j+1 $ is a \nameref{def:switch_point} in the configuration $ \sigma' $, so by \nameref{claim:switch_point_arg}, $ \sigma(j-r) = \sigma'(j) $, which implies that the value of the block $ f^{\leftarrow}_{\sigma,\sigma'}([a',b']) $ is $ \beta $ because $ j-r \in f^{\leftarrow}_{\sigma,\sigma'}([a',b']) $.
	A similar argument holds for the block $ f^{\rightarrow}_{\sigma,\sigma'}([a',b']) $.
	
	The other block-values in the length $ 2\delta + 1 $ interval $ [f^{\leftarrow}_{\sigma,\sigma'}([a',b']), f^{\rightarrow}_{\sigma,\sigma'}([a',b'])]_{B(\sigma)} $ alternate between $ \bar{\beta} $ and $ \beta $.
	
	Hence, for every $ i \in \cycnums{k} $,
	
	\begin{align*}
		\vec{v'}_i = \sum_{j=-\delta}^{\delta} (-1)^{j+\delta} \vec{v}_{i+j} .
	\end{align*}
\end{proof}

\begin{claim}\label{claim:relating_two_temporally_periodic_length_vectors}
	If $ \vec{v} $ and $ \vec{v'} $ are a pair of \nameref{def:aligned} \nameref{def:block_length_vector}s corresponding to a \nameref{def:temporally_periodic_pair}, where the vectors are of length $ k $ and \nameref{def:horizon} $ \delta $, then for every $ i \in \cycnums{k} $,
	
	\begin{align*}
		\vec{v'}_i + \vec{v'}_{i+1} = \vec{v}_{i-\delta} + \vec{v}_{i+\delta+1} .
	\end{align*}
	Similarly,
	
	\begin{align*}
		\vec{v}_i + \vec{v}_{i+1} = \vec{v'}_{i-\delta} + \vec{v'}_{i+\delta+1} .
	\end{align*}
\end{claim}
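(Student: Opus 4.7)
The plan is to derive the identity by direct algebraic manipulation of the expression supplied by \Cref{claim:block_vector_length}, using the fact that the alternating sums for two consecutive entries of $\vec{v'}$ telescope against one another and leave only the two endpoint terms.

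First, I would apply \Cref{claim:block_vector_length} at index $i$ and at index $i+1$ to obtain
\begin{align*}
	\vec{v'}_i &= \sum_{j=-\delta}^{\delta} (-1)^{j+\delta}\, \vec{v}_{i+j}, \\
	\vec{v'}_{i+1} &= \sum_{j=-\delta}^{\delta} (-1)^{j+\delta}\, \vec{v}_{i+1+j}.
\end{align*}
Re-indexing the second sum by $j \mapsto j-1$ (so the summation range becomes $-\delta+1, \dots, \delta+1$) gives
\begin{align*}
	\vec{v'}_{i+1} = \sum_{j=-\delta+1}^{\delta+1} (-1)^{j-1+\delta}\, \vec{v}_{i+j} = -\sum_{j=-\delta+1}^{\delta+1} (-1)^{j+\delta}\, \vec{v}_{i+j}.
\end{align*}

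Next, I would add the two expressions. On the overlapping range $j = -\delta+1, \dots, \delta$, the two coefficients $(-1)^{j+\delta}$ and $-(-1)^{j+\delta}$ cancel, so the sum reduces to the two boundary terms. At $j=-\delta$ only the first sum contributes, with coefficient $(-1)^0 = 1$, yielding $+\vec{v}_{i-\delta}$. At $j=\delta+1$ only the second sum contributes, with coefficient $-(-1)^{2\delta+1} = 1$, yielding $+\vec{v}_{i+\delta+1}$. This gives
\begin{align*}
	\vec{v'}_i + \vec{v'}_{i+1} = \vec{v}_{i-\delta} + \vec{v}_{i+\delta+1},
\end{align*}
as claimed.

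For the second identity, I would repeat the same calculation after swapping the roles of $\sigma$ and $\sigma'$. Since $\sigma, \sigma'$ is a \nameref{def:temporally_periodic_pair}, the pair $\sigma', \sigma$ is also a \nameref{def:temporally_periodic_pair}, and by \Cref{claim:constant_horizon_for_temporally_periodic_paris} the horizon is the same, namely $\delta$; by \Cref{claim:alignment_mapping_preserves_value_in_weakly_stable_pairs} the vectors $\vec{v'}, \vec{v}$ are \nameref{def:aligned} in that order as well. Applying \Cref{claim:block_vector_length} to this reversed pair and repeating the telescoping argument above yields $\vec{v}_i + \vec{v}_{i+1} = \vec{v'}_{i-\delta} + \vec{v'}_{i+\delta+1}$. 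The entire argument is a routine index manipulation; there is no substantive obstacle beyond being careful with the shift of summation range and with the verification that the hypotheses of \Cref{claim:block_vector_length} are preserved under swapping $\sigma$ and $\sigma'$.
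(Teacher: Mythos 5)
Your proposal is correct and follows essentially the same route as the paper: the paper likewise applies \Cref{claim:block_vector_length} at indices $i$ and $i+1$, sums the two alternating expressions so that the interior terms cancel, and obtains the second identity ``by symmetry''; your version merely spells out the re-indexing/telescoping and justifies the symmetry step (reversed pair, same horizon, alignment in the reversed direction) explicitly.
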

\begin{proof}
	By \Cref{claim:block_vector_length}, for every $ i \in \cycnums{k} $,
	
	\begin{align*}
		\vec{v'}_i &= \sum_{j=-\delta}^{\delta} (-1)^{j+\delta} \vec{v}_{i+j} , \\
		\vec{v'}_{i+1} &= \sum_{j=-\delta}^{\delta} (-1)^{j+\delta} \vec{v}_{i+1+j} .
	\end{align*}
	Summing up the two equations we get:
	
	\begin{align*}
		\vec{v'}_i + \vec{v'}_{i+1} = \vec{v}_{i-\delta} + \vec{v}_{i+\delta+1} .
	\end{align*}
	And by symmetry,
	
	\begin{align*}
		\vec{v}_i + \vec{v}_{i+1} = \vec{v'}_{i-\delta} + \vec{v'}_{i+\delta+1} .
	\end{align*}
	The claim follows.
\end{proof}

\begin{claim}\label{claim:weakly_stable_configurations_are_balanced}
	If $ \sigma $ is a \nameref{def:temporally_periodic} configuration with block lengths at most $ r $ each, then $ \sigma $ is \nameref{def:balanced}.
\end{claim}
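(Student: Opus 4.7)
The plan is to relate the signed sum
\[
S(\vec{v}) := \sum_{i=0}^{k-1} (-1)^i \vec{v}_i
\]
of the \nameref{def:block_length_vector} $\vec{v}$ of $\sigma$ (of length $k$) to itself via \Cref{claim:block_vector_length}, and to conclude that it must vanish. Since $\sigma$ has multiple \ref*{def:block}s (each of length at most $r<n$) and these \ref*{def:block}s alternate in value around the cycle, $k$ is even, the \ref*{def:block}s at even indices all share one value, and those at odd indices share the opposite value. Consequently, $\sigma$ is \nameref{def:balanced} if and only if $S(\vec{v}) = 0$.

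Let $\sigma' = \maj_r(\sigma)$, so that $(\sigma,\sigma')$ is a \nameref{def:temporally_periodic_pair}, and let $\vec{v}, \vec{v'}$ be the associated \nameref{def:aligned} \nameref{def:block_length_vector}s. By \Cref{claim:temporally_periodic_vectors_have_same_length} and \Cref{claim:constant_horizon_for_temporally_periodic_paris}, they share a common length $k$ and a common \nameref{def:horizon} $\delta$. First I would take the alternating sum over $i$ of the identity $\vec{v'}_i = \sum_{j=-\delta}^{\delta}(-1)^{j+\delta}\vec{v}_{i+j}$ from \Cref{claim:block_vector_length}, swap the order of summation, and reindex using that $k$ is even (so shifting the index by $j$ contributes a factor of $(-1)^j$). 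This would give
\[
S(\vec{v'}) = (2\delta+1)(-1)^{\delta}\, S(\vec{v}).
\]
The same computation with the roles of $\sigma$ and $\sigma'$ interchanged yields $S(\vec{v}) = (2\delta+1)(-1)^{\delta}\, S(\vec{v'})$, and chaining the two identities gives $S(\vec{v}) = (2\delta+1)^{2}\, S(\vec{v})$. For any $\delta \geq 1$ the coefficient is at least $9$, forcing $S(\vec{v}) = 0$, which is exactly the claim.

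The main obstacle is therefore to rule out $\delta = 0$. I plan to handle this with a short interval-length argument: if $\delta = 0$, then for any \ref*{def:block} $[i,j] \in B(\sigma')$ with $|[i,j]| = L \leq r$, \Cref{def:alignment_mapping} forces $f^{\leftarrow}_{\sigma,\sigma'}([i,j]) = f^{\rightarrow}_{\sigma,\sigma'}([i,j])$, so the \ref{def:cell}s $j-r$ and $i+r$ must lie in a single common \ref*{def:block} of $\sigma$. That \ref*{def:block} must then contain the \nameref{def:cell_interval} $[j-r,i+r]$, whose length is $2r - L + 2 \geq r+2$. But $\sigma$ is itself \nameref{def:weakly_stable}, so by \Cref{claim:weakly_stable_configurations_have_no_strongly_stable_cells} every one of its \ref*{def:block}s has length at most $r$, a contradiction. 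Hence $\delta \geq 1$, and the first part of the argument closes the proof.
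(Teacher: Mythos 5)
Your proof is correct and, at its core, it is the same argument as the paper's: your alternating sum $S(\vec{v})$ is exactly the difference $a-b$ of the even- and odd-position block-length sums that the paper's proof works with, and your identity $S(\vec{v'}) = (2\delta+1)(-1)^{\delta}S(\vec{v})$ is precisely what one obtains by subtracting the paper's two relations derived from \Cref{claim:block_vector_length} (up to the parity-of-$\delta$ bookkeeping, which affects neither version). The one substantive difference is that you treat the degenerate case $\delta = 0$ explicitly: both your chain $S(\vec{v}) = (2\delta+1)^{2}S(\vec{v})$ and the paper's substitution, which reduces to $2\delta(\delta+1)a = 2\delta(\delta+1)b$, are vacuous when $\delta = 0$, a case the paper passes over in silence; your interval-length argument (if $f^{\leftarrow}_{\sigma,\sigma'}([i,j]) = f^{\rightarrow}_{\sigma,\sigma'}([i,j])$, then a single block of $\sigma$ contains $[j-r,i+r]$, of length $2r+2-L \ge r+2$) closes it correctly, so on this point your write-up is more complete than the paper's. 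Two small steps should be made explicit. First, your $\delta=0$ argument quietly assumes that the blocks of $\sigma'$ also have length at most $r$ (so that $L \le r$); this needs the one-line observation that otherwise, by \Cref{claim:temporally_periodic_blocks_are_short} and \Cref{claim:strongly_stable_configurations_are_fixed_points}, $\sigma'$ would be strongly stable and hence a fixed point, forcing $\sigma = \maj_r(\sigma') = \sigma'$, contradicting that the blocks of $\sigma$ have length at most $r$. Second, the reverse identity $S(\vec{v}) = (2\delta+1)(-1)^{\delta}S(\vec{v'})$ uses that the same indexing is aligned in the opposite direction and has the same horizon; this is justified by \Cref{claim:alignment_mapping_preserves_value_in_weakly_stable_pairs} and the second part of \Cref{claim:constant_horizon_for_temporally_periodic_paris}, the same symmetric use the paper itself makes in \Cref{claim:relating_two_temporally_periodic_length_vectors}.
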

\begin{proof}	
	Let $ \sigma' = \maj_r(\sigma) $.
	Let $ \vec{v} $ and $ \vec{v'} $ be the corresponding \nameref{def:block_length_vector}s of the \nameref{def:temporally_periodic_pair} $ \sigma, \sigma' $.
	
	By \Cref{claim:temporally_periodic_vectors_have_same_length}, $ |\vec{v}| = \vec{v'} $.
	Denote that length by $ 2k $ (note that since configurations are cyclical, it is either the case that the configuration is homogeneous or that the length is even, and in the former case the claim trivially holds).
	
	Let
	
	\begin{align*}
		x = \sum_{i=0}^{k-1} \vec{v}_{2i} , \qquad
		y &= \sum_{i=0}^{k-1} \vec{v}_{2i+1} \\
		x' = \sum_{i=0}^{k-1} \vec{v'}_{2i} , \qquad
		y' &= \sum_{i=0}^{k-1} \vec{v'}_{2i+1} .
	\end{align*}
	By \Cref{claim:block_vector_length},
	
	\begin{align*}
		x &= (\delta+1)x' - \delta y' \\
		x &= (\delta+1)((\delta+1)x - \delta y) - \delta ((\delta+1)y - \delta x) \\
		x &= y .
	\end{align*}
	The claim follows.
\end{proof}

\section{The difference vectors of \nameref*{def:weakly_stable} configurations are \nameref*{def:spatially_periodic}}\label{sec:diff_vectors}

\begin{definition}\label{def:difference_vector}
	Let $ \vec{v} $ and $ \vec{v'} $ be a pair of \nameref{def:aligned} \nameref{def:block_length_vector}s of length $ k $ and \nameref{def:horizon} $ \delta $ corresponding to a \nameref{def:temporally_periodic_pair} of \nameref{def:weakly_stable} configurations.
	We define the pair of $ (\delta+1) $-steps difference vectors $ \Delta $ and $ \Delta' $ between $ \vec{v} $ and $ \vec{v'} $.
	For each $ i \in \cycnums{k} $,
	
	\begin{align*}
		\Delta_i &= \vec{v'}_{i+\delta+1} - \vec{v}_{i} , \\
		\Delta'_i &= \vec{v}_{i+\delta+1} - \vec{v'}_{i} .
	\end{align*}
\end{definition}

\begin{claim}\label{claim:sum_of_diff_vectors}
	If $ \vec{v} $ and $ \vec{v'} $ are a pair of \nameref{def:aligned} \nameref{def:block_length_vector}s of length $ k $ and \nameref{def:horizon} $ \delta $ corresponding to a \nameref{def:temporally_periodic_pair} of \nameref{def:weakly_stable} configurations, and $ \Delta $ and $ \Delta' $ are the pair's two $ (\delta+1) $-steps difference vectors, then for every $ i \in \cycnums{k} $ and every integer $ m \ge 0 $,
	
	\begin{align*}
		\sum_{j=0}^{m-1} \Delta_{i+2j(\delta+1)} + \sum_{j=0}^{m-1} \Delta'_{i+(2j+1)(\delta+1)} = \vec{v}_{i+2m(\delta+1)} - \vec{v}_i .
	\end{align*}
\end{claim}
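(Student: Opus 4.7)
The plan is to observe that the claim is a straightforward telescoping identity once we substitute the definitions of $\Delta$ and $\Delta'$. So first I would simply expand each term: by \Cref{def:difference_vector}, for any index $i$ and any $j \ge 0$,
\begin{align*}
\Delta_{i+2j(\delta+1)} &= \vec{v'}_{i+(2j+1)(\delta+1)} - \vec{v}_{i+2j(\delta+1)}, \\
\Delta'_{i+(2j+1)(\delta+1)} &= \vec{v}_{i+(2j+2)(\delta+1)} - \vec{v'}_{i+(2j+1)(\delta+1)}.
\end{align*}

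Next, I would add these two equations for a fixed $j$, noting that the two occurrences of $\vec{v'}_{i+(2j+1)(\delta+1)}$ cancel, yielding
\[
\Delta_{i+2j(\delta+1)} + \Delta'_{i+(2j+1)(\delta+1)} = \vec{v}_{i+(2j+2)(\delta+1)} - \vec{v}_{i+2j(\delta+1)}.
\]

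Then I would sum the resulting identity over $j = 0, 1, \dots, m-1$. The right-hand side telescopes: consecutive terms share the common endpoint $\vec{v}_{i+2(j+1)(\delta+1)} = \vec{v}_{i+2(j+1)(\delta+1)}$, so all intermediate contributions cancel and we are left with $\vec{v}_{i+2m(\delta+1)} - \vec{v}_i$, which is exactly the right-hand side of the claim.

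There is no real obstacle here; the proof is a one-line telescoping argument after substitution. The only thing to be mildly careful about is that all indices are taken modulo $k = |\vec{v}| = |\vec{v'}|$ (which is legitimate by \Cref{claim:temporally_periodic_vectors_have_same_length}), so the cyclic nature of the vectors means the cancellation works even when the indices wrap around. No use of any deeper property of the \nameref*{def:alignment_mapping} is needed at this step — the alignment and the common horizon $\delta$ have already been packaged into \Cref{def:difference_vector}.
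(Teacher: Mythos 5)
Your proof is correct and takes essentially the same route as the paper: the paper establishes the identity by induction on $m$, and its inductive step performs exactly the two definitional substitutions for $\Delta$ and $\Delta'$ that you use, so your direct telescoping sum is just the unrolled form of that induction. Nothing further is needed.
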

\begin{proof}
	We prove the claim by induction on $ m $.
	For $ m=0 $, the sums in the left-hand side are empty and the right-hand side consist of a difference between two equal terms, so the equation clearly holds.
	We assume as our induction hypothesis that the equation holds for $ m $, and prove it for $ m+1 $.
	
	\begin{align}
		\sum_{j=0}^{m} \Delta_{i+2j(\delta+1)} + \sum_{j=0}^{m} \Delta'_{i+(2j+1)(\delta+1)} &= \sum_{j=0}^{m-1} \Delta_{i+2j(\delta+1)} + \sum_{j=0}^{m-1} \Delta'_{i+(2j+1)(\delta+1)} \\
		&\quad + \Delta_{i+2m(\delta+1)} + \Delta'_{i+(2m+1)(\delta+1)} \\
		&= \vec{v}_{i+2m(\delta+1)} - \vec{v}_i + \Delta_{i+2m(\delta+1)} + \Delta'_{i+(2m+1)(\delta+1)} \label{equation:applying_induction_hypothesis} \\
		&= (\vec{v'}_{i+(2m+1)(\delta+1)} - \Delta_{i+2m(\delta+1)})\label{equation:apply_diff_vector_definition_for_v} \\
		&\quad - \vec{v}_i + \Delta_{i+2m(\delta+1)} + \Delta'_{i+(2m+1)(\delta+1)} \\
		&= (\vec{v}_{i+(2m+2)(\delta+1)} - \Delta'_{i+(2m+1)(\delta+1)}) - \Delta_{i+2m(\delta+1)}\label{equation:apply_diff_vector_definition_for_v'} \\
		&\quad - \vec{v}_i + \Delta_{i+2m(\delta+1)} + \Delta'_{i+(2m+1)(\delta+1)} \\
		&= \vec{v}_{i+(2m+2)(\delta+1)} - \vec{v}_i ,
	\end{align}
	where Equation~(\ref{equation:applying_induction_hypothesis}) follows from applying the induction hypothesis, and \nameCrefs{equation:apply_diff_vector_definition_for_v} (\ref{equation:apply_diff_vector_definition_for_v}) and (\ref{equation:apply_diff_vector_definition_for_v'}) follow directly from applying \Cref{def:difference_vector} to the \nameref{def:block_length_vector} $ \vec{v}_{i+2m(\delta+1)} $ and then to the \nameref{def:block_length_vector} $ \vec{v'}_{i+(2m+1)(\delta+1)} $.
\end{proof}

\begin{claim}\label{claim:steps_diff_vectors_are_spatially_periodic}
	If $ \vec{v} $ and $ \vec{v'} $ are a pair of \nameref{def:aligned} \nameref{def:block_length_vector}s of length $ k $ and \nameref{def:horizon} $ \delta $ corresponding to a \nameref{def:temporally_periodic_pair} of \nameref{def:weakly_stable} configurations, then each of the pair's two $ (\delta+1) $-steps difference vectors $ \Delta $ and $ \Delta' $ are \nameref{def:spatially_periodic} with a \nameref{def:spatial_period} that divides $ 2\delta $.
	In other words, for every $ i \in \cycnums{k} $,
	
	\begin{align*}
		\Delta_i &= \Delta_{i+2\delta} , \\
		\Delta'_i &= \Delta'_{i+2\delta} .
	\end{align*}
\end{claim}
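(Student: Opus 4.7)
The plan is to derive the claim as a direct algebraic consequence of Claim~\ref{claim:relating_two_temporally_periodic_length_vectors}, which relates pairs of adjacent entries in $\vec{v}$ and $\vec{v'}$ that are offset by $\delta$. The key observation I want to isolate is that the two difference vectors $\Delta$ and $\Delta'$ are themselves $\delta$-shifts of each other, and that this immediately yields a $2\delta$ spatial period for each.

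Concretely, the first step is to take the identity $\vec{v'}_j + \vec{v'}_{j+1} = \vec{v}_{j-\delta} + \vec{v}_{j+\delta+1}$ from Claim~\ref{claim:relating_two_temporally_periodic_length_vectors} and instantiate it with $j = i + \delta$, which rearranges into
\[
\vec{v'}_{i+\delta+1} - \vec{v}_i \;=\; \vec{v}_{i+2\delta+1} - \vec{v'}_{i+\delta}.
\]
By \Cref{def:difference_vector}, the left-hand side is exactly $\Delta_i$ and the right-hand side is exactly $\Delta'_{i+\delta}$, so $\Delta_i = \Delta'_{i+\delta}$. Applying the symmetric identity $\vec{v}_j + \vec{v}_{j+1} = \vec{v'}_{j-\delta} + \vec{v'}_{j+\delta+1}$ at $j = i+\delta$ and using \Cref{def:difference_vector} in the same way yields $\Delta'_i = \Delta_{i+\delta}$.

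The second step is to compose these two identities. Replacing $i$ by $i+\delta$ in $\Delta'_i = \Delta_{i+\delta}$ gives $\Delta'_{i+\delta} = \Delta_{i+2\delta}$. Combining with $\Delta_i = \Delta'_{i+\delta}$ from the first step yields $\Delta_i = \Delta_{i+2\delta}$, and an entirely analogous substitution gives $\Delta'_i = \Delta'_{i+2\delta}$. This is exactly the spatial periodicity claimed, with a period dividing $2\delta$.

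I do not expect any serious obstacle here: all the heavy lifting was done in Claim~\ref{claim:relating_two_temporally_periodic_length_vectors}, and the remainder is an index chase. The only subtlety to handle carefully is that all index arithmetic is modulo $k$, so I would briefly note that the identities of \Cref{claim:relating_two_temporally_periodic_length_vectors} hold for every $i \in \cycnums{k}$, which legitimizes the substitutions $j = i+\delta$ and $i \mapsto i+\delta$ without any boundary concerns.
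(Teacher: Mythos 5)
Your proof is correct and takes essentially the same route as the paper: both derive from \Cref{claim:relating_two_temporally_periodic_length_vectors} that $\Delta$ and $\Delta'$ are $\delta$-shifts of one another (your identities $\Delta_i = \Delta'_{i+\delta}$ and $\Delta'_i = \Delta_{i+\delta}$ are exactly the paper's $\Delta'_i = \Delta_{i-\delta}$ and $\Delta_i = \Delta'_{i-\delta}$ after re-indexing) and then compose the two shifts to obtain the period $2\delta$. The only difference is the choice of where to instantiate the index, which is immaterial since all indices are taken modulo $k$.
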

\begin{proof}
	By \Cref{claim:relating_two_temporally_periodic_length_vectors}, for every $ i \in \cycnums{k} $,
	
	\begin{align*}
		\vec{v'}_i + \vec{v'}_{i+1} = \vec{v}_{i-\delta} + \vec{v}_{i+\delta+1} .
	\end{align*}
	Hence,
	
	\begin{align*}
		\vec{v}_{i+\delta+1} - \vec{v'}_i = \vec{v'}_{i+1} - \vec{v}_{i-\delta} .
	\end{align*}
	By definition, $ \Delta'_i = \vec{v}_{i+\delta+1} - \vec{v'}_{i} $ and $ \Delta_{i-\delta} = \vec{v'}_{i+1} - \vec{v}_{i-\delta} $.
	Thus,
	
	\begin{align}\label{eq:expressing_Delta'_in_terms_of_Delta}
		\Delta'_i &= \Delta_{i-\delta} .
	\end{align}
	Similarly, also by \Cref{claim:relating_two_temporally_periodic_length_vectors}, for every $ i \in \cycnums{k} $,
	
	\begin{align*}
		\vec{v}_i + \vec{v}_{i+1} = \vec{v'}_{i-\delta} + \vec{v'}_{i+\delta+1} ,
	\end{align*}
	which means that
	
	\begin{align*}
		\vec{v'}_{i+\delta+1} - \vec{v}_i = \vec{v}_{i+1} - \vec{v'}_{i-\delta} 
	\end{align*}
	as well, and since $ \Delta_i = \vec{v'}_{i+\delta+1} - \vec{v}_{i} $ and $ \Delta'_{i-\delta} = \vec{v}_{i+1} - \vec{v'}_{i-\delta} $ we also get
	
	\begin{align}\label{eq:expressing_Delta_in_terms_of_Delta}
		\Delta_i &= \Delta'_{i-\delta} .
	\end{align}
	
	Combining equations (\ref{eq:expressing_Delta'_in_terms_of_Delta}) and (\ref{eq:expressing_Delta_in_terms_of_Delta}) for $ i+2\delta $ and for $ i+\delta $,
	
	\begin{align*}
		\Delta_{i+2\delta} &= \Delta'_{i+\delta} = \Delta_i , \\
		\Delta'_{i+2\delta} &= \Delta_{i+\delta} = \Delta'_i .
	\end{align*}
	That is, the \nameref{def:spatial_period} of each of the vectors $ \Delta $ and $ \Delta' $ is at most $ 2\delta $.
\end{proof}

\begin{claim}\label{claim:sum_of_diff_vectors_is_constant}
	Let $ \vec{v} $ and $ \vec{v'} $ be a pair of length-$ k $ \nameref{def:horizon}-$ \delta $ \nameref{def:aligned} \nameref{def:block_length_vector}s corresponding to a \nameref{def:temporally_periodic_pair} of \nameref{def:weakly_stable} configurations, and let $ \Delta $ and $ \Delta' $ be the pair's two $ (\delta+1) $-steps difference vectors.
	For every $ i \in \cycnums{k} $, define
	
	\begin{align*}
		\varsigma(i) = \sum_{j=0}^{2\delta-1} \Delta_{i+2j(\delta+1)} + \sum_{j=0}^{2\delta-1} \Delta'_{i+(2j+1)(\delta+1)} .
	\end{align*}
	Then $ \varsigma(i) = \varsigma(i') $ for every pair $ i,i' \in \cycnums{k} $.
\end{claim}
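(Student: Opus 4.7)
The plan is to rewrite $\varsigma(i)$ as a complete sum over one period of $\Delta$, from which independence of $i$ follows immediately. By \Cref{claim:steps_diff_vectors_are_spatially_periodic}, both $\Delta$ and $\Delta'$ have spatial period dividing $2\delta$, and its proof also established the identity $\Delta'_\ell = \Delta_{\ell-\delta}$ (equation~(\ref{eq:expressing_Delta'_in_terms_of_Delta})). The first step is to use this identity to convert every $\Delta'$ term in the definition of $\varsigma(i)$ into a $\Delta$ term, obtaining
\begin{align*}
\varsigma(i) \;=\; \sum_{j=0}^{2\delta-1}\Delta_{i+2j(\delta+1)} \;+\; \sum_{j=0}^{2\delta-1}\Delta_{i+(2j+1)(\delta+1)-\delta}.
\end{align*}

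The second step is index reduction modulo~$2\delta$. Since $\Delta$ has period dividing $2\delta$, each index can be replaced by its residue modulo~$2\delta$: the first family of shifts satisfies $2j(\delta+1) \equiv 2j \pmod{2\delta}$, while the second satisfies $(2j+1)(\delta+1)-\delta \equiv 2j+1 \pmod{2\delta}$. As $j$ ranges over $\{0,\ldots,2\delta-1\}$, the values $2j \bmod 2\delta$ traverse each even residue in $\{0,2,\ldots,2\delta-2\}$ exactly twice, and the values $(2j+1) \bmod 2\delta$ traverse each odd residue in $\{1,3,\ldots,2\delta-1\}$ exactly twice. Hence the combined sum collects every residue modulo~$2\delta$ exactly twice, yielding
\begin{align*}
\varsigma(i) \;=\; 2\sum_{\ell=0}^{2\delta-1}\Delta_{i+\ell}.
\end{align*}

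The final step is to observe that a sum over one complete period of a $(2\delta)$-periodic sequence is invariant under cyclic shift, so $\sum_{\ell=0}^{2\delta-1}\Delta_{i+\ell}=\sum_{\ell=0}^{2\delta-1}\Delta_\ell$, which proves $\varsigma(i)=\varsigma(i')$ for every pair $i,i' \in \cycnums{k}$. I expect the main obstacle to be the modular arithmetic in the second step, namely verifying carefully that the $4\delta$ indices appearing in $\varsigma(i)$, after the substitution and reduction modulo~$2\delta$, really do cover each residue exactly twice (the edge case $\delta=0$, where both sums in $\varsigma(i)$ are empty and the claim holds vacuously, should also be noted). The substantive ideas---periodicity of $\Delta$ and the identification $\Delta'_\ell = \Delta_{\ell-\delta}$---are both supplied by \Cref{claim:steps_diff_vectors_are_spatially_periodic}; the rest is bookkeeping.
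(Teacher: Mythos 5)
Your proposal is correct and takes essentially the same route as the paper: convert the $\Delta'$ terms to $\Delta$ terms via $\Delta'_\ell=\Delta_{\ell-\delta}$, reduce the shifts modulo $2\delta$ using the periodicity from \Cref{claim:steps_diff_vectors_are_spatially_periodic}, and conclude by the shift-invariance of a sum over a full period. (You even track the factor of $2$ that the paper's displayed chain silently drops when combining the two sums, which is immaterial to the conclusion that $\varsigma(i)$ is independent of $i$.)
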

\begin{proof}
	We show that for every $ i \in \cycnums{k} $, the value of $ \varsigma(i) $ does not depend on $ i $.
	
	\begin{align}
		\varsigma(i) &= \sum_{j=0}^{2\delta-1} \Delta_{i+2j(\delta+1)} + \sum_{j=0}^{2\delta-1} \Delta'_{i+(2j+1)(\delta+1)} \\
		&= \sum_{j=0}^{2\delta-1} \Delta_{i+2j} + \sum_{j=0}^{2\delta-1} \Delta'_{i+2\delta+2j+1}\label{equation:apply_spatial_periodicity_of_diff_vector} \\
		&= \sum_{j=0}^{2\delta-1} \Delta_{i+2j} + \sum_{j=0}^{2\delta-1} \Delta_{i+2j+1}\label{equation:replace_Delta'_by_Delta} \\
		&= \sum_{j=0}^{2\delta-1} \Delta_{i+j} \label{equation:combine_the_two_sums} \\
		&= \sum_{j=0}^{2\delta-1} \Delta_{j} , \label{equation:get_rid_of_i} ,
	\end{align}
	where:
	\begin{itemize}
		\item[--] Equation~(\ref{equation:apply_spatial_periodicity_of_diff_vector}) follows from noting that, by \Cref{claim:steps_diff_vectors_are_spatially_periodic}, the vector $ \Delta $ is \nameref{def:spatially_periodic} with a \nameref{def:spatial_period} that divides $ 2\delta $, so $ \Delta_{i+2j(\delta+1)} = \Delta_{i+2j} $ and
		$ \Delta'_{i+(2j+1)(\delta+1)} = \Delta'_{i+\delta+2j+1} $ for every $ 0 \le j \le 2\delta-1 $.
		
		\item[--] Equation~(\ref{equation:replace_Delta'_by_Delta}) follows from the observation that, by \Cref{eq:expressing_Delta'_in_terms_of_Delta} in the proof of \Cref{claim:steps_diff_vectors_are_spatially_periodic}, $ \Delta'_{i+\delta+2j+1} = \Delta_{i+2j+1} $ for every $ 0 \le j \le m-1 $.
		
		\item[--] Equation~(\ref{equation:combine_the_two_sums}) follows from combining the two summations.
		
		\item[--] Equation~(\ref{equation:get_rid_of_i}) follows again from the property that the vector $ \Delta $ is \nameref{def:spatially_periodic} with a \nameref{def:spatial_period} that divides $ 2\delta $ (\Cref{claim:steps_diff_vectors_are_spatially_periodic}).
	\end{itemize}
\end{proof}

\section{The \nameref*{def:weakly_stable} configurations are \nameref*{def:spatially_periodic}}\label{sec:spatially_periodic}

\begin{claim}\label{claim:spatially_periodic}
	The \nameref{def:spatial_period} of every \nameref{def:weakly_stable} configuration is at most $ 2r(r+1) $.
\end{claim}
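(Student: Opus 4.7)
The plan is to prove that the block-length vector of $\sigma$ is cyclic with period dividing $2\delta(\delta+1)$, where $\delta$ is the configuration's horizon, and then translate this into a bound on the cell-wise spatial period via \Cref{claim:length_any_sequence_of_2delta_blocks_at_most_2r}. To set things up, let $\sigma' = \maj_r(\sigma)$, so that $\sigma,\sigma'$ form a \nameref{def:temporally_periodic_pair} of \nameref{def:weakly_stable} configurations (by \Cref{claim:period_2} and \Cref{claim:weakly_stable_configurations_have_no_strongly_stable_cells}). Let $\vec{v},\vec{v'}$ be their \nameref{def:aligned} \nameref{def:block_length_vector}s, of common length $k$ and common horizon $\delta \le r$ (by \Cref{claim:temporally_periodic_vectors_have_same_length}, \Cref{claim:constant_horizon_for_temporally_periodic_paris}, and \Cref{claim:horizon_is_at_most_r}).

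The main computation will be to apply \Cref{claim:sum_of_diff_vectors} with $m=\delta$ (rather than $m=2\delta$ as in \Cref{claim:sum_of_diff_vectors_is_constant}) to obtain
\begin{align*}
\vec{v}_{i+2\delta(\delta+1)} - \vec{v}_i \;=\; \sum_{j=0}^{\delta-1} \Delta_{i+2j(\delta+1)} + \sum_{j=0}^{\delta-1} \Delta'_{i+(2j+1)(\delta+1)}.
\end{align*}
I will then use that $\Delta$ is \nameref{def:spatially_periodic} with a \nameref{def:spatial_period} that divides $2\delta$ (\Cref{claim:steps_diff_vectors_are_spatially_periodic}), which implies $2j(\delta+1)\equiv 2j\pmod{2\delta}$, together with the identity $\Delta'_\ell = \Delta_{\ell-\delta}$ from \Cref{eq:expressing_Delta'_in_terms_of_Delta}, to collapse the right-hand side to $\sum_{j=0}^{2\delta-1}\Delta_{i+j}$, a sum over an integer number of full spatial periods of $\Delta$. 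This sum is independent of $i$, so $\vec{v}_{i+2\delta(\delta+1)} - \vec{v}_i$ is constant in $i$; summing the equality cyclically over all $i\in\cycnums{k}$ forces this constant to be $0$, yielding $\vec{v}_{i+2\delta(\delta+1)}=\vec{v}_i$ for every $i$.

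From here, since $2\delta(\delta+1)$ is even, shifting $\sigma$ by the total cell-length of any $2\delta(\delta+1)$ consecutive \ref*{def:block}s preserves both block lengths and the $0/1$-alternation of block values, so $\sigma$ is \nameref{def:spatially_periodic} with spatial period at most equal to that total cell-length. Partitioning the $2\delta(\delta+1)$ \ref*{def:block}s into $\delta+1$ consecutive windows of $2\delta$ \ref*{def:block}s each and applying \Cref{claim:length_any_sequence_of_2delta_blocks_at_most_2r} to each window bounds this by $(\delta+1)\cdot 2r \le 2r(r+1)$, as claimed. The main subtlety I anticipate is recognizing that one must run the ``sum-is-constant'' calculation with $m=\delta$ rather than with $m=2\delta$: using \Cref{claim:sum_of_diff_vectors_is_constant} verbatim would only yield the looser period $4\delta(\delta+1)$ and a final bound of $4r(r+1)$.
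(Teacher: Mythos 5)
Your proposal is correct and follows essentially the same route as the paper: \Cref{claim:sum_of_diff_vectors} together with the spatial periodicity of the difference vectors (\Cref{claim:steps_diff_vectors_are_spatially_periodic}) forces the block-length vector to have period dividing $2\delta(\delta+1)$, and \Cref{claim:length_any_sequence_of_2delta_blocks_at_most_2r} with \Cref{claim:horizon_is_at_most_r} converts this into the cell-level bound $2r(r+1)$, exactly as in the paper's proof. Your two deviations are minor but sound: running the computation at $m=\delta$ (instead of quoting \Cref{claim:sum_of_diff_vectors_is_constant} at $m=2\delta$) cleans up the paper's own index mismatch, which as literally written with $m=2\delta$ would only give block-period $4\delta(\delta+1)$, and killing the constant by summing cyclically over $i$ is a valid substitute for the paper's boundedness argument.
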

\begin{proof}
	Let $ \sigma, \sigma' $ be a \nameref{def:temporally_periodic_pair} of \nameref{def:weakly_stable} configurations, and let $ \vec{v} $ and $ \vec{v'} $ be the corresponding \nameref{def:aligned} \nameref{def:block_length_vector}s.
	Let $ k $ be the length of $ \vec{v} $ and $ \vec{v'} $ and let $ \delta $ be their \nameref{def:horizon}.
	
	Let $ \Delta $ and $ \Delta' $ be the two $ (\delta+1) $-steps difference vectors of the pair $ \vec{v}, \vec{v'} $.
	By \Cref{claim:sum_of_diff_vectors}, for every integer $ m \ge 0 $,
	
	\begin{align}\label{eq:m_steps_from_v_i}
		\vec{v}_i + \sum_{j=0}^{m-1} \Delta_{i+2j(\delta+1)} + \sum_{j=0}^{m-1} \Delta'_{i+(2j+1)(\delta+1)} = \vec{v}_{i+2m(\delta+1)} .
	\end{align}
	If we assign $ m = 2\delta $, by \Cref{claim:sum_of_diff_vectors_is_constant}, the expression consisting of the two sums in Equation~(\ref{eq:m_steps_from_v_i}) does not depend on $ i $, so we denote that expression by $ \varsigma $.
	That is, for every $ i \in \cycnums{k} $,
	
	\begin{align}
		\vec{v}_i + \varsigma = \vec{v}_{i+2\delta(\delta+1)} ,
	\end{align}
	
	Since $ \vec{v} $ is a \nameref{def:block_length_vector}, the values in its entries are bounded, and so it must be the case that $ \varsigma = 0 $.
	Thus, for every $ i \in \cycnums{k} $,
	
	\begin{align*}
		\vec{v}_i &= \vec{v}_{i+2\delta(\delta+1)} .
	\end{align*}
	That is, the \nameref{def:block_length_vector} $ \vec{v} $ is \nameref{def:spatially_periodic} with \nameref{def:spatial_period} at most $ 2\delta(\delta+1) $.
	By \Cref{claim:length_any_sequence_of_2delta_blocks_at_most_2r}, the total length of every sequence of $ 2\delta $ \ref*{def:block}s in $ \sigma $ is at most $ 2r $, and, by \Cref{claim:horizon_is_at_most_r}, $ \delta \le r $, implying that the configuration $ \sigma $ is \nameref{def:spatially_periodic} with \nameref{def:spatial_period} at most $ 2r(r+1) $.
\end{proof}

\section{Putting it all together: Proving \Cref*{thm:characterization}}\label{sec:putting_it_all_together}

\begin{proof}[Proof of \Cref{thm:characterization}]
	Let $ \sigma : \cycnums{n} \to \bitset $ be any configuration, let $ \sigma' = \maj_r(\sigma) $ and let $ \sigma'' = \maj_r(\sigma') $.
	
	Suppose first that the configuration $ \sigma $ is \nameref{def:temporally_periodic}.
	If all the \ref{def:cell}s in $ \sigma $ are \nameref{def:strongly_stable}, then $ \sigma $ is by definition of the form $ (0^{r+1}0^* + 1^{r+1}1^*)^* $, and by \Cref{claim:strongly_stable_configurations_are_fixed_points}, it is a \nameref{def:fixed_point}.
	Otherwise, still assuming that $ \sigma $ is \nameref{def:temporally_periodic}, it must be the case that $ \sigma $ is \nameref{def:weakly_stable}, which means that, by \Cref{claim:weakly_stable_configurations_have_no_strongly_stable_cells}, all the \ref{def:cell}s in $ \sigma $ are \nameref{def:weakly_stable}, and by \Cref{claim:spatially_periodic}, the configuration $ \sigma $ is \nameref{def:spatially_periodic} with \nameref{def:spatial_period} at most $ 2r(r+1) $.
	
	Suppose now that the configuration $ \sigma $ is \nameref{def:transient}.
	By \Cref{claim:temporally_periodic_blocks_are_short}, the length of every \ref*{def:block} in $ B(\sigma) $ is at most $ r $.
	We say that a \nameref{def:cell_interval} $ [i,j] \subseteq \cycnums{n} $ is \nameref{def:unstable} with respect to $ \sigma $ if for every $ \ell \in [i,j] $, $ \sigma(\ell) \ne \sigma''(\ell) $.
	We claim that every \nameref{def:unstable} \nameref{def:cell_interval} with respect to $ \sigma $ contains at most one \nameref{def:switch_point}.
	
	Suppose by way of contradiction that there exists an \nameref{def:unstable} \nameref{def:cell_interval} with respect to $ \sigma $ that contains at least two \nameref{def:switch_point}s.
	Let $ (i,i+1) $ and $ (j-1,j) $ be the first two \nameref{def:switch_point}s in the \nameref{def:unstable} \nameref{def:cell_interval} (without loss of generality, $ (i,i+1) $ is to the left of $ (j-1,j) $ and there is no other \nameref{def:switch_point} between them).
	Hence, for some $ \beta \in \set{0,1} $, it holds that $ \sigma(i) = \sigma(j) = \beta $, and for every $ \ell \in (i,j) $, $ \sigma(\ell) = \bar{\beta} $.
	
	Since $ [i,j] $ is also an \nameref{def:unstable} \nameref{def:cell_interval}, it must be the case that $ \sigma''(i) = \sigma''(i+1) = \bar{\beta} $ and for every $ \ell \in (i,j) $, $ \sigma''(\ell) = \beta $.
	Hence, both $ (i,i+1) $ and $ (j-1,j) $ are \nameref{def:switch_point}s in $ \sigma'' $.
	By \nameref{claim:switch_point_arg}, $ \sigma'(i+1+r)=\beta $ and $ \sigma'(j+r)=\bar{\beta} $.
	
	Let $ \ell' $ be the rightmost \ref{def:cell} in the open interval $ (i+r,j+r) $ where $ \sigma'(\ell') = \beta $.
	Since the pair $ (\ell',\ell'+1) $ constitutes a \nameref{def:switch_point} in $ \sigma' $, by \nameref{claim:switch_point_arg}, it must hold that $ \sigma(\ell'-r) = \beta $.
	However, since $ \ell'-r \in (i,j) $ (because $ \ell' \in (i+1,j+r) $), we reach a contradiction to the assumption that for every $ \ell \in (i,j) $, it must hold that $ \sigma(\ell) = \bar{\beta} $.
	
	Thus, every \nameref{def:unstable} \nameref{def:cell_interval} with respect to $ \sigma $ contains at most one \nameref{def:switch_point}, which implies that the maximum possible length of an \nameref{def:unstable} \nameref{def:cell_interval} is $ 2r $.
\end{proof}

\newpage

\bibliographystyle{alpha}
\bibliography{refs}

\end{document}